\newif\ifprintVersion   
\newif\iffancyTheorems  
\newif\ifboldNumberSets 
\newif\ifbachelorThesis 
\newcommand*{\printTitle}{}
\newcommand*{\printAuthor}{}
\newcommand*{\printSubject}{}
\newcommand*{\printKeywords}{}
\newlength{\extraborderlength}
\newcommand*{\extraBorder}[1]{\setlength{\extraborderlength}{#1}}
\newlength{\mybindingcorrection}
\newcommand*{\bindingCorrection}[1]{\setlength{\mybindingcorrection}{#1}}
\newlength{\myparindent}
\newlength{\myparskip}
\definecolor{stroke1}{HTML}{2574A9} 
\colorlet{captionlabel}{black}
\colorlet{footerpagenr}{black}
\colorlet{footerchapter}{stroke1}
\colorlet{footerchaptername}{black}
\colorlet{footersection}{stroke1}
\colorlet{footersectionname}{black}
\colorlet{chapternumber}{stroke1}
\newcommand*{\magicratio}{1 * \ratio{193 mm}{210 mm}}
\newlength{\mypaperwidth}
\newlength{\mypaperheight}
\newlength{\mybodywidth}
\newlength{\mybodyheight}
\newlength{\myoutermargin}
\newlength{\mytopmargin}
\newlength{\myinnermargin}
\newlength{\mybottommargin}
\newcommand{\goldenratio}{1.618}
\newlength{\myheadsep} 
\newlength{\myfootskip} 
\newlength{\mymargininnersep} 
\newlength{\mymarginoutersep} 
\newlength{\mymarginwidth} 
\newlength{\mymarginwidthwithinnersep} 
\normalfont\textsf{\textbf{\color{footerchapter}\chaptername\ \thechapter}}
    \Ifstr{\rightmark}{\leftmark}%
    {%
        \begin{minipage}[b]{\mymarginwidth}%
            \small\raggedright\normalfont\textsf{\textbf{\color{footersection}Chapter\ \thechapter}}%
        \end{minipage}%
    }%
    {%
        \begin{minipage}[b]{\mymarginwidth}%
            \small\raggedright\normalfont\textsf{\textbf{\color{footersection}Section\ \thesection}}%
        \end{minipage}%
    }%
\normalfont\color{footerpagenr}\textbf{\thepage}%
\normalfont\color{footerpagenr}\textbf{\thepage}%
\newlength{\tmpa}
\newlength{\tmpb}
\renewcommand*{\partlineswithprefixformat}[3]%
{%
    #2
    \thispagestyle{empty}
    \ifprintVersion
        \setlength{\tmpa}{0.618\mypaperwidth + \mybindingcorrection + \extraborderlength}%
        \setlength{\tmpb}{0.382\mypaperheight + \extraborderlength}%
    \else
        \setlength{\tmpa}{0.618\mypaperwidth}%
        \setlength{\tmpb}{0.382\mypaperheight}%
    \fi
    \begin{tikzpicture}[overlay, remember picture]%
        \node [inner sep = 0, outer sep = 0, anchor = north] at (current page.north west)%
        {%
            \begin{tikzpicture}[overlay, remember picture]%
            \draw[color = stroke1, line width = 0.7 mm] (\tmpa, 0) -- (\tmpa, -\tmpb);%
            \end{tikzpicture}%
        };%
        \node (align) [align = right, below = \tmpb - 2 ex, inner sep = 0, outer sep = 0, anchor = north west] at (current page.north west)%
        {%
            \hspace{\tmpa}\hspace{0.5 em}\partname\ \thepart\\[1 ex]
            \color{stroke1}#3%
        };%
    \end{tikzpicture}%
}
\renewcommand*{\chapterlinesformat}[3]%
{%
    \Ifnumbered{#1}{\setbool{chapterHasANumber}{true}}{\setbool{chapterHasANumber}{false}}%
    \Ifstr{#2}{}{\setbool{chapterHasAStar}{true}}{\setbool{chapterHasAStar}{false}}%
    \ifboolexpr{bool{chapterHasANumber} and not bool{chapterHasAStar}}%
    {%
        \begin{tikzpicture}[overlay, remember picture]%
            \node [right = \myinnermargin, below = \mytopmargin, inner sep = 0, outer sep = 0, anchor = north west] (numbernode) at (current page.north west)%
            {%
                \hspace{\myinnermargin}%
                \sffamily\fontsize{60}{60}\selectfont%
                \color{chapternumber}%
                \thechapter%
            };%
            \node [inner sep = 0, outer sep = 0, anchor = north west] at (numbernode.south west)%
            {%
                \begin{tikzpicture}[overlay, remember picture]%
                    \draw[color = stroke1, line width = 0.7 mm] (\myinnermargin, -1 ex) -- (\paperwidth, -1 ex);%
                \end{tikzpicture}%
            };%
            \node (align) [text width = \textwidth - 2 cm, align = right, right = \myinnermargin + \mybodywidth, inner sep = 0, outer sep = 0, anchor = east] at (numbernode.west)%
            {%
                #3%
            };%
        \end{tikzpicture}%
    }%
    {%
        \begin{tikzpicture}[overlay, remember picture]%
            \node [right = \myinnermargin, below = \mytopmargin, inner sep = 0, outer sep = 0, anchor = north west] (numbernode) at (current page.north west)%
            {%
                \hspace{\myinnermargin}%
                \sffamily\fontsize{60}{60}\selectfont%
                \color{white}%
                \thechapter%
            };%
            \node [inner sep = 0, outer sep = 0, anchor = north west] at (numbernode.south west)%
            {%
                \begin{tikzpicture}[overlay, remember picture]%
                    \draw[color = stroke1, line width = 0.7 mm] (\myinnermargin, -1 ex) -- (\paperwidth, -1 ex);%
                \end{tikzpicture}%
            };%
            \node (align) [align = left, right = \myinnermargin, inner sep = 0, outer sep = 0, anchor = south west] at (numbernode.south west)%
            {%
                #3%
            };%
        \end{tikzpicture}%
    }%
}
    \def\IfEmptyTF#1%
\relax\detokenize{#1}\relax%
\NewDocumentCommand{\mathOrText}{m}
{%
    \ensuremath{#1}\xspace%
}
\let\originalleft\left
\let\originalright\right
\renewcommand{\left}{\mathopen{}\mathclose\bgroup\originalleft}
\renewcommand{\right}{\aftergroup\egroup\originalright}
    \DeclareRobustCommand{\bfseries}%
    {%
        \not@math@alphabet\bfseries\mathbf%
        \fontseries\bfdefault\selectfont%
        \boldmath%
    }
\crefname{ineq}{inequality}{inequalities}
\crefname{term}{term}{terms}
\let\oldfootnote\footnote
\newlength{\spaceBeforeFootnote} 
\newlength{\spaceAfterFootnote}  
\RenewDocumentCommand{\footnote}{o o o m}%
{%
    \IfNoValueTF{#1}%
    {%
        \oldfootnote{#4}%
    }%
    {%
        \setlength{\spaceBeforeFootnote}{\IfEmptyTF{#1}{0}{#1} em}%
        \IfNoValueTF{#2}%
        {%
            \hspace*{\spaceBeforeFootnote}\oldfootnote{#4}%
        }%
        {%
            \setlength{\spaceAfterFootnote}{\IfEmptyTF{#2}{0}{#2} em}%
            \hspace*{\spaceBeforeFootnote}\IfNoValueTF{#3}{\oldfootnote{#4}}{\oldfootnote[#3]{#4}}\hspace*{\spaceAfterFootnote}%
        }%
    }%
}
    \declaretheoremstyle
    [
        spaceabove = \topsep,
        spacebelow = \topsep,
        headfont = \bfseries,
        headformat = \textcolor{stroke1}{$\blacktriangleright$} \NAME~\NUMBER \NOTE,
        notefont = \bfseries,
        notebraces = {(}{)},
        bodyfont = \normalfont,
        postheadspace = 0.5 em,
        qed = \textcolor{stroke1}{\bfseries$\blacktriangleleft$},
    ]
    {myTheoremStyle}
    \declaretheorem
    [
        style = myTheoremStyle,
        name = Conjecture,
        numberwithin = chapter,
    ]
    {conjecture}
    \declaretheorem
    [
        style = myTheoremStyle,
        name = Proposition,
        sharenumber = conjecture,
    ]
    {proposition}
    \declaretheorem
    [
        style = myTheoremStyle,
        name = Claim,
        sharenumber = conjecture,
    ]
    {claim}
    \declaretheorem
    [
        style = myTheoremStyle,
        name = Lemma,
        sharenumber = conjecture,
    ]
    {lemma}
    \declaretheorem
    [
        style = myTheoremStyle,
        name = Corollary,
        sharenumber = conjecture,
    ]
    {corollary}
    \declaretheorem
    [
        style = myTheoremStyle,
        name = Theorem,
        sharenumber = conjecture,
    ]
    {theorem}
    \declaretheorem
    [
        style = myTheoremStyle,
        name = Definition,
        sharenumber = conjecture,
    ]
    {definition}
    \declaretheorem
    [
        style = myTheoremStyle,
        name = Example,
        sharenumber = conjecture,
    ]
    {example}
    \declaretheorem
    [
        style = myTheoremStyle,
        name = Remark,
        sharenumber = conjecture,
    ]
    {remark}
    \theoremstyle{plain}
\NewDocumentCommand{\functionTemplate}{m m m m o}%
{%
    \IfNoValueTF{#5}%
    {%
        \mathOrText{#1\left#2{#4}\right#3}%
    }%
    {%
        \mathOrText{#1#5#2{#4}#5#3}%
    }%
}
\newcommand*{\leftBracketType}{(}
\newcommand*{\rightBracketType}{)}
\NewDocumentCommand{\createFunction}{m m o o}%
{%
    \renewcommand*{\leftBracketType}{\IfNoValueTF{#3}{(}{#3}}%
    \renewcommand*{\rightBracketType}{\IfNoValueTF{#4}{)}{#4}}%
    \NewDocumentCommand{#1}{o o}%
    {%
        \IfNoValueTF{##1}%
        {%
            \mathOrText{#2}%
        }%
        {%
            \functionTemplate{#2}{\leftBracketType}{\rightBracketType}{##1}[##2]%
        }%
    }%
}
\DeclareDocumentCommand{\probabilisticFunctionTemplate}{m m O{} o}
{%
    \functionTemplate{#1}%
    {\lbrack}%
    {\rbrack}%
    {#2\IfEmptyTF{#3}{}{\ \IfNoValueTF{#4}{\left}{#4}\vert\ \vphantom{#2}#3\IfNoValueTF{#4}{\right.}{}}}%
    [#4]%
}
    \newcommand*{\indicatorFunctionSymbol}{\mathbf{1}}
    \newcommand*{\indicatorFunctionSymbol}{\mathds{1}}
\RenewDocumentCommand{\Pr}{m O{} o}%
{%
    \probabilisticFunctionTemplate{\mathrm{Pr}}{#1}[#2][#3]%
}
\NewDocumentCommand{\E}{m O{} o}%
{%
    \probabilisticFunctionTemplate{\mathrm{E}}{#1}[#2][#3]%
}
\NewDocumentCommand{\Var}{m O{} o}%
{%
    \probabilisticFunctionTemplate{\mathrm{Var}}{#1}[#2][#3]%
}
\DeclareDocumentCommand{\bigO}{m o}%
{%
    \functionTemplate{\mathrm{O}}{(}{)}{#1}[#2]%
}
\DeclareDocumentCommand{\smallO}{m o}%
{%
    \functionTemplate{\mathrm{o}}{(}{)}{#1}[#2]%
}
\DeclareDocumentCommand{\bigTheta}{m o}%
{%
    \functionTemplate{\upTheta}{(}{)}{#1}[#2]%
}
\DeclareDocumentCommand{\bigOmega}{m o}%
{%
    \functionTemplate{\upOmega}{(}{)}{#1}[#2]%
}
\DeclareDocumentCommand{\smallOmega}{m o}%
{%
    \functionTemplate{\upomega}{(}{)}{#1}[#2]%
}
\DeclareDocumentCommand{\eulerE}{o}%
{%
    \mathOrText{\mathrm{e}\IfNoValueTF{#1}{}{^{#1}}}%
}
\DeclareDocumentCommand{\poly}{m o}%
{%
    \functionTemplate{\mathrm{poly}}{(}{)}{#1}[#2]%
}
\createFunction{\id}{\mathrm{id}}
\NewDocumentCommand{\ind}{m o o}%
{%
    \IfNoValueTF{#2}%
    {%
        \mathOrText{\indicatorFunctionSymbol_{#1}}%
    }%
    {%
        \functionTemplate{\indicatorFunctionSymbol_{#1}}{(}{)}{#2}[#3]%
    }%
}
\DeclareDocumentCommand{\dom}{m o}%
{%
    \functionTemplate{\mathrm{dom}}{(}{)}{#1}[#2]%
}
\DeclareDocumentCommand{\rng}{m o}%
{%
    \functionTemplate{\mathrm{rng}}{(}{)}{#1}[#2]%
}
\DeclareDocumentCommand{\d}{o}%
{%
    \mathrm{d}\IfNoValueTF{#1}{}{^{#1}}%
}
\DeclareDocumentCommand{\set}{m m o}%
{%
    \mathOrText{\IfNoValueTF{#3}{\left}{#3}\{#1\ \IfNoValueTF{#3}{\left}{#3}\vert\ \vphantom{#1}#2\IfNoValueTF{#3}{\right.}{}\IfNoValueTF{#3}{\right}{#3}\}}%
}      
\newtheorem{prop}{PROPOSITION}[section]
\newtheorem{defi}{DEFINITION}[section]
\newtheorem{thm}{THEOREM}[section]
\DeclareMathOperator*{\argmin}{arg\,min}
\newcommand{\ip}[2]{\langle #1 \vert #2 \rangle}
\newcommand{\kb}[2]{\vert #1 \rangle \langle #2 \vert}
\newcommand{\kt}[1]{\vert #1 \rangle}
\definecolor{codegreen}{rgb}{0,0.6,0}
\definecolor{codegray}{rgb}{0.5,0.5,0.5}
\definecolor{codepurple}{rgb}{0.58,0,0.82}
\definecolor{backcolour}{rgb}{0.95,0.95,0.92}
\lstdefinestyle{mystyle}{
    backgroundcolor=\color{backcolour},   
    commentstyle=\color{red},
    keywordstyle=\color{codegreen},
    numberstyle=\tiny\color{codegray},
    stringstyle=\color{codepurple},
    basicstyle=\ttfamily\footnotesize,
    breakatwhitespace=false,         
    breaklines=true,                 
    captionpos=b,                    
    keepspaces=true,                 
    numbers=left,                    
    numbersep=5pt,                  
    showspaces=false,                
    showstringspaces=false,
    showtabs=false,                  
    tabsize=2
}
\begin{document}
    
    \frontmatter

\begin{titlepage}
\topmargin -30mm \textheight 240mm
\oddsidemargin 8mm \textwidth 134mm


\thispagestyle{empty}

\begin{figure}
	\includegraphics[scale=.3]{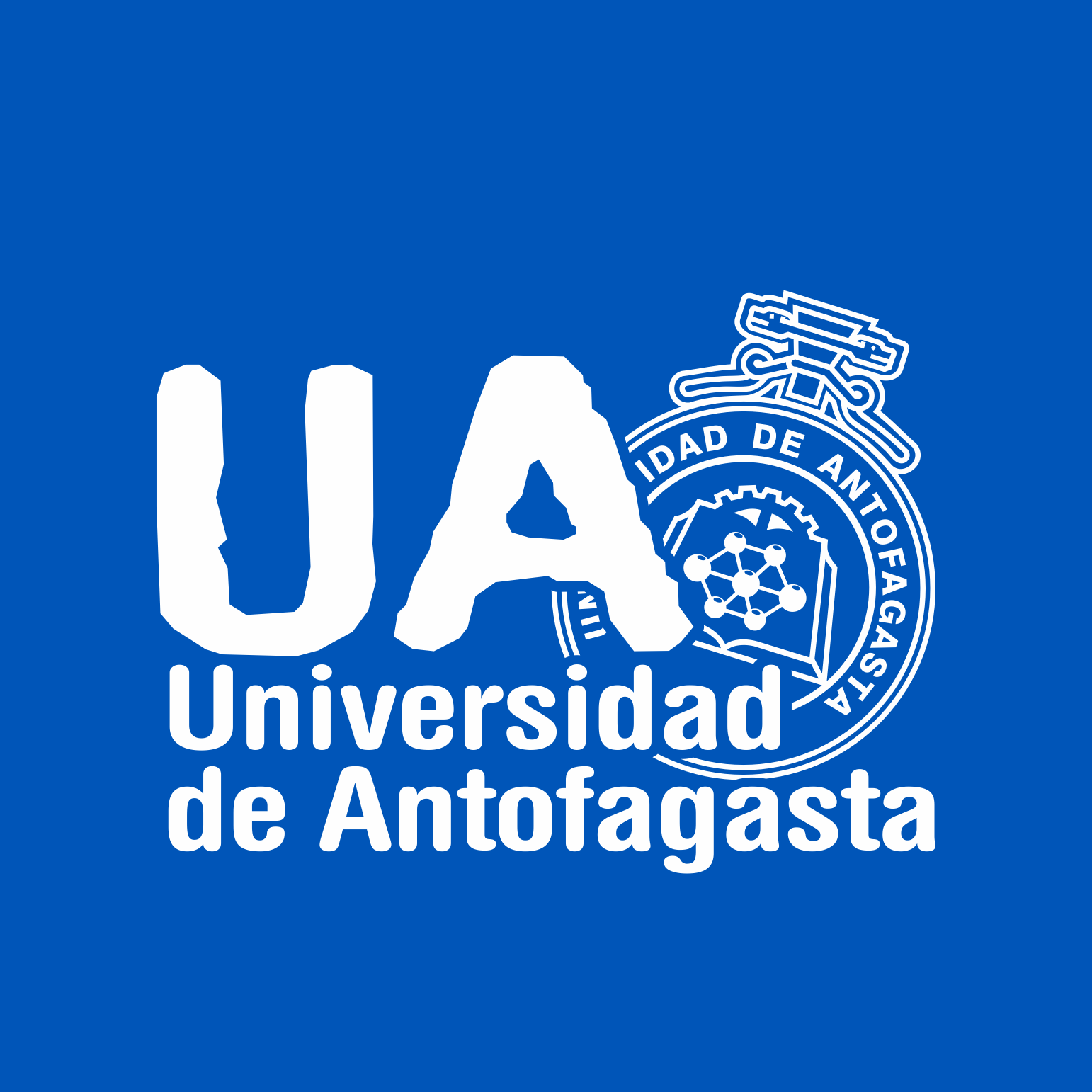}
\end{figure}

\vspace*{-34mm}
\begin{flushright}
	\includegraphics[scale=.3]{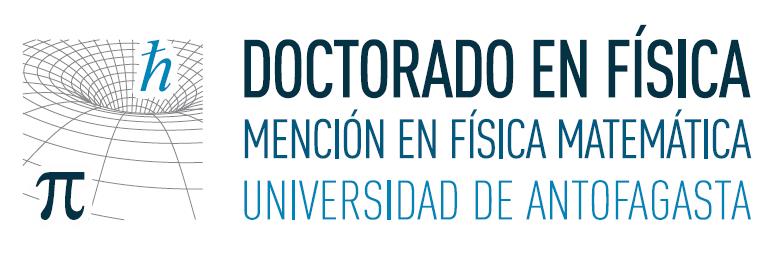}
\end{flushright}

\vspace*{3mm}
\begin{center}
	\begin{large}
		{\Large Universidad de Antofagasta}\\
		Facultad de Ciencias B\'asicas
		
		\vspace*{10mm}
		Doctorado en F\'isica Menci\'on F\'isica -- Matem\'atica
		
		\vspace*{30mm}
		\textbf{ \LARGE Design of optimization tools for quantum information theory }
		
		\vspace*{10mm}
		\textbf{Daniel Uzcátegui Contreras }
		
		\vspace*{10mm}
	\end{large}	
\end{center}

\vspace*{15mm}
\begin{large}
	Profesor Tutor: 
	\hspace{15mm} 
	Dardo Miguel Goyeneche

\vspace*{30mm}
\begin{center}
	Antofagasta, Chile, enero 2022
\end{center}
\end{large}
\newpage
\thispagestyle{empty}

\begin{figure}
	\includegraphics[scale=.3]{core/title_page/logo_ua_azul.png}
 \end{figure}

\vspace*{-32mm}
\begin{flushright}
	\includegraphics[scale=.3]{ core/title_page/logodoctorado.jpg}
\end{flushright}

\vspace*{3mm}
\begin{center}
 \begin{large}
 	{\Large Universidad de Antofagasta}\\
 	Facultad de Ciencias B\'asicas
 	
 	\vspace*{10mm}
 	Doctorado en F\'isica Menci\'on F\'isica -- Matem\'atica
 	
 	\vspace*{20mm}
 	\textbf{\LARGE Design of optimization tools for quantum information theory  }
 	
 	\vspace*{10mm}
 	\textbf{Daniel Uzcátegui Contreras }
 	
 	\vspace*{15mm}
 	
 	Tesis presentada ante la ilustre Universidad de Antofagasta para\\ optar al grado acad\'emico de Doctor en F\'isica\\ Menci\'on F\'isica -- Matem\'atica
 	
 \end{large}
\end{center}

\begin{large}
\vspace*{10mm}
\begin{tabbing}
\= Profesor Tutor: 
\hspace{10mm} 
                      \= Dardo Miguel Goyeneche \\[10mm]
\> Comit\a'e de Evaluaci\a'on: \> \\
\> Álvaro Restuccia \> \\   
\> Freddy Lastra \> \\   
\> Algo Delgado \> \\   
\> Leonardo Neves \> \\   
\> Gustavo Cañas \> \\   
	
\end{tabbing}

\vspace*{10mm}
\begin{center}
	Antofagasta, Chile, enero 2022
\end{center}
\end{large}

\end{titlepage}

\restoregeometry
    
    \pagestyle{plain}
    
    \addchap*{}
    \begin{flushright}
\textit{Dedicated to my parents: María and Mario.}
\end{flushright}

    
    \addchap{Acknowledgements}

Here, I would like to thank many people who has helped me to make this thesis possible.\\

First and foremost, I would like to thank my advisor, Prof. Dardo Goyeneche, for all his guidance, encouragement and continuous support on both academic and personal level. I greatly appreciate his patience to help me go through this journey.\\

I would also like to thank all my professors in the program, for their support and for providing me with many tools for my further development. \\

I would like to express my appreciation to my fellow students, for all their help and support, and for all the special moments we shared. \\

My gratitude to the Universidad de Antofagasta, to the Departamento de Física of the Universidad de Antofagasta and to all the people who works to make our PhD program possible.

    
    \addchap{List of publications and ongoing projects}
    This thesis is based on the following publications and projects:

\begin{itemize}
    \item[{[A]}] S. Gómez, D. Uzcátegui, I. Machuca, E. S. Gómez, S. P. Walborn, G. Lima, and D. Goyeneche., \textit{Optimal strategy to certify quantum nonlocality,} \href{https://doi.org/10.1038/s41598-021-99844-2}{Sci Rep 11, 20489 (2021).}
    
    \item[{[B]}] D. Uzcátegui, G. Senno and D. Goyeneche., \textit{Fast and simple quantum state estimation}, 2021 \href{http://dx.doi.org/10.1088/1751-8121/abdba2}{J. Phys. A: Math. Theor. \textbf{54} 085302.}
    
    \item[{[C]}] D. Uzcátegui, G. Senno and D. Goyeneche., \textit{An algorithm for the Quantum Marginal Problem}. In preparation.
\end{itemize}

\vspace{4mm}
\noindent Other related publications during my PhD studies: 

\begin{itemize}
    \item[{[D]}] Contreras, Daniel Uzcátegui; Goyeneche, Dardo; Turek, Ondřej and Václavíková, Zuzana., \textit{Circulant matrices with orthogonal rows and off-diagonal entries of absolute value 1,} \href{https://doi.org/10.2478/cm-2021-0005}{Communications in Mathematics, vol.29, no.1, 2021, pp.15-34.}

\end{itemize}

    \addchap{Abstract}

In this thesis, we present optimization tools for different problems in quantum information theory. First, we introduce an algorithm for \textit{quantum estate estimation}. The algorithm consists of orthogonal projections on intersecting hyperplanes, which are determined by the probability distributions and the measurement operators. We show its performance, in both runtime and fidelity, considering realistic errors. Second, we present a technique for certifying \textit{quantum non-locality}. Given a set of bipartite measurement frequencies, this technique finds a Bell inequality that maximizes the gap between the local hidden variable and the quantum value of a Bell inequality. Lastly, to study the quantum marginal problem, we introduce an operator and develop an algorithm, which takes as inputs a set of quantum marginals and eigenvalues, and outputs a density matrix, if exists, compatible with the prescribed data. 

 
    \setuptoc{toc*}{totoc}
    \tableofcontents
    \listoffigures
    \pagestyle{headings}
    \mainmatter
    
    \addchap{Introduction}
    \pagestyle{plain}

The development of Quantum Mechanics in the early decades of the 20th century brought significant progress in the understanding of the microscopic world. One of the elements of the theory is the \textit{state};  incorporated by Erwin Schrödinger in his equation, the state is a crucial piece for describing quantum systems. Thus, the importance of knowing the quantum state became evident since the beginning of quantum mechanics. In view of this, Wolfang Pauli posed the question of whether the results of measuring momenta and coordinates of a quantum system would be enough to determine its state \cite{Pauli1933}; this is known as the \textit{Pauli's problem}. Since then, many \textit{quantum state estimation} techniques have been developed. Determining the state is ubiquitous for the experimental verification of fundamental aspects of nature predicted by quantum theory and, with the advent of quantum technologies and its promising applications, it is also important for characterization of quantum devices. The increasing number of quantum systems implemented in these devices, e.g. quantum computers, demands for more efficient and scalable methods to determine the quantum state. \\

The implications of quantum mechanics also opened a debate about fundamental issues; most famously Albert Einstein, one of its founders, in a paper co-authored with Boris Podolsky and Nathan Rosen (EPR) \cite{EPR_1935}, argued that quantum mechanics was not a \textit{complete theory}, in the sense that it does not make predictions with certainty. According to Einstein, to complete quantum mechanics it was necessary the introduction of \textit{local deterministic hidden variables (LHV)} \cite{BELOUSEK1996437}.  Nearly thirty years later, John Bell settled many of the concerns posed in the EPR paper \cite{bellTheorem}: he provided, in the form of inequalities, an operational test to determine whether a set of correlations arise within the LHV frame. The most remarkable of Bell's findings is that some predictions from quantum theory can violate these inequalities, thus giving origing to the phenomenon of \textit{nonlocality}. Nowadays, \textit{Bell Nonlocality}\footnote{\textit{Nonlocality} is used to refer to both: the phenomenon and the area of study.} is an area whose main aim is to study and characterize Bell type inequalities. Nonlocality is considered a valuable resource in quantum information theory, with applications in quantum key distribution protocols for secure communication \cite{Acin2007}, random number certification \cite{Pironio2010}, among others. \\


Physical quantum systems formed by many bodies are challenging to understand. Particularly, the relationship between the parts and a whole, namely, to determine whether a set of subsystems, each of them described by a quantum state, is compatible with a global quantum state describing all the parts. This problem is known as  the \textit{Quantum Marginal Problem (QMP)} and has been studied since the beginning of quantum mechanics. The QMP is relevant in many fields, especially in quantum chemistry to understand the formation of molecules \cite{coleman1963}, and in condensed matter physics to study the properties of solids \cite{Yu2021, Huber_2016}. Despite the fact that a solution to the QMP in its most general form has been elusive, important contributions has been made when considering sub-classes of the problem.\\

In this thesis, we present methods to tackle some of the problems mentioned above. These methods are intended to be implemented in practical applications, e.g. in labs focused on quantum information experiments, and also to study problems in quantum information theory. In chapter 1, we introduce the basic concepts required to understand subsequent ideas. In Chapter 2, we develop an algorithm for \textit{quantum state estimation}, show its properties and performance; the main virtues of this method are simplicity and  performance, when compared with state-of-the-art techniques. In Chapter 3, we present a method for \textit{quantum non-locality certification}, based on the optimization of a function using experimental data. This optimization seeks to maximize the gap between the LHV upper bound and the quantum mechanical maximum violation  of a Bell inequality using the available data and, thus, not requiring a new experiment. Finally, in Chapter 4, we present an algorithm for the \textit{quantum marginal problem}, designed to find a multipartite quantum state describing a global system from the knowledge of some reduced parts of the system. All the codes of these tools are in the appendix and are also available in Github.
\newpage
\pagestyle{headings}

    \chapter{Preliminaries}\label{cap1}
    In this chapter we introduce some notions and concepts that are necessary for the development of the ideas in the next chapters. We do not intend to discuss those notions in much depth, but to point out directly to the mathematical properties and physical meaning and illustrate with examples to make the concepts more clear. 

\section{Quantum States}\label{subsection 1.1}
In the Quantum realm, the \emph{state} of a physical system is mathematically described by an element of finite-dimensional Hilbert space $\mathcal{H}$ \cite{NielseChuang}. In Dirac's notation, a $d$-dimensional quantum state can be expanded in an orthonormal basis $\{ \ket{i} \}$ as

\begin{equation}\label{eq1:ket}
    \kt{\psi} = \sum_{i = 0}^{d-1} c_i \ket{i}.
\end{equation}
\noindent Eq. \eqref{eq1:ket} is termed a \emph{pure state}  and the symbol $\kt{\ldots}$ is called a \textit{ket}. $\bra{\psi}$ is an element of the dual space $\mathcal{H}^*$ called \emph{bra}. For $\ket{\psi}, \ket{\phi} \in \mathcal{H}$, the operation $\ip{\psi}{\phi}$ is called the \emph{inner product} and it produces a complex number. Quantum states are normalized, which means that $\| \kt{\psi} \|_{2} = \sqrt{\ip{\psi}{\psi}} = 1$, with $\| \ldots \|_{2}$ the $l_2$-norm.

\subsubsection{Qubits}
A \emph{Qubit}, the fundamental unit of quantum information theory, is a two-level quantum system. Electronic or nuclear spins of an atom and the polarization of a photon are examples of qubits found in nature. For a qubit, the state \eqref{eq1:ket} is $\ket{\psi} = c_0 \ket{0}  + c_1\ket{1}$, with $c_0, c_1 \in \mathbb{C}$. A parametrization of this state is
\begin{equation}\label{qubit}
    \ket{\psi} = \cos(\theta/2) \ket{0}  + e^{i\phi}\sin(\theta/2) \ket{1},
\end{equation}
\noindent with $0 \leq \theta \leq \pi$ and $0 \leq \phi \leq 2\pi$. Geometrically, the pair $(\theta, \phi)$ corresponds to a point on a unit sphere, as shown in Fig. \ref{fig:BlochSphere}. Note that any point on the sphere is a valid pure quantum state.
\begin{figure}
 \centering
 \includegraphics[scale=0.5]{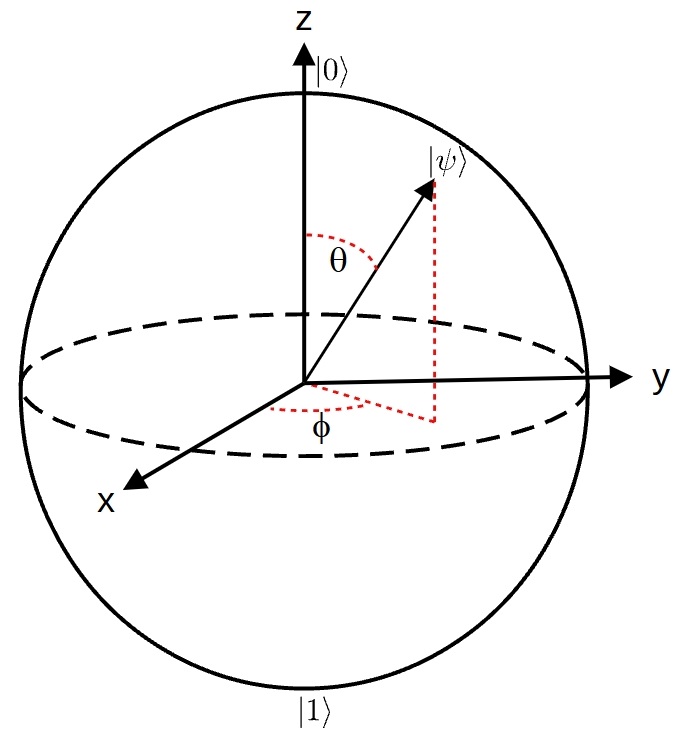}
 \caption[Bloch Sphere]{Bloch Sphere representation for Qubits. The state $| 0 \rangle$ is at the north Pole ($\theta  = 0$) and $| 1 \rangle$ is at the south pole ($\theta = \pi$).    }
 \label{fig:BlochSphere}
\end{figure}

\subsubsection{The Density Operator}
A quantum state can also be expressed as a \emph{Density Operator} or \emph{Density Matrix}.  For the state in Eq. \eqref{eq1:ket}, its density matrix representation $\rho$ is computed as $\rho = \kb{\psi}{\psi} $. Density matrices can result from a statistical mixture of pure states $\kt{\psi_i}$ 
\begin{equation}
 \label{eq3:mixed_states}
 \rho = \sum_{i} p_i \kb{\psi_i}{\psi_i},
\end{equation}
\noindent where $\sum_i p_i = 1$ and $p_i \geq 0$. There is no restriction on the number of terms in the right hand side of Eq. \eqref{eq3:mixed_states}. States in the form of Eq. \eqref{eq3:mixed_states} are called \emph{Mixed States}. A density matrix satisfying $\rho^2 = \rho$ is pure, thus pure states can be considered a special case of mixed states. The purity of a density matrix is quantified by $\mathrm{Tr}(\rho^2)$. Density matrices are \emph{bounded operators}\footnote{Bounded means that, for a norm $\|\ldots\|$, density operators satisfy:\\ $$\|\rho \|:= \text{sup}\left\{\| \rho  \kt{\psi}\| \;\; \big| \;\; \kt{\psi} \in \mathcal{H} \text{  and  } \|\; \kt{\psi}\; \|=1 \right\} < \infty$$} acting on a Hilbert space and have the following properties:

\begin{enumerate}
\item[i)] They are linear
\begin{equation}
    \rho\left( \sum_i c_i \kt{\psi_i} \right) = \sum_i c_i \left(\rho\kt{\psi_i}\right), \quad \text{ with }\ket{\psi_i} \in \mathcal{H} \text{ and } c_i \in \mathbb{C},
\end{equation}
\item[ii)] hermitian $\rho = \rho^{\dagger}$,
\item[iii)] positive-semidefinite, denoted $\rho \geq 0$. This is, 
\begin{equation}
    \bra{\varphi} \rho \ket{\varphi} \geq 0 \quad \text{for all} \quad \ket{\varphi} \in \mathcal{H},
\end{equation}
\item[iv)] and normalized $\mathrm{Tr}\left( \rho \right) = 1.$
\end{enumerate}
\noindent The set of \emph{Density Operators} acting on a Hilbert space $\mathcal{H}$ is denoted $B(\mathcal{H})$. For $\rho_{\alpha} \in B(\mathcal{H})$, with $\alpha$ the state's label, and probabilities $p_{\alpha}$ satisfying $\sum_{\alpha}p_{\alpha} = 1$, the state
\begin{equation}\label{convexCombination}
\rho = \sum_{\alpha} p_{\alpha} \rho_{\alpha},
\end{equation}
\noindent is also an element of $B(\mathcal{H})$. This is, $B(\mathcal{H})$ is a \emph{Convex set} and Eq. \eqref{convexCombination} is a convex combination of the states $\rho_{\alpha}$. The border of this convex set is determined by the set of pure states.

\subsubsection{The Bloch vector}
The density matrix of a $d$-dimensional quantum system may be written as \cite{Hioe_1981}
\begin{equation}\label{blochVecotr_d}
\rho = \dfrac{\hat{I}}{d} + \dfrac{1}{2}\sum_{i = 1}^{d^2-1}x_i\sigma_i,
\end{equation}
\noindent with $\{ \sigma_i \}$ the set of $d^2-1$ generators of the special unitary group ($SU(d)$) and $\hat{I}$ the identity operator of dimension $d$. The generators $\sigma_i$ satisfy the relations $\mathrm{Tr}(\sigma_i \sigma_j) = 2\delta_{ij}$, where $\delta_{ij}$ is the Kronecker delta function. Given a density matrix $\rho$, we can use these properties to calculate the entries of the vector $\Vec{r} = (x_1, \ldots, x_{d^2-1})$ as $x_i = \mathrm{Tr}\big( \rho \sigma_i \big)$; $\Vec{r}$ is known as the \emph{Bloch vector}. For $d = 2$, Eq. \eqref{blochVecotr_d} becomes
\begin{equation}\label{blochVector}
\rho = \dfrac{1}{2}\left(\hat{I} + \sum_{i=1}^{3}x_i\sigma_i \right),
\end{equation}
\noindent with $\{ \sigma_i \}_{i=1}^{3}$ the Pauli Matrices 
\begin{equation}\label{PauliMatrices}
    \sigma_1 = 
    \begin{bmatrix}
      0 & 1 \\
      1 & 0
     \end{bmatrix}, \quad
     \sigma_2 = 
    \begin{bmatrix}
      0 & -i \\
      i & 0
     \end{bmatrix} \quad \text{and } \quad
     \sigma_3 = 
     \begin{bmatrix}
      1 & 0\\
      0 & -1
     \end{bmatrix},
\end{equation}
\noindent and $x_i \in [-1,1]$ for $i =1,2,3$. The Bloch vector has magnitude $\| \Vec{r} \| \leq 1$, where  $\| \Vec{r} \| = 1$ and $\| \Vec{r} \| < 1$ hold for pure and mixed states, respectively. Thus, a qubit density matrix corresponds to a  point in a unit ball called the \emph{Bloch sphere}\footnote{For historical reasons it is not called the \emph{Bloch Ball}.} whose position in the ball is given by $\Vec{r}$; pure states are located on the surface and mixed states inside the ball. For $\Vec{r} = (0,0,0)$, Eq. \eqref{blochVector} becomes $\rho =\hat{I}/2$, which is called the \textit{maximally mixed state}.

\section{Composite systems}
The most general pure state vector of a quantum system composed of $N$ bodies can be expressed using the \emph{Tensor Product} \cite{HorodeckiS}:
\begin{equation}
    \label{eq9: N-party state}
    \ket{\psi} = \sum_{i_1 = 0}^{d_1-1}\ldots \sum_{i_N = 0}^{d_N -1} a_{i_1 \ldots i_N} \ket{i_1} \otimes \ldots \otimes \ket{i_N} \quad a_{i_1 \ldots i_N} \in \mathbb{C},
\end{equation}
\noindent with the set $\{ \ket{i_1} \otimes \ldots \otimes \ket{i_N}\}$ forming an orthonormal basis in the composite space  $\mathcal{H}^{d_1} \otimes \ldots \otimes \mathcal{H}^{d_N}$. Conventionally, to make the notation cleaner, $\ket{i_1} \otimes \ldots \otimes \ket{i_N}$ is often written just as $\ket{i_1 \ldots i_N}$.  It is not always possible to write the state $\ket{\psi}$ of an $N$-body quantum system as a \emph{product state}
\begin{equation}
    \label{eq12: separable state}
    \ket{\psi} = \ket{\varphi_1} \otimes \ket{\varphi_2} \otimes \ldots \otimes \ket{\varphi_N},
\end{equation}
\noindent with $\ket{\varphi_i}$ describing the state of the $i$-th subsystem. States which can be factorized as Eq. \eqref{eq12: separable state} are called \emph{separable}. Physically, separable states are \emph{uncorrelated}, this is, measurement outcomes on individual parts of the system do not depend on measurement outcomes obtained on other bodies. Non-separable pure states, those that cannot be expressed as Eq. \eqref{eq12: separable state}, are \emph{entangled}. In the density matrix representation, a mixed state $\rho$ is entangled when it cannot be written as a convex combination of product states \cite{Werner1989}, this is
\begin{equation}
    \label{entangledDensityMatrix}
    \rho \neq \sum_i p_i \rho_1^i \otimes \ldots \otimes \rho_N^i,
\end{equation}
with $p_i\geq0$ and $\sum_i p_i=1$. In general, determining whether a density matrix is separable or not is a hard task. Nonetheless, there exist some criteria to decide separability in some special cases, some of which will be discussed later in this chapter.

For a 2-body system, with local dimensions $d_1 = 2$ and $d_2 = 2$, Eq. \eqref{eq9: N-party state} becomes 
\begin{equation}
    \label{eq11: two body system}
    \ket{\psi} = a_{00} \ket{00} + a_{01} \ket{01} + a_{10}\ket{10} + a_{11} \ket{11}.
\end{equation}
\noindent  By, for example, setting $a_{00} = a_{11} = 1/\sqrt{2}$ and $a_{01} = a_{10} = 0$ in Eq. \eqref{eq11: two body system}, one obtains the maximally entangled state $\ket{\psi} = \left( \ket{00} + \ket{11} \right)/\sqrt{2}$. With coefficients $a_{00} = 1$ and $a_{11} = a_{01} = a_{10} = 0$, Eq. \eqref{eq11: two body system} becomes $\ket{\psi} =  \ket{00} = \ket{0}\otimes\ket{0}$, which is a separable state .

\subsection{Partial Trace and Marginals}
Given the state of a multipartite quantum system, one might be interested in describing just a part of it, namely, to know the density matrix of a reduced part of the whole system. In general, for a quantum system formed by $N$ bodies, with labels in the set $\mathcal{J} = \{ 1,\ldots,N \}$, and state described by the density matrix $\rho_{\mathcal{J}} \in B(\mathcal{H}^{d_{\mathcal{J}}})$, the state $\rho_{\mathcal{I}}$ of a reduced part of the system can be calculated as 
\begin{equation}
    \rho_{\mathcal{I}} = \mathrm{Tr}_{\mathcal{I}^{c}} \left( \rho_{\mathcal{J}} \right) = \sum_i (\bra{i_{I}} \otimes \mathbb{1}_{I^{c}}) \rho_{\mathcal{J}} (\kt{i_{I}} \otimes \mathbb{1}_{I^{c}}),
    \label{reducedSystem}
\end{equation}
\noindent where $\mathcal{I} \subset \mathcal{J}$, $\mathcal{I}^{c}$ is the complement of $\mathcal{I}$ with respect to $\mathcal{J}$ (see Fig. \ref{marginals}) and $<\{ \kt{i_{\mathcal{I}}} \}$ is an orthonormal basis for the subsystem $\mathcal{I}$. Reductions of a density matrix are called \emph{marginals}. Similar to the trace operation, partial trace is independent of the choice of basis. 

\begin{figure}[h!]
    \centering
    \includegraphics[scale = 0.77]{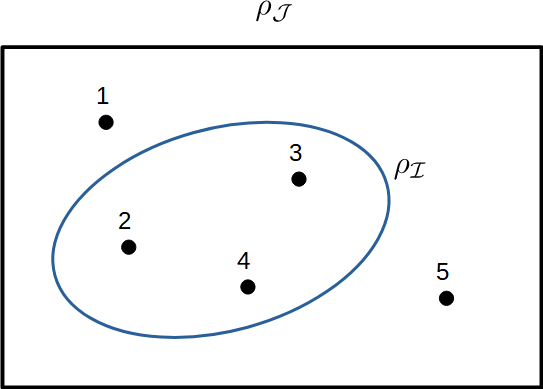}
    \caption[Quantum Marginals]{Illustration of quantum marginals. Enclosed by the rectangle we have a quantum system formed by {\color{red}5} bodies and described by $\rho_{\mathcal{J}}$, with $\mathcal{J} = \{1,2,3,4,5 \}$. The subsystem formed by the bodies $2$, $3$ and $4$, is described by the marginal density matrix $\rho_{\mathcal{I}}$.}
    \label{marginals}
\end{figure}

For an observable $\hat{M}$ (see section \ref{section_1.4}) acting on $B(\mathcal{H}^{d_{\mathcal{I}}})$, the expectation value is computed as $\langle \hat{M} \rangle = \mathrm{Tr} ( \hat{M} \rho_{\mathcal{I}} )$. Although we can calculate $\rho_{\mathcal{I}}$, the subsystem $\mathcal{I}$ is not separated from the whole system $\mathcal{J}$ and thus, for physical consistency,  $\langle \hat{M} \rangle$ has to satisfy
\begin{equation}
    \langle \hat{M} \rangle = \mathrm{Tr} ( \hat{M} \rho_{\mathcal{I}} ) = \mathrm{Tr} \left( ( \hat{M} \otimes \mathbb{1}_{\mathcal{I}^{c}} ) \rho_{\mathcal{J}} \right).
    \label{partialTraceUnique}
\end{equation}
\noindent It turns out that the partial trace $\mathrm{Tr}_{\mathcal{I}^{c}}$, defined in Eq.\eqref{reducedSystem}, is the \emph{unique} linear operation that maps $\rho_{\mathcal{J}}$ into $\rho_{\mathcal{I}}$ such that Eq.\eqref{partialTraceUnique} is satisfied for all quantum states $\rho_{\mathcal{J}}$.

\section{Separability criteria}

\subsubsection{The Schmidt Decomposition}

\noindent For a bipartite system, the state in Eq. \eqref{eq9: N-party state} is given by
\begin{equation}
    \label{BipartiteState}
    \ket{\psi} = \sum_{i_A= 0}^{d_1-1}\sum_{i_B = 0}^{d_2 -1} a_{i_A i_B} \ket{i_A} \ket{i_B},
\end{equation}
\noindent where  $\{ \ket{i_A} \}$ and $\{ \ket{i_B} \}$ are sets of orthonormal bases for the systems $A$ and $B$, respectively. Any bipartite pure state \eqref{BipartiteState} can also be written in the \emph{Schmidt Decomposition} 
\begin{equation}
    \label{SchmidtDecomposition}
    \ket{\psi} = \sum_{i = 0}^{r-1} \lambda_i \ket{\alpha_i} \ket{\beta_i},
\end{equation}
\noindent The number $r$ in \eqref{SchmidtDecomposition} is called the \emph{Schmidt rank} or \emph{Schmidt number}, and corresponds to the number of non-zero singular values $\{ \lambda_i \}$ of the \emph{singular value decomposition} $U S V^{\dagger}$ of the Coefficients Matrix $[ a_{i_A i_B} ]$, with $a_{j_A j_B} = \bra{j_A}\otimes\bra{j_B}\ket{\psi}$ \footnote{For this we use the fact that $\bra{j_A}\otimes\bra{j_B} \ket{i_A}\otimes\ket{i_B} = \delta_{j_A i_A}\delta_{j_B i_B}$.}. The sets of vectors $\{ \ket{\alpha_i} \}$ and $\{\ket{\beta_i} \}$ are the columns of the unitary matrices $U$ and $V$, respectively; they form two orthonormal bases. $S = [ \Lambda \; \; \mathbf{0} ]^T$, with $\Lambda$ a square diagonal matrix with entries $( \lambda_0, \ldots, \lambda_{r-1} )$. If $r = 1$, then $\lambda_0 = 1$ and Eq. \eqref{SchmidtDecomposition} becomes $\ket{\psi} = \ket{\alpha_0} \ket{\beta_0} $. This is, vector states with Schmidt rank equal to one are separable. On the other hand, states with $r > 1$ are non-separable (or entangled).

\subsubsection{Positive Partial Transposition}

In a product basis, a bipartite density matrix $\rho_{AB}$ can be expanded as

\begin{equation}
    \rho_{AB} = \sum_{m,n}^{d_A} \sum_{\mu, \nu}^{d_B} \rho_{m \mu; n \nu} \kt{m}\bra{n} \otimes \kt{\mu}\bra{\nu}.
\end{equation}


The operation of transposing only one of the subsystems , for example $B$, is called \emph{Partial Transposition} and is denoted $\rho_{AB}^{T_{B}}$, with coefficients $\rho_{m\mu;n\nu}^{T_{B}} = \rho_{m\nu;n\mu}$.

The state of a bipartite separable density matrix is $\rho_{AB} = \sum_{i} p_i\rho_{A}^i\otimes\rho_{B}^i$. For this case, the partial transpose $\rho_{AB}^{T_{B}} = \sum_{i} p_i\rho_{A}^i\otimes\rho_{B}^{iT}$ is always a positive matrix, since  $\rho_{B}^{iT} \geq 0$. For entangled states, however, $\rho_{AB}^{T_{B}}$ might result in a matrix with negative eigenvalues. The same occurs for $\rho_{AB}^{T_{A}}$. 

Asher Peres \cite{Peres1996} showed that, for the bipartite case, partial transposition resulting in a positive matrix is a necessary condition for separability; this is called the \emph{Positive Partial Transpose (PPT)} criterion. Later, Horodecki et al. \cite{HORODECKI19961} proved that for $2\times2$ and $2\times3$ systems, positivity of a bipartite density matrix under partial transposition is necessary and sufficient for separability. That is why the PPT criterion is also known as the \emph{Peres–Horodecki criterion}.

\subsubsection{Entanglement Witnesses}

\noindent The Schmidt Decomposition and the PPT criterion are criteria to determine separability in the bipartite case. A more general approach is the concept of an \emph{entanglement witness} \cite{TERHAL2002313}. Let $S_{sep}$ and $S_{ent}$ be the sets of all separable and entangled states, respectively. An observable $\hat{W}$ (see section \ref{section_1.4}) is an entanglement witness if and only if it satisfies the following properties:
\begin{enumerate}
    \item[i)] $\mathrm{Tr}\left( \hat{W} \sigma \right) \geq 0$ for all $\sigma \in S_{sep}$, 
    \item[ii)] $ \mathrm{Tr}\left( \hat{W} \rho_{ent} \right) < 0$ for at least one entangled state $\rho_{ent}$.
\end{enumerate}
\noindent which are necessary and sufficient conditions to determine entanglement. These properties have a geometrical interpretation (see Fig. \ref{Witnesses}). All the states $\rho$ satisfying $\mathrm{Tr}(\hat{W} \rho) = 0$ form a hyperplane that separates the entire set of states in two parts: the left of the hyperplane ($\mathrm{Tr}(\hat{W} \rho) \geq 0$) contains the set $S_{sep}$, to the right ($\mathrm{Tr}(\hat{W} \rho) < 0$) are the entangled states \emph{detected} by $W$. From Fig. \ref{Witnesses} is clear that some witnesses detect more entangled states than others, especially the hyperplanes tangent to $S_{sep}$: these witnesses are called \emph{optimal} \cite{Lewenstein2000}. In Ref. \cite{HORODECKI19961} Horodecki et al. proved that there exists a witness for each $\rho \in S_{ent}$, but constructing  witnesses is not trivial.

\begin{figure}
    \centering
    \includegraphics[scale=0.24]{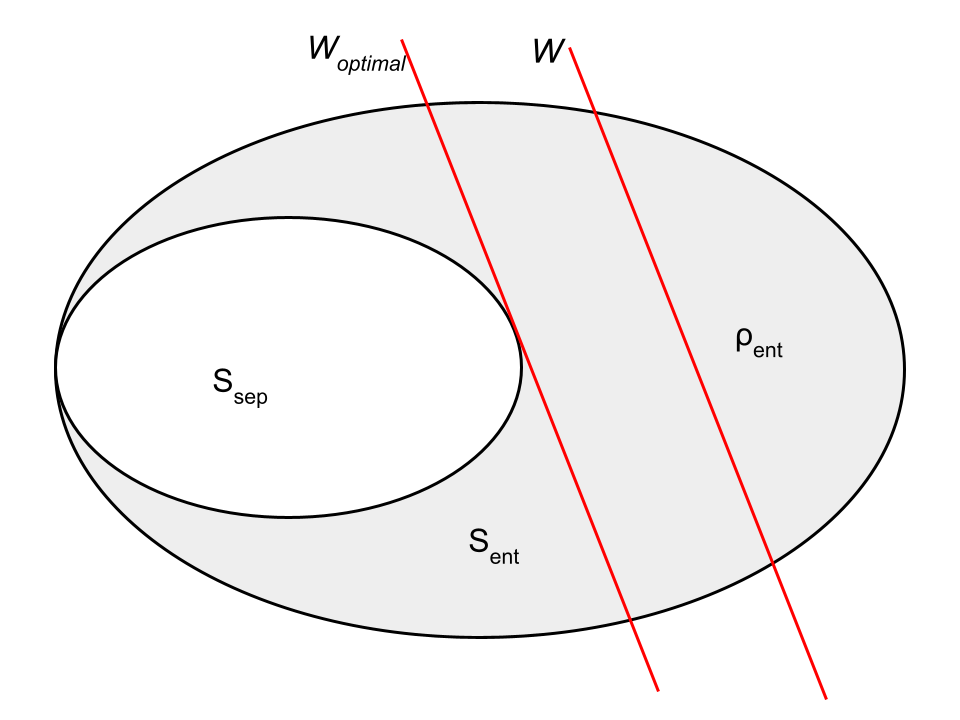}
    \caption[Entanglement Witness]{Schematic representation of the space of states. The red lines correspond to witnesses. $\hat{W}_{optimal}$ touches $S_{sep}$ at only one state: a separable state $\rho$ that satisfies $\mathrm{Tr}(\hat{W} \rho) = 0$.}
    \label{Witnesses}
\end{figure}
Since witnesses are observables, they are an important tool to experimentally detect entanglement. Thus, if in some experiment we measure $ \mathrm{Tr}\left( \hat{W} \rho \right) < 0$, we can assure that $\rho$ is entangled. 

\section{Observables and Measurements}
\label{section_1.4}

\subsection{Observables}
\label{observables}
\noindent Measurable properties of quantum systems are called \emph{observables} and are represented by hermitian linear operators acting on a Hilbert Space. The spectral decomposition of an observable $\hat{A}$ can be written as 
\begin{equation}
    \label{eq6:spectral decompositon}
    \hat{A} = \sum_{n = 0}^{d-1} \lambda_n \kb{\lambda_n}{\lambda_n},
\end{equation}
\noindent with $\{ \lambda_i \}_{i=0}^{d-1}$ the set of eigenvalues (or \textit{spectra}) and $\{ \ket{\lambda_i} \}_{i=0}^{d-1}$ the set of eigenvectors forming an orthonormal basis. The hermitian property $\hat{A} = \hat{A}^{\dagger}$ guarantees that the eigenvalues $\lambda_0, \lambda_1, \ldots, \lambda_{d-1}$ are all real numbers. The eigenvalues correspond to the possible outcomes when a measurement of $\hat{A}$ is performed on a quantum system. If the system to be measured is in the state $\rho$, then the probability of measuring the eigenvalue $\lambda_i$ is given by the Born Rule\cite{Born1926}
\begin{equation}
    \label{eq7: Born Rule}
    p(\lambda_i) = \mathrm{Tr}\big( \rho \hat{\Pi}_i \big).
\end{equation}
\noindent with $\hat{\Pi}_i = \kb{\lambda_i}{\lambda_i}$ . The completeness property $\sum_{i=0}^{d-1} \hat{\Pi}_i = \hat{I}$, with  $\hat{I}$ the identity operator, ensures that $\sum_i p(\lambda_i) = 1$ for any quantum state $\rho$. In practice, to estimate $p(\lambda_i)$, one would have to repeat the measurement procedure on an ensemble of identically prepared quantum systems to compute the frequency with which the eigenvalue $\lambda_i$ is obtained. For a vector state $\ket{\psi}$, the expectation value $\langle \hat{A} \rangle$ of an observable $\hat{A}$ is computed as $\langle \hat{A} \rangle = \bra{\psi} \hat{A} \ket{\psi}$. For a density operator $\rho$, $\langle \hat{A} \rangle = \mathrm{Tr} ( \rho \hat{A} )$.

\subsection{General Quantum Measurements}

\noindent Operators $\hat{\Pi}_i$ from Eq. \eqref{eq7: Born Rule}, which are called \textit{von Neumann Measurements}, are a particular case of a more general formulation called \emph{Positive Operator-Valued Measure (POVM)}. A set $\{ \hat{E}_i \}$ of operators form a POVM if they satisfy the following properties 
\begin{enumerate}
    \item[i)] They are hermitian $\hat{E}_i = \hat{E}_i^{\dagger}$,
    \item[ii)] positive semidefinite $\hat{E}_i \geq 0$,
    \item[iii)] and complete  $\sum_i \hat{E}_i = \hat{I}$.
\end{enumerate}
Operators $\hat{E}_i$ admit the following decomposition
\begin{equation}
    \hat{E}_i = \hat{M}_i\hat{M}_i^{\dagger},
\end{equation}
\noindent with $\hat{M}_i$ a \textit{measurement operator}. In general, the probability $p_i$ of measuring the outcome $i$ associated with the operator $\hat{M}_i$ is given by the Born's rule 
\begin{equation}\label{born_rule_povm}
    p_i= \mathrm{Tr}\big( \hat{M}_i \rho \hat{M}_i^{\dagger} \big) = \mathrm{Tr}\big( \hat{E}_i \rho\big).
\end{equation}
Von Neumann measurements $\hat{\Pi}_i$, besides the three properties above, are orthonormal; this is, $\hat{\Pi}_i \hat{\Pi}_j  = \hat{\Pi}_i \delta_{ij}$. Thus, when $\hat{M}_i = \hat{\Pi}_i$, the POVM element $\hat{E}_i = \hat{\Pi}_i \hat{\Pi}_i = \hat{\Pi}_i$. Von Neumann measurements are also known as \emph{Projective Valued Measure (PVM)} or \emph{projectors}. Unlike PVMs, whose maximum number of possible operators $\hat{\Pi}_i$ is equal to the dimension of the Hilbert space $d$, there is no restriction on the number of elements in the set $\{ \hat{E}_i \}$ forming a POVM.

In general, if a quantum system is initially in a state described by the density matrix $\rho$ and the measurement of this quantum system produces the outcome associated to the measurement operator $\hat{M}_i$, then the post-measurement state $\rho'$ is given by
\begin{equation}\label{projection_postulate}
    \rho' = \dfrac{\hat{M}_i \rho \hat{M}_i^{\dagger}}{\mathrm{Tr}(\hat{M}_i \rho \hat{M}_i^{\dagger})},
\end{equation}
\noindent Eq. \eqref{projection_postulate} is called the \textit{projection postulate}. 

POVMs are ubiquitous for many problems in quantum information theory; it has been shown that POVMs  can perform much better than PVMs in many tasks, such as quantum state discrimination \cite{Bergou_2007}, quantum state estimation \cite{Renes2004}, quantum key-distribution \cite{Renes_2004_2} and randomness certification \cite{Acin_2016}. 

\subsubsection{Mutually Unbiased Bases (MUB)}

Let $\{ \ket{\alpha_i} \}$ and $\{ \ket{\beta_j} \}$ be two orthonormal bases of a $d$-dimensional Hilbert space. These bases are said to be \emph{mutually unbiased} if they satisfy 
\begin{equation}\label{mub1}
    | \ip{\alpha_i}{\beta_j} |^{2} = \dfrac{1}{d}, \quad \text{for every } i,j = 0,\ldots,d-1
\end{equation}
\noindent Also, a set of $m$ bases are MUB if they are pairwise MUB. Eq. \eqref{mub1} can also be written in terms of projectors as 
\begin{equation}\label{mub2}
    \mathrm{Tr}(P_i Q_j) = \dfrac{1}{d},
\end{equation}
\noindent with $P_i = \kb{\alpha_i}{\alpha_i}$ and $Q_j = \kb{\beta_j}{\beta_j}$. For example, for $d = 2$ the bases 
\begin{equation}
    \left \{ \dfrac{\ket{0} + \ket{1}}{\sqrt{2}}, \dfrac{\ket{0} - \ket{1}}{\sqrt{2}} \right \}, \left \{ \dfrac{\ket{0} + i\ket{1}}{\sqrt{2}}, \dfrac{\ket{0} - i\ket{1}}{\sqrt{2}} \right \}, \;\text{and} \;  \left \{ \ket{0}, \ket{1} \right \}, \nonumber
\end{equation}
\noindent corresponding to the eigenvectors of $\sigma_1$, $\sigma_2$ and $\sigma_3$, form three MUB.

MUB find applications in many problems in quantum information theory, including quantum state estimation \cite{Ivonovic_1981, WF1989}, entanglement detection \cite{Spengler_2012} and quantum cryptography \cite{Cerf_2002}. For the case of $d$ being a prime number, Ivonovic \cite{Ivonovic_1981} showed, by explicit construction, that there exist $d + 1$ MUB. An alternative construction is given in \cite{Bandyopadhyay2002ANP}, by defining the following unitary operators
\begin{eqnarray} 
X \ket{j} &=& \ket{j + 1}, \label{Xoperator} \\
Z \ket{j} &=& \omega^j \ket{j}, \label{Zoperator}
\end{eqnarray}
\noindent with $\{ \ket{0}, \ldots, \ket{d-1} \}$ an orthonormal basis and $\omega = \exp(2\pi i/d)$. It is clear from Eqs. \eqref{Xoperator} and \eqref{Zoperator} that $X(Z)^k \ket{j} = (\omega^k)^j \ket{j+1}$. For $d$ a prime number, S. Bandyopadhyay et al. showed (see theorem 2.3 in \cite{Bandyopadhyay2002ANP}) that the eigenvectors bases of the operators $Z$, $X$, $XZ$, $X(Z)^2, \ldots, X(Z)^{d-1}$ form a set of $d+1$ MUBs. 
Wooters and Fields \cite{WOOTTERS1989363} proved that $d+1$ is the maximum number of possible orthonormal bases satisfying \eqref{mub1}, although it is not known whether a maximal set of $d+ 1$ mutually unbiased bases exists in arbitrary dimension $d$. In the same work the authors also proved that there exist $d+1$ MUB when $d = p^n$, with $p$ a prime number and $n$ a positive integer. 

\subsubsection{Symmetric Informationally Complete POVM  (SIC-POVM)}

Let $\{ \ket{\psi_i} \}$ be  a set of $d^2$ normalized vectors belonging to a $d$-dimensional Hilbert space. These vectors are \textit{symmetric} if
\begin{equation}\label{sic-povm}
    | \ip{\psi_i}{\psi_j} |^2 = \dfrac{d \delta_{ij} + 1}{d + 1}, \quad \text{for } i \neq j
\end{equation}
\noindent for all the $\ket{\psi_i}$ in the set. Equivalently, a set of $\{ \hat{\Pi}_i \}$ of $d^2$ subnormalized rank-1 projectors are symmetric if
\begin{equation}\label{symmetric_property_sicpovm}
    \mathrm{Tr}\left( \hat{\Pi}_i \hat{\Pi}_j \right) = \dfrac{d \delta_{ij} + 1}{d + 1}, \quad \text{for } i \neq j
\end{equation}
\noindent for all $\hat{\Pi}_i = \kb{\psi_i}{\psi_i}$. 


A POVM is said to be \textit{informationally complete (IC)} if a quantum state is completely determined by the probabilities obtained from the Born rule \eqref{born_rule_povm}. Thus, an informationally complete POVM $\{ \hat{\Pi}_i \}_{i=0}^{d^2}$ satisfying the completeness property $\sum_i d^{-1}\hat{\Pi}_{i} = \hat{I}$, with elements $\hat{\Pi}_i$ satisfying the symmetric property \eqref{symmetric_property_sicpovm}, defines a \textit{Symmetric Informationally Complete POVM (SIC-POVM)}. Exact solutions for these measurements exist in dimensions $d = 2 - 53$ and 63 more dimensions in the range  $ 57 \leq d \leq 5799$ \cite{sic_povm_list}. Numerical solutions have been found for $d = 2-193$ and $d = 204, 224, 255, 288, 528, 1155$ and $2208$ \cite{Scott_2010, scott2017sics, Grassl_2017, sic_povm_list}. For example, the vectors $\kt{\psi_0} = \kt{0}$, $\kt{\psi_1} = (\kt{0} + \sqrt{2}\kt{1})/\sqrt{3}$, $\kt{\psi_2} = (\kt{0} + \sqrt{2} e^{i \frac{2 \pi}{3}} \kt{1})/\sqrt{3}$ and $\kt{\psi_3} = (\kt{0} + \sqrt{2} e^{i \frac{4\pi}{3}} \kt{1})/\sqrt{3}$ form a SIC-POVM for a qubit system. 

A \textit{complex projective $t$-design} \cite{HOGGAR1982} is formed by a finite set of $m$ normalized vectors $\{ \ket{\psi_i} \}$ satisfying 
\begin{equation}
\dfrac{1}{m^2}
 \sum_{i,j}|\ip{\psi_i}{\psi_j}|^{2t} = \binom{d + t - 1}{t}^{-1},
\end{equation}
\noindent where $\ket{\psi_i} \in \mathcal{H}$. It turns out that maximal set of $d+1$ MUB and SIC-POVMs are complex projective $2$-designs \cite{Klappenecker_2005, scott2017sics}. For these two type of measurements, any $d$-dimensional density matrix $\rho$ admits a decomposition
\begin{equation}\label{tomo_sicpovm}
    \rho = (d + 1)\sum_{i = 1}^{m} p_i \hat{\Pi}_i - \hat{I},
\end{equation}

\noindent with $m=d+1$ for MUB and $m=d^2$ for SIC-POVM. Thus, we can use \eqref{tomo_sicpovm} to reconstruct a quantum state given the experimental probabilities $p_i$ obtained from having performed measurements $\hat{\Pi}_i$. Another advantage of SIC-POVMs is that they minimize the statistical error in the quantum state reconstruction process \cite{Scott_2006}. 


\section{Bell inequalities}

A \textit{Bell scenario} is defined as the arrangement consisting of a given number of parties, the number of measurement settings and the number of possible outcomes for each measurement. In a Bell scenario involving two spatially separated parts (Alice and Bob), a physical system is prepared in a quantum state at some point midway between them, and then part of it is sent to Alice and the remaining part to Bob (see Fig. \ref{fig:bell_scenario}). On her part of the system, Alice can perform one out of $m$ distinct measurements, labeled by $x$, which produce one out of $N$ distinct outcomes, labeled by $a$ (and similarly Bob, with measurements labeled by $y$ and outcomes labeled by $b$). Here, Alice and Bob are free to choose their respective measurements in an arbitrary manner, but they are not allowed to communicate their outcomes during the experiment. On the other hand, they can agree on any pre-established strategy. After carrying out a finite number of identically prepared experiments, we can use the collected data to estimate the joint probabilities $p(a,b|x,y)$ satisfying $p(a,b|x,y) \geq 0$ and $\sum_{a,b}p(a,b|x,y) = 1$.
\begin{figure}
    \centering
    \includegraphics[scale = 0.45]{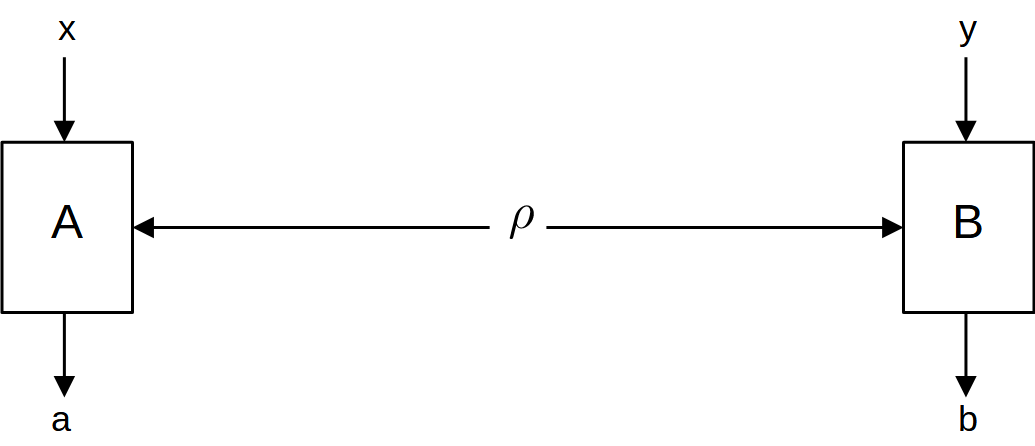}
    \caption[Two-party Bell scenario]{Bipartite Bell scenario. Alice (A) can choose among distinct measurements labeled by $x = 0, \ldots, m-1$, with possible outcomes labeled by $a = 0, \ldots , N-1$. Bob (B) can choose among distinct measurements labeled by $y = 0, \ldots, m-1$, with possible outcomes labeled by $b = 0, \ldots , N-1$.}
    \label{fig:bell_scenario}
\end{figure}

Bell inequalities are defined as 
\begin{equation}\label{BellInequality}
    \sum_{xyab}s_{xy}^{ab}p(a,b|x,y) \leq C,
\end{equation}
\noindent with $-1\leq s_{xy}^{ab} \leq 1$. The set $\{ p(ab|xy) \}$ of joint probabilities satisfying the \emph{locality} condition
\begin{equation}\label{locality-condition}
p(a,b|x,y) = \int d\lambda q(\lambda) p_{\lambda}(a|x)p_{\lambda}(b|y),
\end{equation}
\noindent is called the \emph{local set} and is denoted $\mathcal{L}$. In a local theory, all the predictions $p(ab|xy)$ are ``predetermined'' and this is encoded in the \emph{hidden variable} $\lambda$, whose values are distributed according to the probability density $q(\lambda)$. Equivalently, the condition \eqref{locality-condition} can be expressed as a \textit{local deterministic model}, where the joint probabilities $p(a,b|x,y)$ are statistically independent, meaning that $p(a,b|x,y) = p(a|x)p(b|y)$, and the local probabilities  $p(a|x)$ and $p(b|y)$ can only take the values $0$ and $1$. The upper bound $\mathcal{C}$ in Eq. \eqref{BellInequality}, called the local bound, corresponds to the maximum value that the left-hand side of \eqref{BellInequality} can achieve considering all possible local deterministic strategies. The set $\mathcal{L}$ forms a polytope. Formally, polytopes are defined as the \emph{convex hull} of a finite set \cite{Grunbaum2003}. We may also think of polytopes as a set of half-spaces intercepting each other and enclosing a region of an $n$-dimensional space.

In 1964, J.S Bell \cite{bellTheorem} showed that some predictions $p(a,b|x,y)$ involving quantum systems prepared in a entangled state are incompatible with a local theory\footnote{\emph{Locality}, \emph{local theory} and \emph{local hidden-variable} are frequently used as synonymous in literature.}. This is, for an experiment involving a quantum system in the state $\rho$, there exist predictions given by
\begin{equation}\label{quantum-correlations}
    p(a,b|x,y) = \mathrm{Tr}(\rho M_{a}^x \otimes M_{b}^y),
\end{equation}
that violate Bell inequalities and, therefore, do not satisfy Eq. \eqref{locality-condition}. $M_{a}^x$ and $M_{b}^y$ are the measurements operators of Alice and Bob, respectively. The set of correlations given by Eq. \eqref{quantum-correlations} is denoted $\mathcal{Q}$ and it contains the local set ($\mathcal{L} \subset \mathcal{Q}$). 
\begin{figure}
    \centering
    \includegraphics[scale = 0.675]{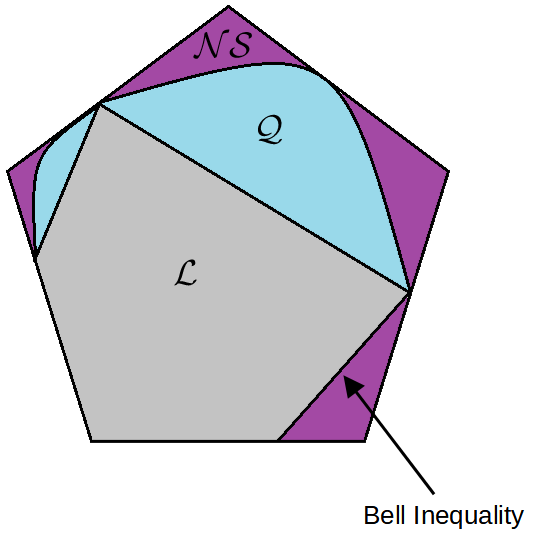}
    \caption[The no-signaling ($\mathcal{NS}$), quantum ($\mathcal{Q}$) and local ($\mathcal{L}$) sets]{Here we depict the $\mathcal{L} \subset \mathcal{Q} \subset \mathcal{NS}$ relation between the no-signaling ($\mathcal{NS}$), quantum ($\mathcal{Q}$) and local ($\mathcal{L}$) sets. $\mathcal{L}$ is a polytope and its facets are Bell inequalities given by Eq. \eqref{BellInequality}. Also, unlike the set $\mathcal{Q}$, $\mathcal{NS}$ is a polytope.}
    \label{fig:local-polytope}
\end{figure}
In Fig. \ref{fig:local-polytope} we show, in addition to $\mathcal{Q}$ and $\mathcal{L}$, the \emph{no-signaling} set $\mathcal{NS}$, which is formed by the entire set joint probabilities $\{p(a,b|x,y)\}$ satisfying the no-signaling constraints \cite{Popescu1994}

\begin{eqnarray}\label{no-signaling}
\sum_{b=0}^{N-1}p(a,b|x,y)=\sum_{b=0}^{N-1}p(a,b|x,y')=:p_A(a|x), \nonumber \\
\sum_{a=0}^{N-1}p(a,b|x,y)=\sum_{a=0}^{N-1}p(a,b|x',y)=:p_B(b|y).
\end{eqnarray}


\noindent Although quantum mechanics agrees the no-signalling principle, there exist probabilities $p(a,b|x,y)$ satisfying \eqref{no-signaling} that do not belong to $\mathcal{Q}$ \cite{Popescu1994}, which implies that $\mathcal{Q} \subset \mathcal{NS}$. Note that Alice's marginal probabilities $p_{A}(a|x)$ are independent of Bob's choice of $y$ (analogously for Bob's marginals $p_{B}(b|y)$). Thus, the no-signaling constraints ensure that no instantaneous exchange of messages between Alice and Bob is possible, preventing Alice's results to be affected by Bob's choice of measurements (and vice versa). The sets $\mathcal{L}$, $\mathcal{Q}$ and $\mathcal{NS}$ are \emph{convex}, meaning that if $p(a,b|x,y)$ and $p(a',b'|x',y')$ belong to one of these sets, then $\beta p(a,b|x,y) + (1 - \beta) p(a',b'|x',y')$ also belongs to the set, for any $0 \leq \beta \leq 1$.

For a bipartite Bell scenario with $x,y \in \{ 0,1 \}$ and $a,b \in \{ 0,1 \}$, a very important inequality is the \emph{CHSH} \cite{CHSH}:
\begin{equation}\label{chsh}
    p(0,0|0,0) + p(0,0|0,1) + p(0,0|1,0) - p(0,0|1,1)  - p_{A}(0|0) - p_{B}(0|0) \leq 0.
\end{equation}

\noindent Considering probabilities $p(a,b|x,y)$ resulting from quantum measurements,  inequality \eqref{chsh} can be violated up to the maximum value of $1/\sqrt{2} - 1/2$.


Bell inequalities can also be written using \emph{correlations}. For the bipartite scenario, the correlations are given by the expectations values of the form $\langle \hat{A}_x \otimes \hat{B}_y \rangle$, with $\hat{A}_x$ and $\hat{B}_y$ Alice's and Bob's observables, respectively. Measurements of $\hat{A}_x$ and $\hat{B}_y$ can produce one of two possible outcomes, namely their eigenvalues, labeled $+1$ and $-1$. Using correlations the CHSH inequality can be equivalently written as
\begin{equation}\label{chsh-correlators}
 \langle \hat{A}_0 \otimes \hat{B}_0 \rangle   + \langle \hat{A}_0 \otimes \hat{B}_1 \rangle   +
 \langle \hat{A}_1 \otimes \hat{B}_0 \rangle   -
 \langle \hat{A}_1 \otimes \hat{B}_1 \rangle   \leq 2.
\end{equation}
\noindent Considering, for example, the maximally entangled state $\ket{\psi} = \left( \ket{01} - \ket{10} \right)/\sqrt{2}$ and observables $\hat{A}_x = \hat{x}\cdot\Vec{\sigma}$ and $\hat{B}_y = \hat{y}\cdot\Vec{\sigma}$, with $\hat{x}$ and $\hat{y}$ unitary vectors indicating the direction in which the measurements are performed, we obtain $\langle \hat{A}_x \otimes \hat{B}_y \rangle = - \hat{x} \cdot \hat{y} $.  For certain directions $\hat{x}$ and $\hat{y}$, is possible to achieve $Q = 2\sqrt{2}$, see \cite{Brunner2014}.  It is always possible to transform an inequality from the correlations notation into probabilities. In the case of a bipartite scenario, we can do that by applying the transformation $\langle \hat{A}_x\hat{B}_y \rangle = \sum_{a,b}(-1)^{a + b}p(a,b|x,y)$.


\newpage
\noindent \text{}
\thispagestyle{plain}
    
    \chapter{Quantum State Estimation}\label{cap2}
    In this chapter, we present an iterative method to study the multipartite state estimation problem. We demonstrate convergence for any given set of informationally complete generalized quantum measurements in every finite dimension. Our method exhibits fast convergence in high dimension and strong robustness under the presence of realistic errors, both in state preparation and measurement stages. In particular, for mutually unbiased bases and tensor product of generalized Pauli observables, it converges in a single iteration.

\section{Introduction}

\emph{Quantum state estimation (QSE)} is the process of estimating the density matrix from measurements performed over an ensemble of identically prepared quantum systems. In the early days of quantum theory, W. Pauli posed the question of whether position and momentum probability distributions univocally determine the state of a quantum particle \cite{Pauli1933}, something that holds in classical mechanics. However, quantum states belong to an abstract Hilbert space whose dimension exponentially increases with the number of particles of the system. Thus, more information than classically expected is required to determine the state. Since then, there has been an increasing interest to estimate the state of a quantum system from a given set of measurements and several methods appeared. For instance, standard state tomography \cite{AJK05} reconstructs $d$-dimensional density matrices from $O(d^3)$ rank-one \emph{Projective Valued Measures (PVM)}, whereas  \emph{mutually unbiased bases (MUB)} \cite{Ivonovic_1981, WF1989} and \emph{Symmetric Informationally Complete (SIC) Positive Operator Valued Measures (POVM)} \cite{Renes2004} do the same task with $O(d^2)$ rank-one measurement projectors. In general, any tight quantum measurement \cite{Scott_2006}, equivalently any complex projective $2$-design, is informationally complete \cite{HOGGAR1982}. 

Quantum state estimation finds applications in communication systems \cite{Molina-Terriza2004}, dissociating molecules \cite{Skovsen_2003} and characterization of  optical devices \cite{D_Ariano_2002}. It is a standard tool for verification of quantum devices, e.g. estimating fidelity of two photon CNOT gates \cite{O_Brien2003}, and has been used to characterize quantum states of trapped ions \cite{Haeffner_2006}, cavity fields \cite{Sayrin2012}, atomic ensembles \cite{Christensen2012} and photons \cite{MAURODARIANO2003205}.

Aside from the experimental procedure of conducting a set of informationally complete measurements on a system, QSE requires an algorithm for reconstructing the state from the measurement statistics. From a variety of techniques proposed, the approaches featuring in the majority of experiments are variants of linear inversion (LI) and maximum-likelihood quantum state estimation (MLE) \cite{Paris2004}. As its name suggests, with LI one determines the state of the quantum system under consideration by inverting the measurement map solving a set of linear equations with the measurement data as input. For relevant families of informationally-complete set of measurements, analytical expressions for the inverse maps are known, significantly speeding up the whole reconstruction effort, see e.g. \cite{Guta2020}. MLE aims to estimate the state that maximizes the probability of obtaining the given experimental data set, among the entire set of density matrices, as we will see in section \ref{QSTtechniques}. Within the different implementations of MLE, those currently achieving the best runtimes are variants of a projected-gradient-descent scheme, see \cite{Superfast2007,Bolduc2017}. Algorithms based on variants of linear inversion \cite{Kaznady2009,Acharya_2019} are typically faster than those implementing MLE when the inversion process is taken from already existing inversion formulas \cite{Guta2020}. On the other hand, when restrictions on the rank of the state being reconstructed apply, techniques based on the probabilistic method of compressed-sensing have proven to be very satisfactory \cite{Gross2010,Cramer2010,Acharya_2017}. Also, the statistics based on five rank-one projective measurements is good enough to have high fidelity reconstruct of rank-one quantum states, even under the presence of errors in both state preparation and measurement stages \cite{Goyeneche2015}.\\  
In Section \ref{QSTtechniques}, we give a brief introduction to some of the most efficient quantum state estimation techniques. In Section \ref{sec:pio}, we introduce the main ingredient of our algorithm: the \emph{Physical Imposition Operator}, a linear operator having an intuitive geometrical interpretation. In Section \ref{sec:algorithm}, we present our iterative algorithm for quantum state estimation based on the physical imposition operator and prove its convergence. In Section  \ref{sec:ultra-convergence}, we show that for a wide class of quantum measurements, which include mutually unbiased bases and tensor product of generalized Pauli observables for $N$ qudit systems, convergence is achieved in a single iteration. In Section \ref{sec:simulations}, we numerically study the performance of our algorithm in terms of runtime and fidelity estimation.

\section{Quantum state estimation techniques}
\label{QSTtechniques}
In this section we review two very well-known techniques for QSE. 
\subsubsection{Semidefinite Programming}
Let us consider the set $ \mathbf{S}^n = \{ X \in \mathbb{R}^{n\times n} | X = X^T \} $, this is, $\mathbf{S}^n$ is the set of $n \times n$ real symmetric matrices. A convex optimization problem with a linear objective function and unknown variable $X$ is a \emph{semidefinite program (SDP)} \cite{boyd2004convex} if it can be stated as
\begin{alignat}{2}
\label{SDP-primal}
&\text{maximize}   & \quad & \mathrm{Tr}\big( C X  \big), \nonumber \\
&\text{subject to} & \quad & \mathrm{Tr}\big( A_i X  \big)  = b_i, \quad i = 1,2,...,m   \\
&                  & \quad & X \geq 0, \nonumber 
\end{alignat}
\noindent with $X, C, A_1,\ldots, A_m \in \mathbf{S}^n$ and $b_i \in \mathbb{R}$. An SDP written as \eqref{SDP-primal} is in its \emph{standard (or primal) form}. An SDP involving matrices with complex entries can be transformed to a real one using the fact that \cite{boyd2004convex}
\begin{align}
    X \geq 0 \Longleftrightarrow 
    \begin{bmatrix}
      \mathfrak{R}X & - \mathfrak{I}X \\
      \mathfrak{I}X & \mathfrak{R}X
     \end{bmatrix} \geq 0,
\end{align}
\noindent with $\mathfrak{R}X$ and $\mathfrak{I}X$ the real and imaginary part of $X$, respectively\footnote{In most of convex optimization literature the symbol "$\succeq$" is used in place of "$\geq$" when referring to positive semidefinite matrices. For example, instead of $X \geq 0$, it is written $X \succeq 0$.}. There is a problem associated to the primal \eqref{SDP-primal}, which consists in solving 
\begin{alignat}{2}
\label{SDP-dual}
&\text{minimize}   & \quad & b^T y, \nonumber \\
&\text{subject to} & \quad & \sum_{j = 1}^{m}y_j A_j - C \geq 0,   \\
&                  & \quad & y \geq 0. \nonumber 
\end{alignat}
\noindent  Eq. \eqref{SDP-dual} is called the \emph{dual} of the primal \eqref{SDP-primal}, with $y \in \mathbb{R}^m$ the unknown variable. Semidefinite programming is an extension of \emph{linear programming} and relies on sophisticated methods, such as the \emph{interior-point methods} \cite{boyd2004convex}, to solve convex optimization problems.  Semidefinite programming has been applied in many fields, including, among others, control theory and combinatorial optimization, for example, in the \emph{MAX-CUT} problem \cite{maxcut}. Semidefinite programming also finds applications in Quantum Information Theory, for example, in discrimination of Quantum States\cite{SDP_POVM}, the Quantum Marginal Problem \cite{Yu2021,Aloy2020TheQM} and Bell Inequalities \cite{Navascues2008}. The problem of determining a density matrix $\rho$ (see Eq. \eqref{born_rule_povm}) given the experimental probabilities $p_i$ obtained from a set of $m$ measurements $\hat{M}_i$, can be formulated as an SDP:
\begin{alignat}{2}\label{SDP primal}
&\text{maximize}   & \quad & \mathrm{Tr}\big( \rho  \big) = 1 , \nonumber \\
&\text{subject to} & \quad & \mathrm{Tr}\big( \hat{M}_i \rho  \big)  = p_i, \quad i = 1,2,...,m   \\
&                  & \quad & \rho \geq 0. \nonumber 
\end{alignat}

\noindent which is maximizing the constant function $1$.

\subsubsection{Maximum Likelihood Estimation (MLE)}

\noindent The modern version of the MLE technique was developed and widespread by Robert Fisher between 1912 and 1922 \cite{Hald1999}. Since then, MLE has successfully been applied for \emph{parameter estimation} and \emph{mathematical modeling} in many fields, ranging from psychology \cite{Myung2003}, finance \cite{Phillips2007}, machine learning \cite{Goodfellow-et-al-2016}, among others. Given some observed data $y= [y_0,  \ldots , y_{m-1}]^T$, with the $y_i$s obtained from some experiment, MLE consists in finding the parameters $w = [w_0, \ldots , w_{m-1}]^T $ that maximizes the Likelihood function $L\left(y|w\right)$; the vector $w$ is called the MLE \emph{estimate}. In practice, for computational convenience, maximizing the log-likelihood $\log\left( L\left(y|w\right) \right)$ is preferred. The Likelihood function $L\left(y|w\right)$ corresponds to \emph{the model} and must be carefully chosen to be representative of the \emph{phenomena or process} that generated the data $y$.

In Quantum Mechanics, MLE is perhaps the most known technique for QSE \cite{Hradil2004} and consists in maximizing the likelihood function
\begin{equation}
    \label{MLE function}
    L\left( f | \rho \right) = \prod_k \mathrm{Tr}\left(\rho \hat{M}_k \right)^{Nf_k},
\end{equation}
\noindent this is, MLE consists in determining the density matrix $\rho$ that maximizes the likelihood function $L\left( f | \rho \right)$ given the experimental relative frequencies $f= [f_0,  \ldots , f_{m-1}]^T$. The frequency $f_i = n_i/N$ is the ratio between the number $n_i$ of occurrences in the detector $\hat{M}_i$ and the number $N$ of identically and independently prepared quantum systems over which the measurements are performed. Considering the log-likelihood function $\log\left( L\left( f | \rho \right) \right)$, the QSE problem can be stated as 
\begin{alignat}{2}
\label{MLE qst}
&\text{minimize}   & \quad & -\dfrac{1}{N}\log\left( L\left( f | \rho \right) \right) = -\sum_{k=0}^{m-1}f_k \mathrm{Tr}\left(\rho \hat{M}_k \right),  \\
&\text{subject to} & \quad &  \rho \geq 0 \quad \text{and} \quad \mathrm{Tr}\left(\rho \right) = 1.  \nonumber
\end{alignat}
\noindent Methods, such as projected-gradient descent \cite{Superfast2007}, have been applied to solve the constrained optimization problem in Eq.\eqref{MLE qst}.

\section{Imposing physical information}\label{sec:pio}
Consider an experimental procedure $\mathcal{P}$ that prepares a quantum system in some \emph{unknown} state $\rho$. Let us  assume that, given some prior knowledge about $\mathcal{P}$, our best guess for $\rho$ is the state $\rho_0$, which could be even the maximally mixed state in absence of prior information. Next, we perform a POVM measurement $A$ composed by $m_A$ outcomes, i.e $A=\{E_i\}_{i\leq m_A}$ on an ensemble of systems independently prepared according to $\mathcal{P}$, obtaining the outcome statistics $\vec{p}=\{p_i\}_{i\leq m_A}$. Given this newly acquired information, \\

\emph{how can we update $\rho_0$ to reflect our new state of knowledge about the system?} \\

To tackle this question, we introduce the \emph{physical imposition operator}, an affine map that updates $\rho_0$ with a matrix containing the probabilities $p_i$. 

\begin{defi}[Physical imposition operator]\label{def:PIOO}
Let $A=\{E_i\}_{i\leq {m_A}}$ be a POVM acting on a $d$-dimensional Hilbert space $\mathcal{H}$ and let $\vec{p}\in\mathbb{R}^{m_A}$ be a probability vector. The physical imposition operator (PIO) associated to $E_i$ and $p_i$ is the map 

\begin{equation}\label{def:pio}
T^{p_i}_{E_i}(\rho)=\rho+\frac{(p_i-\mathrm{Tr}[\rho E_i])E_i}{\mathrm{Tr}(E_i^2)},
\end{equation}
for every $i\leq m_A$.
\end{defi}

In order to clarify the meaning of the physical imposition operator \eqref{def:pio} let us assume for the moment that $A$ is a projective measurement. In such a case, operator $T^{p_i}_{E_i}(\rho)$ takes a quantum state $\rho$, removes the projection along the direction $E_i$, i.e. it removes the physical information about $E_i$ stored in the state $\rho$, and imposes a new projection along this direction, weighted by the probability $p_i$. Here, $p_i$ can be either taken from experimental data or simulated by Born's rule, with respect to a target state to reconstruct. Note that operator $\rho'=T^{p_i}_{E_i}(\rho)$ reflects the experimental knowledge about the quantum system. As we will show in Section \ref{sec:algorithm}, a successive iteration of PIO along an informationally complete set of quantum measurements allows us to univocally reconstruct the quantum state. After several impositions of all involved PIO, the sequence of quantum states successfully converges to a quantum state containing all the imposed physical information, as we demonstrate in Theorem \ref{thm:convergence}. To simplify notation, along the work we drop the superscript $p_i$ in $T_{E_i}^{p_i}$ when the considered probability $p_i$ is clear from the context.


Let us now state some important facts about PIOs that easily arise from \eqref{def:pio}. From now on, $\mathfrak{D}(\rho,\sigma):=\mathrm{Tr}[(\rho-\sigma)^2]$ denotes the Hilbert-Schmidt distance between states $\rho$ and $\sigma$.\\

\begin{prop}\label{prop:pioproperties}
The following properties hold for any POVM $\{E_i\}_{i\leq m_A}$ and any quantum states $\rho,\sigma$ acting on $\mathcal{H}$: 
\begin{enumerate}
\vspace{-1mm}
\item Imposition of physical information: $\mathrm{Tr}[T^{p_i}_{E_i}(\rho)E_i]=p_i.$
\vspace{-2.5mm}
\item Composition: 
\vspace{-3.5mm}
\begin{equation}
    T^{p_j}_{E_j}\circ T^{p_i}_{E_i}(\rho)=T^{p_i}_{E_i}(\rho)+T^{p_j}_{E_j}(\rho)-\rho-\dfrac{\bigl(p_i-\mathrm{Tr}(\rho E_i)\bigr)\mathrm{Tr}(E_iE_j)E_j}{\mathrm{Tr}(E_j^2)\mathrm{Tr}(E_i^2)}. \nonumber
\end{equation}
\vspace{-7.5mm}
\item Non-expansiveness: $\mathfrak{D}(T^{p_j}_{E_j}(\rho),T^{p_j}_{E_j}(\sigma))\leq\mathfrak{D}(\rho,\sigma).$
\end{enumerate}
\end{prop}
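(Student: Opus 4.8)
The plan is to verify all three items by direct computation from Definition~\ref{def:PIOO}, using only that $E_i$ enters \eqref{def:pio} linearly and that $\mathrm{Tr}(E_i^2)>0$ for a nonzero POVM element.

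For the first item I would apply $\mathrm{Tr}[\,\cdot\,E_i]$ to both sides of \eqref{def:pio}: the correction term contributes $\bigl(p_i-\mathrm{Tr}[\rho E_i]\bigr)\mathrm{Tr}(E_i^2)/\mathrm{Tr}(E_i^2)=p_i-\mathrm{Tr}[\rho E_i]$, which cancels the $\mathrm{Tr}[\rho E_i]$ produced by $\rho$, leaving $p_i$. For the composition formula I would set $\rho'=T^{p_i}_{E_i}(\rho)$, use linearity to get $\mathrm{Tr}[\rho'E_j]=\mathrm{Tr}[\rho E_j]+\bigl(p_i-\mathrm{Tr}[\rho E_i]\bigr)\mathrm{Tr}(E_iE_j)/\mathrm{Tr}(E_i^2)$, substitute this into $T^{p_j}_{E_j}(\rho')=\rho'+\bigl(p_j-\mathrm{Tr}[\rho'E_j]\bigr)E_j/\mathrm{Tr}(E_j^2)$, and regroup the resulting four terms: the block $\rho+\bigl(p_i-\mathrm{Tr}[\rho E_i]\bigr)E_i/\mathrm{Tr}(E_i^2)$ is $T^{p_i}_{E_i}(\rho)$, the block $\rho+\bigl(p_j-\mathrm{Tr}[\rho E_j]\bigr)E_j/\mathrm{Tr}(E_j^2)$ is $T^{p_j}_{E_j}(\rho)$ (so the lone $\rho$ is counted once and subtracted once), and the leftover cross term is exactly the one displayed in the statement. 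This is pure bookkeeping.

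The only item carrying a genuine idea is non-expansiveness, and even there the idea is elementary. Writing $\Delta:=\rho-\sigma$, the affine pieces of $T^{p_j}_{E_j}(\rho)-T^{p_j}_{E_j}(\sigma)$ collapse to $\Delta-c\,E_j$ with $c=\mathrm{Tr}[\Delta E_j]/\mathrm{Tr}(E_j^2)$; that is, on differences $T^{p_j}_{E_j}$ acts by subtracting the orthogonal projection of $\Delta$ onto the one-dimensional subspace $\mathbb{R}E_j$ in the Hilbert--Schmidt inner product $\langle X,Y\rangle=\mathrm{Tr}(XY)$. Hence, expanding the square, $\mathfrak{D}\bigl(T^{p_j}_{E_j}(\rho),T^{p_j}_{E_j}(\sigma)\bigr)=\mathrm{Tr}[(\Delta-cE_j)^2]=\mathrm{Tr}[\Delta^2]-2c\,\mathrm{Tr}[\Delta E_j]+c^2\mathrm{Tr}(E_j^2)=\mathrm{Tr}[\Delta^2]-\mathrm{Tr}[\Delta E_j]^2/\mathrm{Tr}(E_j^2)\le\mathrm{Tr}[\Delta^2]=\mathfrak{D}(\rho,\sigma)$, since the cross and square terms merge into a single nonpositive contribution (this is just Pythagoras for the orthogonal decomposition $\Delta=(\Delta-cE_j)+cE_j$). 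The points to keep in mind are that $\Delta$ is Hermitian, so all traces involved are real, and that $\mathrm{Tr}(E_j^2)>0$, so the division is legitimate; beyond that I do not anticipate any obstacle.
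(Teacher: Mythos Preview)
Your proposal is correct and follows essentially the same approach as the paper: items~1 and~2 are dismissed there as immediate from \eqref{def:pio}, and for item~3 the paper likewise subtracts the two images, obtains $(\rho-\sigma)-\mathrm{Tr}[(\rho-\sigma)E_j]E_j/\mathrm{Tr}(E_j^2)$, and expands the Hilbert--Schmidt square to reach $\mathfrak{D}(\rho,\sigma)-\mathrm{Tr}[(\rho-\sigma)E_j]^2/\mathrm{Tr}(E_j^2)$. Your framing of the difference map as the orthogonal projection onto $\mathbb{R}E_j$ (with the inequality then reading as Pythagoras) is a clean addition not made explicit in the paper, but the underlying computation is identical.
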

\begin{proof}
Items \emph{1} and \emph{2} easily arise from \eqref{def:pio}. In order to show the non-expansiveness stated in Item \emph{3}, let us apply \eqref{def:pio} to two quantum states $\rho$ and $\sigma$, acting on $\mathcal{H}$, i.e.
\begin{equation}\label{pio1Ap}
T^{p_i}_{E_i}(\rho)=\rho+\frac{(p_i-\mathrm{Tr}[\rho E_i])E_i}{\mathrm{Tr}(E_i^2)},
\end{equation}
\begin{equation}\label{pio2Ap}
T^{p_i}_{E_i}(\sigma)=\sigma +\frac{(p_i-\mathrm{Tr}[\sigma E_i])E_i}{\mathrm{Tr}(E_i^2)}.
\end{equation}
Subtracting \eqref{pio1Ap} from \eqref{pio2Ap}
\begin{equation}
T_{E_i}(\rho) - T_{E_i}(\sigma) = (\rho - \sigma) -  \dfrac{\mathrm{Tr}[(\rho - \sigma) E_i]E_i}{\mathrm{Tr}(E_i^2)},
\end{equation}
where we have dropped the upper index $p_i$ from $T^{p_i}_{E_i}$. Now, let us compute $\mathfrak{D}(T_{E_j}(\rho),T_{E_j}(\sigma))^2 = \mathrm{Tr}\bigl[ \bigl( T_{E_i}(\rho) - T_{E_i}(\sigma) \bigr)\bigl( T_{E_i}(\rho) - T_{E_i}(\sigma) \bigr)^{\dagger} \bigr]$. Thus,
\begin{align}
\begin{split}
    \mathfrak{D}(T_{E_j}(\rho),T_{E_j}(\sigma))^2 ={}& \mathfrak{D}(\rho, \sigma)^2 -  2\dfrac{\mathrm{Tr}[(\rho - \sigma) E_i]\mathrm{Tr}[(\rho - \sigma) E_i]}{\mathrm{Tr}(E_i^2)} \\
         & \quad \quad  + \dfrac{\bigl( \mathrm{Tr}[(\rho - \sigma) E_i]\bigr)^2 \mathrm{Tr}(E_i^2)}{\bigl( \mathrm{Tr}(E_i^2) \bigr)^2} \nonumber
\end{split}\\
     ={}& \mathfrak{D}(\rho, \sigma)^2 -  \dfrac{\bigl( \mathrm{Tr}[(\rho - \sigma) E_i]\bigr)^2}{\mathrm{Tr}(E_i^2)},
\end{align}
where $\mathfrak{D}(\rho, \sigma)^2 = \mathrm{Tr}\bigl[(\rho - \sigma)(\rho - \sigma)^{\dagger} \bigr]$. Therefore, $\mathfrak{D}(T_{E_j}(\rho),T_{E_j}(\sigma)) \leq \mathfrak{D}(\rho, \sigma)$ and item \emph{3} holds. 
\end{proof}
\vspace{-1mm}
Some important observations arise from Prop. \ref{prop:pioproperties}. First, for $j=i$ in the above item \emph{2} we find that 
\begin{equation}\label{projection}
T^2_{E_i}(\rho)=T_{E_i}(\rho).
\end{equation}
Note that any quantum state $\sigma=T_{E_i}(\rho)$ is a fixed point of $T_{E_i}$, i.e. $T_{E_i}(\sigma)=\sigma $, which simply arises from \eqref{projection}. Roughly speaking, quantum states already having the physical information we want to impose are fixed points of the map $T_{E_j}$. This key property allows us to apply dynamical systems theory \cite{strogatz_2000} to study the quantum state estimation problem. We consider the alternating projection method, firstly studied by Von Neumann \cite{VonNeumann_1949} for the case of two alternating projections and generalized by Halperin to any number of projections \cite{halperin_1962}, which strongly converges to a point onto the intersection of a finite number of affine subspaces \cite{PANG2015419}. 

In Theorem \ref{thm:convergence}, we will show that composition of all physical imposition operators associated to an informationally complete POVM produces a map having a unique attractive fixed point, i.e., the solution to the quantum state tomography problem. The uniqueness of the fixed point guarantees a considerable speed up of the method in practice, as any chosen seed monotonically approaches to the solution of the problem. 

To simplify notation, we consider a single physical imposition operation $\mathcal{T}_A$ for an entire POVM A, defined as follows
\begin{equation}\label{pio2}
\mathcal{T}_A=T_{E_{m_A}}\circ\dots\circ T_{E_1}.
\end{equation}
For any PVM $A$, operator $\mathcal{T}_A$ reduces, up to an additive term proportional to the
identity (omitted), to
\begin{equation}\label{piopvm}
\mathcal{T}_A(\rho)=\sum_{i=1}^{m_A}T_{E_i}(\rho),
\end{equation}
what follows from considering \eqref{pio2} and Prop. \ref{prop:pioproperties}. The additive property \eqref{piopvm} holding for PVM measurements plays an important role, as it helps to reduce
the runtime of our algorithm. This additive property holding for PVM measurements plays an important role, as it helps to reduce the runtime of our algorithm. Precisely, this fact allows us to apply \emph{Kaczmarz method}
\cite{Kaczmarz_1937}. Kaczmarz method considers projections over the subspace generated by the intersection of all associated hyperplanes, defined by the linear system of equations (Born's rule).

\begin{defi}[Generator state]
Given a POVM $A=\{E_i\}_{i\leq m_A}$ and a probability vector $\vec{p}\in\mathbb{R}^{m_A}$, a quantum state $\rho_{gen}$ is called \emph{generator state} for $\vec{p}$ if $\mathrm{Tr}(\rho_{gen} E_i)=p_i$, for every $i\leq m_A$ . 
\end{defi}
Note that $\rho_{gen}$ is a fixed point of $\mathcal{T}_{E_i}$, according to \eqref{def:pio} and \eqref{pio2}. State $\rho_{gen}$ plays an important role to implement numerical simulations, as it guarantees to generate sets of probability distributions compatible with the existence of a positive semidefinite solution to the quantum state tomography problem.

To conclude this section, note that map $\mathcal{T}_A$ defined in \eqref{piopvm} has a simple interpretation in the Bloch sphere for a qubit system, see Fig. \ref{orthogonal_projection}. The image of $\mathcal{T}_A$, i.e. $\mathcal{T}_A[\textrm{Herm}(\mathcal{H}_2)]$, is a plane that contains the disk $$D^{\vec{p}}_{A}=\{z=p_2-p_1\mid z=\mathrm{Tr}(\rho \sigma_z),\,p_i=\mathrm{Tr}(\rho E_i),\,\rho\geq0,\mathrm{Tr}(\rho)=1\},$$ i.e., the disk contains the full set of generator states $\rho_{gen}$. Note that $\mathcal{T}_A$ is not a completely positive trace preserving (CPTP) map, as $\mathcal{T}_A[\textrm{Herm}(\mathcal{H}_2)]$ extends beyond the disk $D^{\vec{p}}_{A}$, i.e. outside the space of states. Indeed, for any state $\rho$ that is not a convex combination of projectors $E_i$, there exists a probability distribution $\vec{p}$ such that $\mathcal{T}_A(\rho)$ is not positive semi-definite. Roughly speaking, any point inside the Bloch sphere from Fig. \ref{orthogonal_projection} but outside the blue vertical line is projected by $\mathcal{T}_A$ outside the sphere, for a sufficiently small disk $D^{\vec{p}}_A$.
\begin{figure}[h]
    \centering
      \includegraphics[scale=0.32]{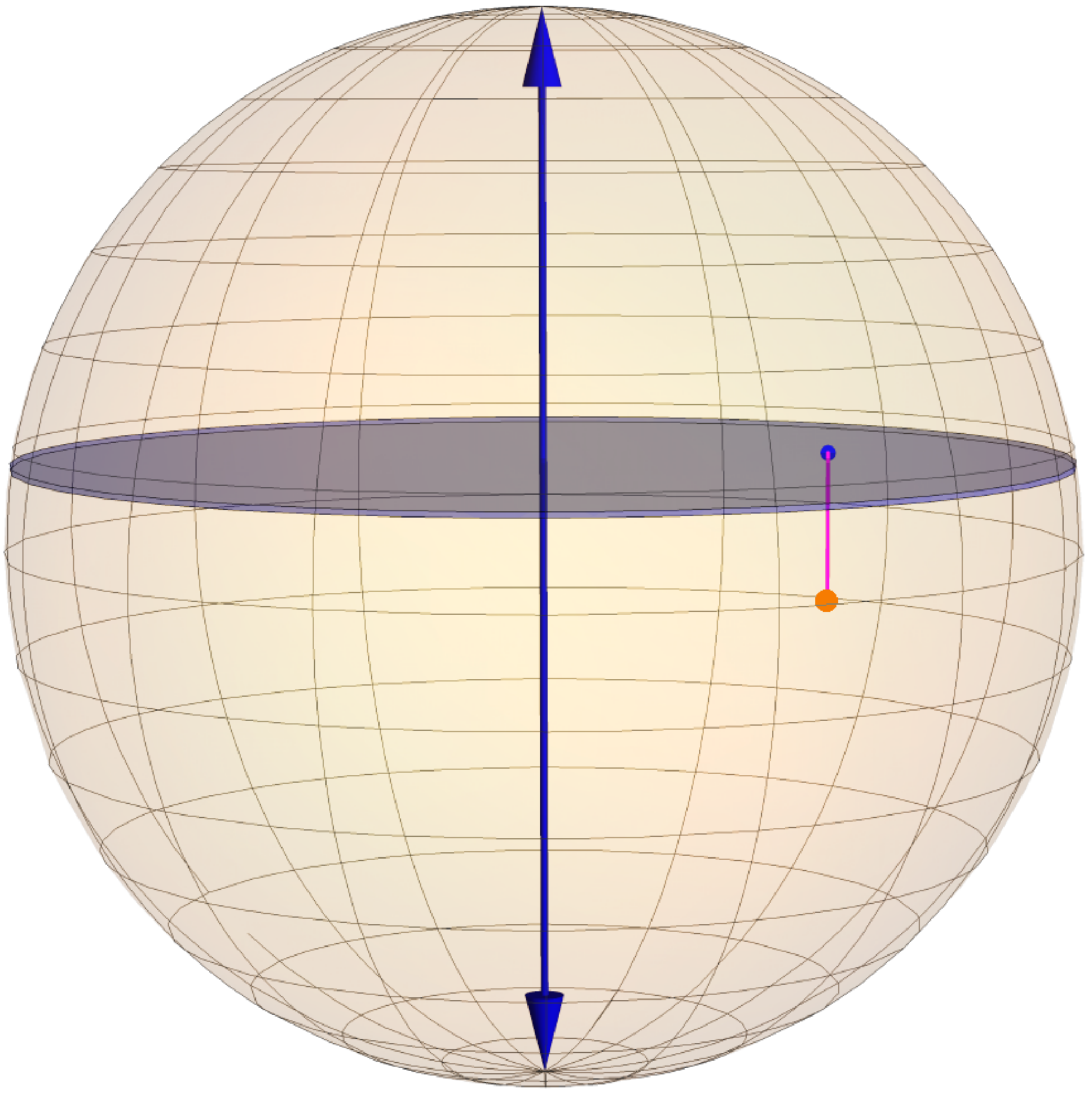}
     \caption[QSE: Orthogonal projection in the Bloch sphere]{Bloch sphere representation for a single qubit system and PVM measurements. The blue arrows define the eigenvectors of $\sigma_z$. The grey disk represents the entire set of quantum states $\rho_{gen}$ satisfying equations $p_j=\mathrm{Tr}(\rho_{gen}E_j)$, $j=0,1$, where $E_0$ and $E_1$ are rank-one eigenprojectors of an observable and $\{p_j\}$ the set of probabilities experimentally obtained. The action of $\mathcal{T}_A$ over the initial state $\rho_0$ (orange dot) is the orthogonal projection to the plane that contains the disk (blue dot) [color online].}
     \label{orthogonal_projection}
\end{figure}

\section{Algorithm for quantum state estimation}\label{sec:algorithm}

In the practice of quantum state estimation, one collects a set of probability distributions $\vec{p_1},\dots,\vec{p_{\ell}}$ from a set of $\ell$ POVM measurements $A_1,\dots,A_{\ell}$, with $A_{k}=\{E_i^{(k)}\}_{i\leq m_{k}}$, implemented over an ensemble of physical systems identically prepared in a quantum state $\rho_{gen}$. The statistics collected allows a unique state reconstruction when considering an  \emph{informationally-complete} (IC) sets of observables $A_1,\dots,A_\ell$ under the absence of sources of errors. Our algorithm for quantum state estimation, Algorithm \ref{alg:pio1} below, defines a sequence of hermitian operators $\rho_n$, not necessarily composed by quantum states, that converges to the unique quantum state that is solution to the reconstruction problem, i.e. the quantum state $\rho_{gen}$. For the moment, we assume error-free state estimation in our statements. The algorithm applies to any finite dimensional Hilbert space $\mathcal{H}$, and any informationally complete set of quantum measurements.

\begin{algorithm}[h]\caption{Quantum state estimation algorithm.}\label{alg:pio1}
\begin{algorithmic}
\Require dimension $d\in\mathbb{N}$, POVMs $A_1,\dots, A_{\ell}$ acting on $\mathcal{H}$, \\ 
experimental frequencies $\vec{f}_1,\dots,\vec{f}_{\ell}\in \mathbb{R}^m$ and accuracy $\epsilon\in [0,1]$.
\Ensure estimate $\rho_{ est}\in\mathcal{B}(\mathcal{H})$.
\State{$\rho_{0} = \mathbb{I}/d$}
\State{$\rho =  \mathcal{T}_{A_{\ell}}\circ\cdots\circ \mathcal{T}_{A_1}(\rho_{0})$}\\
\textbf{repeat} \\
\hspace*{0.5cm} $\rho_{ old} = \rho$ \\
\hspace*{0.5cm} $\rho = \mathcal{T}_{A_{\ell}}\circ\cdots\circ\mathcal{T}_{A_1}(\rho_{old})$ \\
\textbf{until} $\mathfrak{D}(\rho,\rho_{old})\leq\epsilon$ \\
\Return{ $\argmin_{ \rho_{ est} \in \mathcal{B}(\mathcal{H}) } \mathfrak{D}(\rho,\rho_{ est})$ }
\end{algorithmic}
\end{algorithm}
In Algorithm \ref{alg:pio1}, $\mathcal{B}( \mathcal{H}) $ denotes the set of density operators over $\mathcal{H}$. The presence of errors in real probabilities may lead algorithm \ref{alg:pio1} to find a matrix $\rho$ with negative eigenvalues. In that case, the last step finds the estimate state $\rho_{est}$ for which $\mathfrak{D}(\rho,\rho_{ est})$ is minimum; after finding the spectral decomposition $\rho = U diag ( \Vec{\lambda} )U^{\dagger}$, the best estimate is $\rho_{est} = U diag \left(  \left[\Vec{\lambda} - x_0 \Vec{1} \right]^+\right)U^{\dagger}$ \cite{Guta2020}, where the function $\left[\cdot\right]^+$ has components $\left[ \Vec{y} \right]_i^+ = max\{y_i,0 \}$ and $x_0$ is such that $f(x_0)=0$, with $f(x) = 2 + \mathrm{Tr}(\rho) - dx + \sum_{i=1}^d|\lambda_i - x|$.

Theorem \ref{thm:convergence} below asserts the convergence of Algorithm \ref{alg:pio1} when the input frequencies are exact, i.e. Born-rule, probabilities of an IC set of POVMs.
\begin{thm}\label{thm:convergence}
Let $A_1,\dots,A_{\ell}$ be a set of  informationally complete POVMs acting on a Hilbert space $\mathcal{H}$, associated to a set of probability distributions $\vec{p_1},\dots,\vec{p_{\ell}}$ compatible with the existence of a quantum state. Therefore, Algorithm \ref{alg:pio1} converges to the unique solution to the quantum state tomography problem.
\end{thm}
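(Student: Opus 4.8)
The plan is to recognize Algorithm~\ref{alg:pio1} as a cyclic alternating-projection scheme onto affine subspaces of $\mathrm{Herm}(\mathcal{H})$ and to invoke the classical convergence theorem for such schemes.

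First I would work in the real Hilbert space $\mathrm{Herm}(\mathcal{H})$ with the Hilbert--Schmidt inner product $\langle X,Y\rangle=\mathrm{Tr}(XY)$, whose induced distance is $\mathfrak{D}$, and observe that for a POVM element $E$ and a target value $p$ the physical imposition operator $T^{p}_{E}$ of \eqref{def:pio} is precisely the $\mathfrak{D}$-orthogonal projection onto the affine hyperplane $H_E^p=\{X\in\mathrm{Herm}(\mathcal{H}):\mathrm{Tr}(XE)=p\}$: the displacement $T^{p}_{E}(\rho)-\rho$ is parallel to $E$, the normal direction of $H_E^p$, while $\mathrm{Tr}[T^{p}_{E}(\rho)E]=p$ by Prop.~\ref{prop:pioproperties}, and these two facts characterize the orthogonal projection; idempotence \eqref{projection} is the consistency check. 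Consequently the one-sweep map $\Phi:=\mathcal{T}_{A_{\ell}}\circ\cdots\circ\mathcal{T}_{A_1}$ is a composition, in a fixed cyclic order, of the finitely many orthogonal projections $P_{H_i^{(k)}}$ onto the hyperplanes $H_i^{(k)}:=H_{E_i^{(k)}}^{p_i^{(k)}}$ ($k\le\ell$, $i\le m_k$), and the algorithm produces the orbit $\rho_n=\Phi^{n}(\mathbb{I}/d)$.

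Next I would pin down the intersection $S:=\bigcap_{k,i}H_i^{(k)}$. It is nonempty because by hypothesis the distributions $\vec{p}_1,\dots,\vec{p}_{\ell}$ are compatible with a quantum state, so there is a density matrix $\rho_{gen}$ with $\mathrm{Tr}(\rho_{gen}E_i^{(k)})=p_i^{(k)}$ for all $k,i$, i.e.\ $\rho_{gen}\in S$. It is moreover a single point, because informational completeness of $A_1,\dots,A_{\ell}$ --- injectivity of the Born map $\rho\mapsto(\mathrm{Tr}(\rho E_i^{(k)}))_{k,i}$ on density matrices --- forces the real linear span $W$ of $\{E_i^{(k)}\}_{k,i}$ to be all of $\mathrm{Herm}(\mathcal{H})$: the density matrices have nonempty interior inside the trace-one hyperplane, so the affine Born map is injective on that whole hyperplane, hence $W^{\perp}$ meets the traceless Hermitian matrices only at $0$, and $\mathbb{I}\in W$ (by $\sum_i E_i^{(k)}=\mathbb{I}$) then forces $W^{\perp}=\{0\}$. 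Therefore $X,Y\in S$ gives $\mathrm{Tr}((X-Y)E_i^{(k)})=0$ for all $k,i$, whence $X=Y$ and $S=\{\rho_{gen}\}$.

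Finally I would invoke convergence. Since $S\neq\emptyset$, translating by a point of $S$ turns each $H_i^{(k)}$ into a linear subspace, and the classical alternating-projection theorem --- von Neumann for two subspaces \cite{VonNeumann_1949}, Halperin in general \cite{halperin_1962}, see also \cite{PANG2015419} --- guarantees that in the finite-dimensional space $\mathrm{Herm}(\mathcal{H})$ the iterates $\Phi^{n}(X_0)$ converge in norm, for every seed $X_0$, to the orthogonal projection $P_S(X_0)$ of $X_0$ onto $S$. As $S=\{\rho_{gen}\}$ this limit equals $\rho_{gen}$ regardless of the seed; in particular $\rho_n\to\rho_{gen}$, and since $\rho_{gen}\in\mathcal{B}(\mathcal{H})$ the concluding step $\argmin_{\rho_{est}\in\mathcal{B}(\mathcal{H})}\mathfrak{D}(\rho,\rho_{est})$ is continuous there and returns $\rho_{gen}$ in the limit, so the algorithm converges to the unique solution $\rho_{gen}$ of the state-tomography problem. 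The two points that need care --- and the likely main obstacle --- are the equivalence ``informational completeness $\Leftrightarrow$ $\{E_i^{(k)}\}$ spans $\mathrm{Herm}(\mathcal{H})$'' sketched above, and quoting the affine version of the alternating-projection theorem with the right hypotheses (closedness of the $H_i^{(k)}$ is automatic, nonemptiness of $S$ is the previous step, and finite dimensionality upgrades weak to norm convergence); everything else is bookkeeping on top of \eqref{def:pio}, \eqref{projection} and Prop.~\ref{prop:pioproperties}.
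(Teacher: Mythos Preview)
Your proposal is correct and follows essentially the same route as the paper: identify the PIOs as (non-expansive) orthogonal projections onto affine hyperplanes, note that $\rho_{gen}$ is the unique common fixed point by informational completeness, and invoke the von Neumann--Halperin alternating-projection theorem (the paper cites the closely related Halpern fixed-point result \cite{halpern_1967}) for convergence. Your argument that the intersection $S$ is a singleton in all of $\mathrm{Herm}(\mathcal{H})$ --- not merely among density matrices --- is more carefully spelled out than in the paper, which only remarks on uniqueness among quantum states.
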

\begin{proof}
First, from item \emph{1} in Prop. \ref{prop:pioproperties} the generator state $\rho_{gen}$ is a fixed point of each imposition operator $\mathcal{T}_{A_i}$, for every chosen informationally complete POVM measurement $A_1,\dots,A_{\ell}$. Hence, $\rho_{gen}$ is a fixed point of the composition of all involved operators. Moreover, this fixed point is unique, as there is no other quantum state having the same probability distributions for the considered measurements, as $A_1,\dots,A_{\ell}$ are informationally complete. Here, we are assuming error-free probability distributions. Finally, convergence of our sequences is guaranteed by Theorem 1 in \cite{halpern_1967}, which states that successive iterations of non-expansive projections converge to a common fixed point of the involved maps.
\end{proof}
Here, compatibility refers to the existence of a quantum state associated to exact probability distributions $\vec{p_1},\dots,\vec{p_{\ell}}$ what is guaranteed when probabilities come from a generator state $\rho_{gen}$. Theorem \ref{thm:convergence} asserts, in other words, that the composite map $\mathcal{T}_{A_{\ell}}\circ\cdots\circ \mathcal{T}_{A_1}$ defines a dynamical system having a unique attractive fixed point. The successive iterations of Algorithm \ref{alg:pio1} define a \emph{Picard sequence} \cite{kuczma_choczewski_ger_1990}:
\begin{align}\label{eq:picard-sequence}
    \rho_0&=\mathbb{I}/d,\nonumber\\
    \rho_n&= \mathcal{T}_{A_{\ell}}\circ\dots\circ \mathcal{T}_{A_1}(\rho_{n-1}),~n\geq 1.
\end{align} 
Note that for arbitrarily chosen set of measurements, the composition of physical imposition operators depends on its ordering. According to Theorem \ref{thm:convergence}, this ordering does not affect the success of the convergence in infinitely many steps. However, in practice one is restricted to a finite sequence, where different orderings produce different quantum states as an output. Nonetheless, such difference tends to zero when the state $\rho_n$ is close to the attractive fixed point, i.e. solution to the state tomography problem. According to our experience from numerical simulations, we did not find any advantage from considering a special ordering for composition of operators.

Figure \ref{Fig2} shows the convergence of $\rho_n$ in the Bloch sphere representation for a single qubit system and three PVMs taken at random. For certain families of measurements, e.g. mutually unbiased bases and tensor product of Pauli matrices, the resulting Picard sequences and, therefore, Algorithm \ref{alg:pio1} converges in a single iteration, see Prop. \ref{prop_singlestep}. That is, $\rho_n = \rho_{1}$ for every $n\geq 1$. We numerically observed this same behaviour for the $3^N$ product Pauli eigenbases in the space of $N$-qubits, with $1\leq N\leq 8$, conjecturing that it holds for every $N\in\mathbb{N}$, see Section \ref{sec:simulations-pauli}. 

\begin{figure}[t]
    \centering
     \includegraphics[scale=0.32]{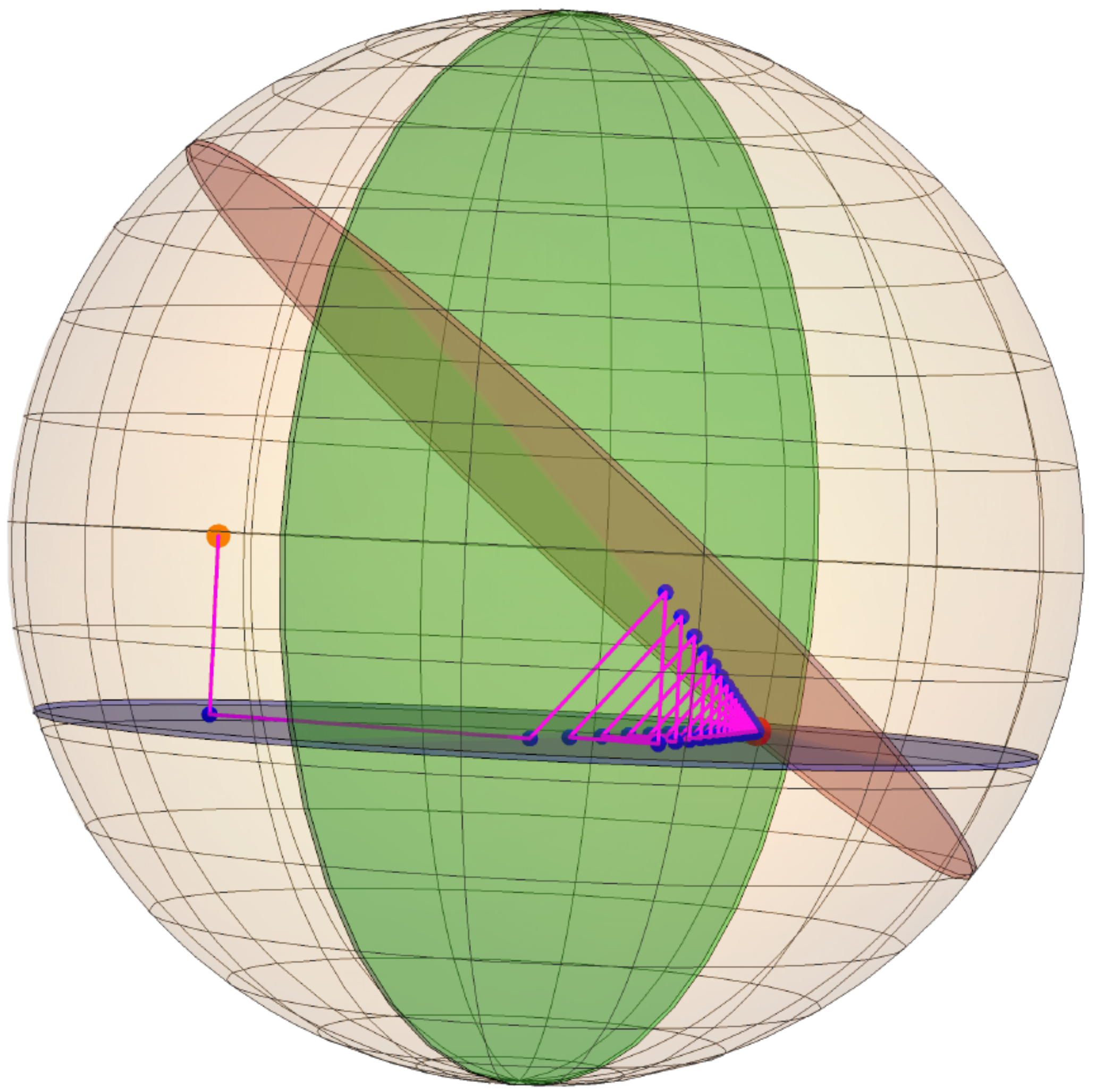}
     \caption[QSE: Convergence of Algorithm \ref{alg:pio1} in the Bloch sphere.]{Graphical representation of the convergence of Algorithm \ref{alg:pio1} in the Bloch sphere for a single qubit system. We show convergence for three incompatible PVMs $A_1,A_2$ and $A_3$, defining disks $D_1$ (grey), $D_2$ (green) and $D_3$ (red) in the Bloch sphere. The initial state $\rho_0$  (orange dot), which we have chosen different from $\mathbb{I}/2$ only for graphical purposes, is first projected to $D_1$. The corresponding point in $D_1$ is then projected to $D_2$ and that projection is later projected to $D_3$. The iteration of this sequence of projections successfully converges to the generator state $\rho_{gen}$ (red dot), the unique solution to the quantum state tomography problem [color online].} 
     \label{Fig2}
\end{figure}


\subsection{Single iteration convergence}\label{sec:ultra-convergence}

When considering maximal sets of mutually unbiased bases, the Picard sequences featuring in Algorithm \ref{alg:pio1} converge in a single iteration. This is so because the associated imposition operators commute for MUB. This single-iteration convergence is easy to visualize in the Bloch sphere (see Fig. \ref{pio_mubs}) for a qubit system, as the three disks associated to three MUB are mutually orthogonal, and orthogonal projections acting over orthogonal planes keep the impositions within the intersection of the disks. The same argument also holds in every dimension. Let us formalize this result.
\begin{figure}[t]
    \centering
     \includegraphics[scale=0.35]{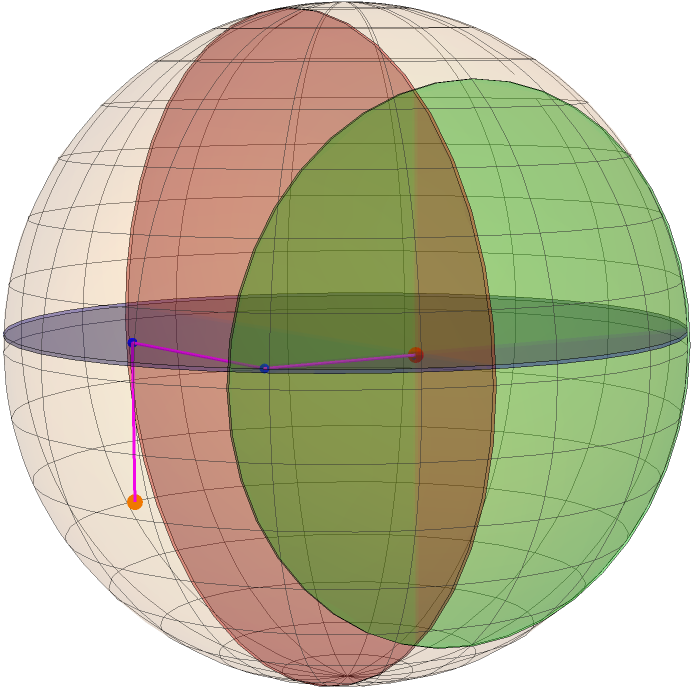}
     \caption[QSE: Convergence of the algorithm \ref{alg:pio1} for MUBs]{Here we show the convergence of the algorithm when MUBs are considered. We see that it only requires to project once on each plane to converge to the solution.} 
     \label{pio_mubs}
\end{figure}
\begin{prop}\label{MUBcommute}
Let $T_A$ and $T_B$ be two physical imposition operators associated to two mutually unbiased bases $A$ and $B$. Therefore,
\begin{equation}
T_B\circ T_A=T_A\circ T_B=T_A+T_B-\mathbb{I},
\end{equation}
with $\mathbb{I}$ the identity operator. In particular, note that $T_A$ and $T_B$ commute.
\end{prop}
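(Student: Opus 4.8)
The plan is to reduce the claim to the explicit additive form of the imposition operator on a rank-one PVM and then use the single defining property of MUB, the constant overlap $\mathrm{Tr}(P_iQ_j)=1/d$. First I would make \eqref{piopvm} precise: since $A=\{P_i\}$ and $B=\{Q_j\}$ consist of rank-one projectors, $\mathrm{Tr}(P_i^2)=1$, and within a fixed basis $\mathrm{Tr}(P_iP_{i'})=0$ for $i\neq i'$, so all cross terms in the composition rule of Prop.~\ref{prop:pioproperties} vanish; iterating that rule over \eqref{pio2} collapses to
\[
T_A(\rho)=\rho+\sum_i\bigl(p_i-\mathrm{Tr}(\rho P_i)\bigr)P_i,\qquad
T_B(\rho)=\rho+\sum_j\bigl(q_j-\mathrm{Tr}(\rho Q_j)\bigr)Q_j
\]
on any trace-one Hermitian operator $\rho$, where $\{p_i\},\{q_j\}$ are the probability vectors and the omitted multiple of $\rho$ is fixed by $\sum_iP_i=\mathbb{I}=\sum_jQ_j$ together with $\mathrm{Tr}(\rho)=1$.

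The core observation is that the correction $\Delta_A:=\sum_i(p_i-\mathrm{Tr}(\rho P_i))P_i$ added by $T_A$ is invisible to every $B$-expectation: by \eqref{mub2}, $\mathrm{Tr}(\Delta_A Q_j)=\frac1d\sum_i(p_i-\mathrm{Tr}(\rho P_i))=\frac1d(1-1)=0$ for each $j$, where I use $\sum_ip_i=1$ and $\sum_i\mathrm{Tr}(\rho P_i)=\mathrm{Tr}(\rho)=1$. Hence, writing $\sigma:=T_A(\rho)=\rho+\Delta_A$, one gets $\mathrm{Tr}(\sigma Q_j)=\mathrm{Tr}(\rho Q_j)$ for all $j$, so
\[
T_B\circ T_A(\rho)=\sigma+\sum_j\bigl(q_j-\mathrm{Tr}(\sigma Q_j)\bigr)Q_j
=\rho+\Delta_A+\sum_j\bigl(q_j-\mathrm{Tr}(\rho Q_j)\bigr)Q_j
=T_A(\rho)+T_B(\rho)-\rho .
\]
The computation is symmetric under $A\leftrightarrow B$ (now using $\mathrm{Tr}(Q_jP_i)=1/d$ to kill the $B$-correction against the $P_i$'s), giving $T_A\circ T_B(\rho)=T_A(\rho)+T_B(\rho)-\rho$ as well; thus both compositions equal the map $T_A+T_B-\mathbb{I}$, and in particular $T_A$ and $T_B$ commute. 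Geometrically this is precisely the picture of Fig.~\ref{pio_mubs}: the Hilbert--Schmidt orthogonal projections onto the affine ``MUB disks'' commute because those disks are mutually orthogonal.

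I do not anticipate a real obstacle; the two points needing care are (i) correctly pinning down the closed form of $T_A$ for a rank-one PVM from \eqref{pio2}--\eqref{piopvm}, i.e. identifying the additive multiple of $\rho$, and (ii) stating the normalization hypothesis explicitly, since the cancellation $\sum_i(p_i-\mathrm{Tr}(\rho P_i))=0$ requires $\mathrm{Tr}(\rho)=1$ — so the identity is asserted on quantum states, which is exactly the regime in which Algorithm~\ref{alg:pio1} operates. Everything else is linearity of the trace.
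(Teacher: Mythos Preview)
Your proof is correct and follows essentially the same route as the paper's: both derive the closed additive form of $T_A$ on a rank-one PVM, then show the cross term in $T_B\circ T_A$ vanishes because the constant MUB overlap $\mathrm{Tr}(P_iQ_j)=1/d$ factors out a sum $\sum_i\bigl(p_i-\mathrm{Tr}(\rho P_i)\bigr)=0$ (the paper writes this as $\mathrm{Tr}(\rho-\rho_0)=0$ via the sandwich form $\Pi_j\sigma\Pi_j=\mathrm{Tr}(\sigma\Pi_j)\Pi_j$). Your explicit flag that the cancellation needs $\mathrm{Tr}(\rho)=1$ and $\sum_ip_i=1$ is exactly the hypothesis the paper uses implicitly.
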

\begin{proof}
First, it is simple to show that  $\mathcal{T}_A(\rho)=\rho_0+\sum_{j=0}^{m_A-1}\Pi_j(\rho-\rho_0)\Pi_j$ for any PVM $A$, where $\Pi_j=E_j$ are the subnormalized rank-one PVM elements\footnote{We thank Felix Huber for highlighting this property}. Thus, we have
\begin{align}
\begin{split}
    T_B\circ T_A(\rho_0) ={}& \rho_0 + \sum_{j=0}^{m_A-1}\Pi^A_j(\rho-\rho_0)\Pi^A_j \\
          & \quad \quad + \sum_{k=0}^{m_B-1}\Pi^B_k\left[\rho-\left(\rho_0+\sum_{j=0}^{m_A-1}\Pi^A_j(\rho-\rho_0)\Pi^A_j\right)\right]\Pi^B_k, \nonumber
\end{split}  \\
\begin{split}
     ={}& \rho_0+\sum_{j=0}^{m_A-1}\Pi^A_j(\rho-\rho_0)\Pi^A_j+\sum_{k=0}^{m_B-1}\Pi^B_k(\rho-\rho_0)\Pi^B_k\\
        &  \quad \quad + \sum_{j,k}\Pi^B_k\Pi^A_j(\rho-\rho_0)\Pi^A_j\Pi^B_k. \nonumber
\end{split}
\end{align}
On the other hand, from item 2 in Prop. \ref{prop:pioproperties}
\begin{eqnarray*}
\sum_{j,k}\Pi^B_k\Pi^A_j(\rho-\rho_0)\Pi^A_j\Pi^B_k&=&\sum_{j,k}\mathrm{Tr}(\Pi^A_j\Pi^B_k)\mathrm{Tr}\bigl((\rho-\rho_0)\Pi^A_j\bigr)\Pi^B_k,\\
&=&\gamma(A,B)\sum_{j,k}\mathrm{Tr}\bigl((\rho-\rho_0)\Pi^A_j\bigr)\Pi^B_k,\\
&=&\gamma(A,B)\mathrm{Tr}(\rho-\rho_0) \mathbb{1} = 0.
\end{eqnarray*}
Therefore, we have
\begin{eqnarray}
T_B\circ T_A(\rho_0)&=&\rho_0+\sum_{j=0}^{m_A-1}\Pi^A_j(\rho-\rho_0)\Pi^A_j+\sum_{k=0}^{m_B-1}\Pi^B_k(\rho-\rho_0)\Pi^B_k, \nonumber \\
&=&T_A(\rho_0)+T_B(\rho_0)-\rho_0,
\end{eqnarray}
for any initial state $\rho_0$. So, we have 
$T_B\circ T_A=T_A\circ T_B=T_A+T_B-\mathbb{I}$.
\end{proof}

Also, it is easy to see from Item \emph{2}, Prop. \ref{prop:pioproperties} that operators $T_{E_i}$ commute when considering $E_i$ equal to the tensor product local Pauli group. In this case, operators $E_i$ do not form a POVM but given that they define an orthogonal basis in the matrix space, they are an informationaly complete set of observables. Let us now show the main result of this section:
\begin{prop}\label{prop_singlestep}
Algorithm \ref{alg:pio1} converges, in a single iteration, to the unique solution of the quantum state tomography problem for  product of generalized Pauli operators and also for $d+1$ mutually unbiased bases, in any prime power dimension $d$.
\end{prop}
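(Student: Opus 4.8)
The plan is to reduce everything to one algebraic property of the map $\mathcal{M}:=\mathcal{T}_{A_{\ell}}\circ\cdots\circ\mathcal{T}_{A_{1}}$ iterated in Algorithm~\ref{alg:pio1}: that in both listed situations $\mathcal{M}$ is \emph{idempotent}, $\mathcal{M}\circ\mathcal{M}=\mathcal{M}$. Granting this, the Picard sequence \eqref{eq:picard-sequence} is constant from its first term on, since $\rho_{n}=\mathcal{M}(\rho_{n-1})=\mathcal{M}\bigl(\mathcal{M}(\rho_{0})\bigr)=\mathcal{M}(\rho_{0})=\rho_{1}$ for all $n\geq1$; in particular $\rho_{1}$ is a fixed point of $\mathcal{M}$. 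By Theorem~\ref{thm:convergence} the only fixed point of $\mathcal{M}$ is the generator state $\rho_{gen}$, i.e.\ the unique solution of the tomography problem, so $\rho_{1}=\rho_{gen}$. Consequently the stopping test $\mathfrak{D}(\rho,\rho_{old})\leq\epsilon$ is met immediately after the first evaluation of $\mathcal{M}$, and since $\rho_{gen}\in\mathcal{B}(\mathcal{H})$ the final $\argmin$ step returns $\rho_{gen}$ itself. It thus remains only to prove that $\mathcal{M}$ is idempotent in each case.

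For this I would use the elementary fact that a composition of finitely many pairwise commuting idempotent self-maps is again idempotent: for two such maps $(F\circ G)\circ(F\circ G)=F\circ(G\circ F)\circ G=F\circ F\circ G\circ G=F\circ G$, and the general case follows by induction on the number of maps, after noting that an idempotent $F_{n}$ commuting with each of $F_{1},\dots,F_{n-1}$ also commutes with $F_{n-1}\circ\cdots\circ F_{1}$. So it suffices to check, for each family, that the factors of $\mathcal{M}$ are idempotent and pairwise commute.

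Consider first the product of generalized Pauli operators. Here each ``measurement'' is a single operator $E_{i}$, so $\mathcal{M}=T_{E_{m}}\circ\cdots\circ T_{E_{1}}$ is a composition of single physical imposition operators; each $T_{E_{i}}$ is idempotent by \eqref{projection}, and they pairwise commute because the composition formula of item~\emph{2} in Prop.~\ref{prop:pioproperties} is symmetric under $i\leftrightarrow j$ for the orthogonal family of Pauli strings, as observed just before the statement. (Although this family is not a POVM, it is an informationally complete set of observables, so Theorem~\ref{thm:convergence} applies with $\rho_{gen}$ the unique density matrix reproducing all expectations $\mathrm{Tr}(\rho E_{i})$.) For $d+1$ mutually unbiased bases each $A_{k}$ is a PVM, so $\mathcal{T}_{A_{k}}=T_{E_{m_{k}}^{(k)}}\circ\cdots\circ T_{E_{1}^{(k)}}$; within a single PVM one has $\mathrm{Tr}(E_{i}^{(k)}E_{j}^{(k)})=\delta_{ij}$, so item~\emph{2} of Prop.~\ref{prop:pioproperties} shows these $T_{E_{i}^{(k)}}$ pairwise commute, and each is idempotent by \eqref{projection}; hence $\mathcal{T}_{A_{k}}$ is itself idempotent (equivalently, one may square the closed form \eqref{piopvm}). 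Distinct $\mathcal{T}_{A_{j}},\mathcal{T}_{A_{k}}$ commute by Proposition~\ref{MUBcommute}. In both cases the factors of $\mathcal{M}$ are therefore commuting idempotents, $\mathcal{M}$ is idempotent, and the reduction of the first paragraph finishes the proof; existence of $d+1$ MUB in prime power dimension (Wootters and Fields, cited above) guarantees $\rho_{gen}$ is well defined.

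There is no serious obstacle here: the only points needing a little care are verifying idempotency of a \emph{full} PVM imposition $\mathcal{T}_{A_{k}}$ --- not just of a single-outcome $T_{E_{i}}$, which is already \eqref{projection} --- and stating the ``commuting $+$ idempotent $\Rightarrow$ composition idempotent'' lemma for an arbitrary number of factors, a one-line induction. Everything else is assembling Propositions~\ref{prop:pioproperties} and~\ref{MUBcommute} with Theorem~\ref{thm:convergence}.
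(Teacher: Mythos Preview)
Your proof is correct and follows essentially the same route as the paper: both arguments use that the individual imposition operators are idempotent (\eqref{projection}) and pairwise commute (orthogonality of Pauli strings via item~\emph{2} of Prop.~\ref{prop:pioproperties}, Prop.~\ref{MUBcommute} for MUB), conclude that the full map $\mathcal{M}$ is idempotent so the Picard sequence stabilizes at $\rho_{1}$, and then invoke Theorem~\ref{thm:convergence} to identify $\rho_{1}=\rho_{gen}$. The paper phrases the last step as ``$\rho_{n}\to\rho_{gen}$ together with $\rho_{n}=\rho_{1}$ forces $\rho_{1}=\rho_{gen}$'' rather than ``$\rho_{1}$ is the unique fixed point'', but these are equivalent; your explicit isolation of the commuting-idempotents lemma and the verification that the full PVM map $\mathcal{T}_{A_{k}}$ is itself idempotent are points the paper leaves implicit.
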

\begin{proof}
For generalized Pauli operators, commutativity of imposition operators comes from orthogonality condition $\mathrm{Tr}(E_iE_j)$, see item \emph{2} in Prop. \ref{prop:pioproperties}. Thus, we have
\begin{eqnarray}\label{seq}
\rho_n&=&(T_{E_{d^2}}\circ\cdots\circ T_{E_1})^n(\rho_0),\nonumber\\
&=&T^n_{E_{d^2}}\circ\cdots\circ T^n_{E_1}(\rho_0),\nonumber\\
&=&T_{E_{d^2}}\circ\cdots\circ T_{E_1}(\rho_0),
\end{eqnarray}
where the second step considers commutativity and the last step the fact that every $T_j$, $j=1,\dots,d+1$ is a projection. On the other hand, from Theorem \ref{thm:convergence} we know that $\rho_n\rightarrow\rho_{gen}$ when $n\rightarrow\infty$, for any generator state $\rho_{gen}$. From combining this result with (\ref{seq}) we have 
\begin{equation}
T_{E_{d^2}}\circ\cdots\circ T_{E_1}(\rho_0)=\rho_{gen},
\end{equation}
for any seed $\rho_0$ and any generator state $\rho_{gen}$, in any prime power dimension $d$. For MUB the result holds in the same way, where commutativity between the associated imposition operators associated to every PVM  arises from, see Prop. \ref{MUBcommute}. 
\end{proof}
We observe from simulations that the speedup predicted by Prop. \ref{prop_singlestep} has no consequences in the reconstruction fidelity of our method, which is actually higher than the one provided by MLE.

\vspace{-2.7mm}
\section{Numerical study}\label{sec:simulations}
Theoretical developments from Sections \ref{sec:pio} and \ref{sec:algorithm} apply to the ideal case of error free probabilities coming from an exact generator state $\rho_{gen}$. In practice, probabilities are estimated from frequencies, carrying errors due to finite statistics. Moreover, the states being prepared in each repetition of the experiment are affected by unavoidable systematic errors. These sources of errors imply that the output of Algorithm \ref{alg:pio1} is typically outside the set of quantum states when considering experimental data. We cope with this situation by finding the closest quantum state to the output, called $\rho_{est}$ in Hilbert-Schmidt (a.k.a. Frobenius) distance , for which there are closed-form expressions \cite{Guta2020}. 
In the following, we provide numerical evidence for robustness of our method in the finite-statistics regime with white noise affecting the generator states, i.e. errors at the preparation and measurement stages. That is, we consider noisy states of the form $\tilde{\rho}(\lambda)=(1-\lambda)\rho+\lambda\mathbb{I}/d$, where $\lambda$ quantifies the amount of errors. We understand there are more sophisticated techniques to consider errors, e.g. ill-conditioned measurement matrices \cite{Bolduc2017}. Nonetheless, we believe the consideration of another model to simulate a small amount of errors would not substantially change the exhibited results.
We reconstruct the state for $N$-qubit systems with $1\leq N\leq 8$, by considering the following sets of measurements: a) Mutually unbiased bases, b) Tensor product of local Pauli bases and c) A set of $d+1$ informationally complete bases taken at random with Haar distribution. The last case does not have a physical relevance but illustrates the performance of our algorithm for a set of measurements defined in an unbiased way. As a benchmark, we compare the performance of our method with the conjugate gradient, accelerated-gradient-descent (CG-AGP) implementation of Maximum Likelihood Estimation (MLE) \cite{Superfast2007}. Computations were conducted on an Intel core i5-8265U laptop with 8gb RAM. For the CG-AGP algorithm, we used the implementation provided by authors of Ref. \cite{Superfast2007}, see Ref. \cite{qMLE_gitub}. We provide an implementation of our Algorithm \ref{alg:pio1} in Python \cite{pio_gitub}, see Appendix  \hyperref[app:A]{A}, together with the code to run the simulations presented in the current section.

\subsection{Mutually unbiased bases}\label{sec:simulations-MUB}
Figure \ref{fig:MUB} shows performance of Algorithm \ref{alg:pio1} in the reconstruction of $N$-qubit density matrices from the statistics of a maximal set of $2^N+1$ MUBs. We consider a generator state $\rho_{gen}$ in dimension $d$, taken at random according to the Haar measure distribution, with the addition of a $10\%$ level of white noise, i.e.
$\tilde{\rho}(\lambda)=(1-\lambda)\rho_{gen}+\lambda\mathbb{I}/2^N$, with $\lambda=0.1$. Here, it is important to remark that fidelities are compared with respect to the generator state $\rho_{gen}$, so that the additional white noise reflects the presence of systematic errors in the state preparation process. Probabilities are estimated from frequencies, i.e. $f_j=\mathcal{N}_j/\mathcal{N}$ with $\mathcal{N}_j$ the number of counts for outcome $j$ of some POVM and $\mathcal{N}=\sum_j \mathcal{N}_j$ the total number of counts. Our simulations consider $\mathcal{N}=100\times 2^N$ samples per measurement basis. Our figure of merit is the fidelity $F(\rho_n,\rho_{gen})=\mathrm{Tr}{[\sqrt{\sqrt{\rho_{gen}}\rho_n\sqrt{\rho_{gen}}}]}^2$ between the reconstructed state after $n$ iterations $\rho_n$ and the generator state $\rho_{gen}$. Runtime of the algorithm is averaged over 50 independent runs, each of them considering a generator state $\rho_{gen}$ chosen at random according to the Haar measure.

\begin{figure}[h!]
    \centering
     \subfloat[\label{qst_performance_a}]{%
       \includegraphics[scale=0.85]{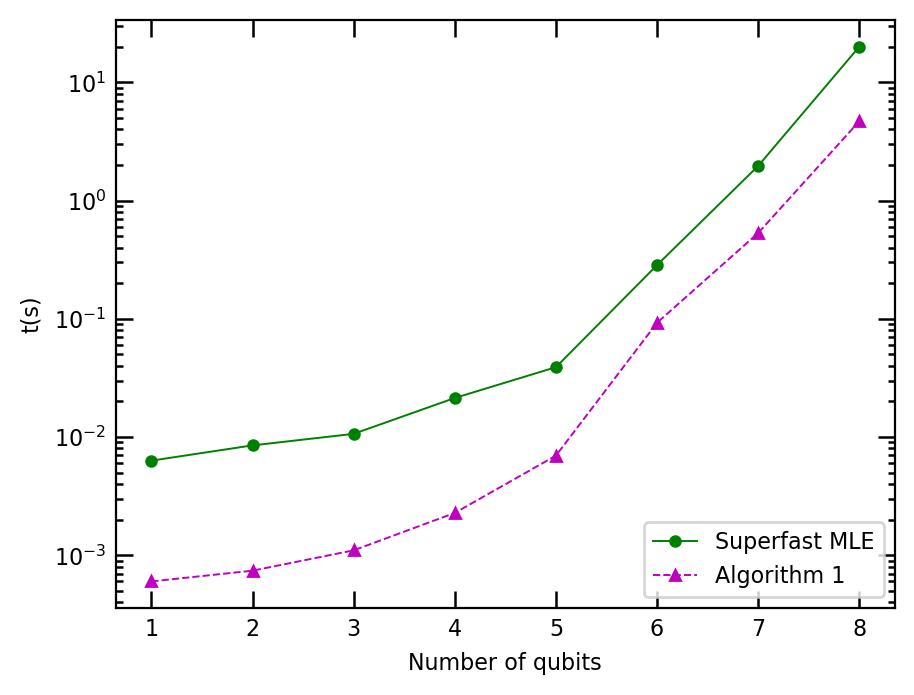}
     }\hspace{-0.15cm}
     \subfloat[\label{qst_performance_b}]{%
       \includegraphics[scale=0.85]{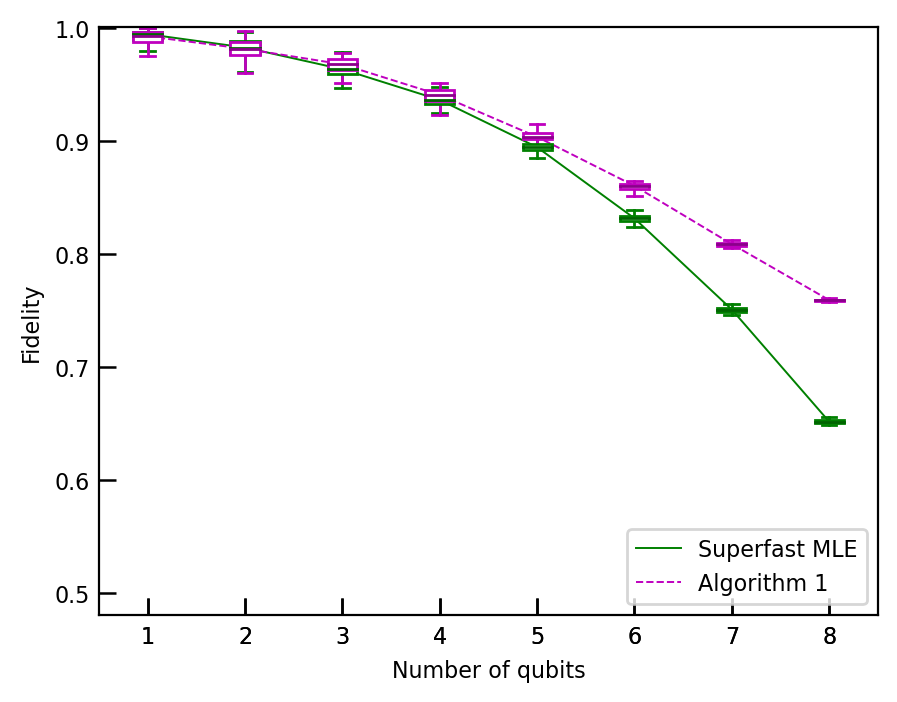}
     }
     \caption[QSE: Performance of Algorithm \ref{alg:pio1} for $d+1=2^N+1$ MUB.]{Performance of Algorithm \ref{alg:pio1} and the CG-AGP Superfast MLE method from \cite{Superfast2007}, for the reconstruction of $N$-qubit states from a maximal set of $d+1=2^N+1$ mutually unbiased basis (MUB) in dimension $d=2^N$. Generator state $\rho$ is chosen at random by considering the Haar measure distribution, subjected to $10\%$ of white noise and finite statistics satisfying Poissonian distribution. For simulations, we consider  $100\times2^N$ samples. Fig. \ref{qst_performance_a} considers runtime of the algorithm in seconds, averaged over 50 trials, whereas \ref{qst_performance_b} shows bloxplots for the 50 trials datasets of the fidelity between the target and obtained state, the lines cross the medians. It is worth to mention that our simulations were performed in Python whereas Ref. \cite{Superfast2007} considers Matlab.}
     \label{fig:MUB}
\end{figure}

Table \ref{tab:fidelitiesMubs} shows errors in the fidelity of both algorithm \ref{alg:pio1} and the Superfast algorithm for different number of qubits. To determine the standard errors we used the \textit{bootstrapping} technique \cite{EfroTibs93}.
Here, we apply 50 trials of Monte Carlo simulations using the bootstrap countings, which are distributed according to the probability distributions obtained from $\tilde{\rho}(\lambda)$. Then, the state is estimated from the simulated countings and the fidelity with respect to $\rho_{gen}$ is calculated. 

\begin{table}[h!]
\centering
    \caption{Errors in the estimation of both algorithm 1 and the Superfast algorithm when considering MUB. $\Bar{F}$ and $\delta F$ are the mean value and the standard error of the fidelity, respectively.}
   \vspace{-6mm}
\begin{center}
\begin{tabular}{|| c | c | c ||} 
 \hline
  Number of qubits & $\Bar{F} \pm \delta F$ (Algorithm 1) & $\Bar{F} \pm \delta F$ (Superfast) \\ [0.5ex] 
\hline\hline
1 & $0.9898 \pm 0.0009$ & $0.9921 \pm 0.0007$ \\
2 & $0.9656 \pm 0.0014$ & $0.9703 \pm 0.0018$ \\
3 & $0.9544 \pm 0.0012$ & $0.9617 \pm 0.0011$ \\
4 & $0.9004 \pm 0.0010$ & $0.9361 \pm 0.0010$ \\
5 & $0.8611 \pm 0.0007$ & $0.8949 \pm 0.0006$ \\
6 & $0.8062 \pm 0.0005$ & $0.8340 \pm 0.0005$ \\
7 & $0.7587 \pm 0.0003$ & $0.7505 \pm 0.0004$ \\
8 & $0.7201 \pm 0.0001$ & $0.6504 \pm 0.0002$ \\
[1ex] 
 \hline
\end{tabular}
\end{center}
\label{tab:fidelitiesMubs}
\end{table}

\clearpage

\subsection{$N$-qubit Pauli bases}\label{sec:simulations-pauli}
Here, we consider the reconstruction of $N$-qubit density matrices from the $3^N$ PVMs determined by all possible combinations of the the tensor product of single qubit Pauli eigenbases, for $N=1,\dots,8$, which are informationally complete. Similarly to the case of MUBs, Picard sequences $\rho_n=T_{Pauli}^n(\rho_0)$ converge in a single iteration when product of Pauli measurements are considered, for any generator state $\rho_{gen}$ and any initial state $\rho_0$. Figure \ref{fig:pauli} shows performance of a single iteration of these Picard sequences, where the generator state $\rho_{gen}$ is taken at random, according to the Haar measure. Algorithm CG-AGP exploits the product structure of the $N$-qubit Pauli bases to speedup its most computationally expensive part: the computation of the probabilities given by the successive estimates in the MLE optimization. It does so by working with reduced density matrices which, in turn, imply an efficient use of memory. In order to have a fair comparison with our method, we decided to include the time to compute the $N$-qubit observables from the single Pauli observables in the total runtime of our algorithm. In practice, however, one would preload them into memory, as they are, of course, not a function of the input, i.e. of the observed probabilities.

Errors in the fidelity of both algorithm \ref{alg:pio1} and the Superfast, when using Pauli bases, are shown in table \ref{tab:fidelitiesPaulis}. The errors were determined using the bootstrapping technique (see section \ref{sec:simulations-MUB} ).

\begin{table}[h!]
\centering
   \caption[Error analysis]{ Errors in the estimation of both algorithm 1 and the Superfast algorithm when considering Pauli bases. $\Bar{F}$ and $\delta F$ are the mean value and the standard error of the fidelity, respectively}
   \vspace{-5mm}
\begin{center}
\begin{tabular}{|| c | c | c ||} 
 \hline
  Number of qubits & $\Bar{F} \pm \delta F$ (Algorithm 1) & $\Bar{F} \pm \delta F$ (Superfast) \\ [0.5ex] 
\hline\hline
1 & $0.9761 \pm 0.0022$ & $0.9783 \pm 0.0021$ \\
2 & $0.9792 \pm 0.0010$ & $0.9871 \pm 0.0008$ \\
3 & $0.9528 \pm 0.0010$ & $0.9737 \pm 0.0006$ \\
4 & $0.9246 \pm 0.0007$ & $0.9632 \pm 0.0007$ \\
5 & $0.8913 \pm 0.0004$ & $0.9476 \pm 0.0005$ \\
6 & $0.8524 \pm 0.0002$ & $0.9270 \pm 0.0003$ \\
7 & $0.8066 \pm 0.0001$ & $0.9017 \pm 0.0002$ \\
8 & $0.7712 \pm 0.0001$ & $0.8708 \pm 0.0001$ \\
[1ex] 
 \hline
\end{tabular}
\end{center}
\label{tab:fidelitiesPaulis}
\end{table}

\begin{figure}[h!]
    \centering
     \subfloat[\label{fig:pauli-a}]{%
       \includegraphics[scale=0.85]{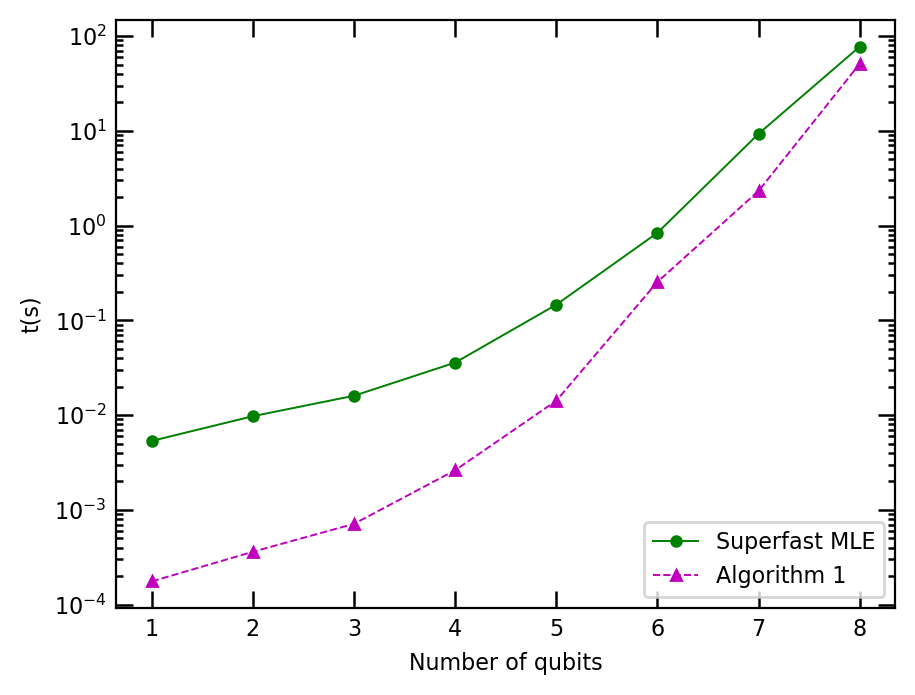}
     }\hspace{-0.15cm}
     \subfloat[\label{fig:pauli-b}]{%
       \includegraphics[scale=0.85]{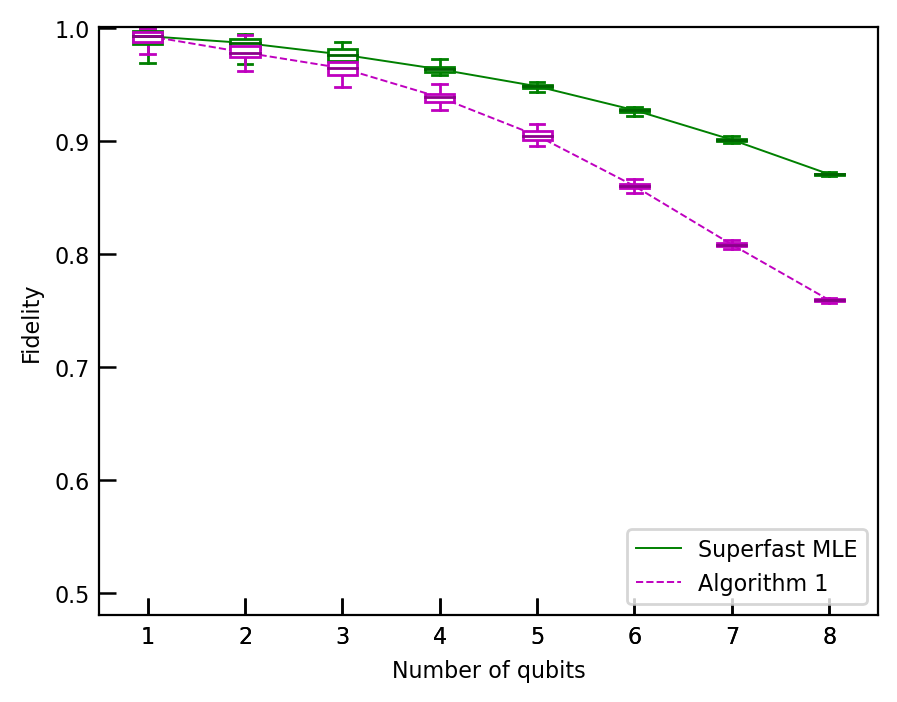}
     }
     \caption[QSE: Performance of Algorithm \ref{alg:pio1} for local Pauli observables]{Performance of Algorithm \ref{alg:pio1} and the CG-AGP Superfast MLE \cite{Superfast2007}, for the reconstruction of $N$-qubit states from $3^N$ PVM given by products of the eigenbases of local Pauli observables $\sigma_X$, $\sigma_Y$ and $\sigma_Z$. Generator states $\rho$ are chosen at random (Haar measure), subjected to $10\%$ of white noise and finite statistics satisfying Poissonian distribution, considering  $100\times2^N$ samples per PVM. Fig. \ref{fig:pauli-a} considers runtime of the algorithm in seconds, whereas \ref{fig:pauli-b} shows bloxplots for the 50 trials datasets of the fidelity between the target and obtained state, the lines cross the medians. We consider simulations in Python, whereas Ref. \cite{Superfast2007} considers Matlab.}
     \label{fig:pauli}
\end{figure}



\clearpage
\subsection{Random measurements for $N$-qubit systems}\label{sec:simulations-random}
The simulations in the preceding subsections correspond to informationally complete sets of measurements for which Algorithm \ref{alg:pio1} converges in a single iteration. To test whether the advantage over \cite{Superfast2007} hinges critically on this fact, we have numerically tested our algorithm with sets of PVMs selected at random, with respect to the Haar measure. In Fig. \ref{fig:random-bases} we show that in this case, the advantage in fidelity increases substantially, compared to Figs. \ref{fig:MUB} and \ref{fig:pauli}. Table \ref{tab:fidelitiesPaulis} shows the errors in the fidelity of both algorithm \ref{alg:pio1} and the Superfast for the case of random measurements.

\vspace{-1.5mm}
\begin{table}[h!]
\centering
   \caption[Error analysis]{ Errors in the estimation of both algorithm 1 and the Superfast algorithm when considering random bases. $\Bar{F}$ and $\delta F$ are the mean value and the standard error of the fidelity, respectively.}
   \vspace{-5mm}
\begin{center}
\begin{tabular}{|| c | c | c ||} 
 \hline
  Number of qubits & $\Bar{F} \pm \delta F$ (Algorithm 1) & $\Bar{F} \pm \delta F$ (Superfast) \\ [0.5ex] 
\hline\hline
1 & $0.9744 \pm 0.0017$ & $0.9871 \pm 0.0017$ \\
2 & $0.9050 \pm 0.0021$ & $0.9037 \pm 0.0082$ \\
3 & $0.8641 \pm 0.0016$ & $0.8793 \pm 0.0050$ \\
4 & $0.8719 \pm 0.0010$ & $0.8459 \pm 0.0019$ \\
5 & $0.8391 \pm 0.0007$ & $0.7973 \pm 0.0014$ \\
6 & $0.7997 \pm 0.0003$ & $0.7352 \pm 0.0007$ \\
7 & $0.7628 \pm 0.0002$ & $0.6605 \pm 0.0005$ \\
8 & $0.7331 \pm 0.0001$ & $0.5790 \pm 0.0002$ \\
[1ex] 
 \hline
\end{tabular}
\end{center}
\label{tab:fidelitiesRandom}
\end{table}

\vspace{-5mm}
Finally, we would like to mention the \emph{Projective Least Squares} (PLS) quantum state reconstruction \cite{Guta2020}. This method outperforms both in runtime and fidelity our Algorithm \ref{alg:pio1}. This occurs when the linear inversion procedure required by the method \emph{is not} solved but taken from analytically existing reconstruction formula. Existing inversion formulas are known for complex projective 2-designs, measurement composed by stabilizer states, Pauli observables and uniform/covariant POVM, see \cite{Guta2020}. However, when taking into account the cost of solving the linear inversion procedure, our method has a considerable advantage over PLS. For instance, PLS does not have such efficient speed up for a number of physically relevant observables for which there is no explicit inversion known, including the following cases: a) discrete Wigner functions reconstruction for arbitrary dimensional bosons and fermions quantum systems from discrete quadratures, that can be treated as observables by considering Ramsey techniques \cite{Leonhardt1995}, b) reconstruction of single quantized cavity mode from magnetic dipole measurements with Stern-Gerlach aparatus \cite{Walser1996}, c) minimal 




\begin{figure}[h!]
    \centering
     \subfloat[\label{fig:random-a}]{%
       \includegraphics[scale=0.85]{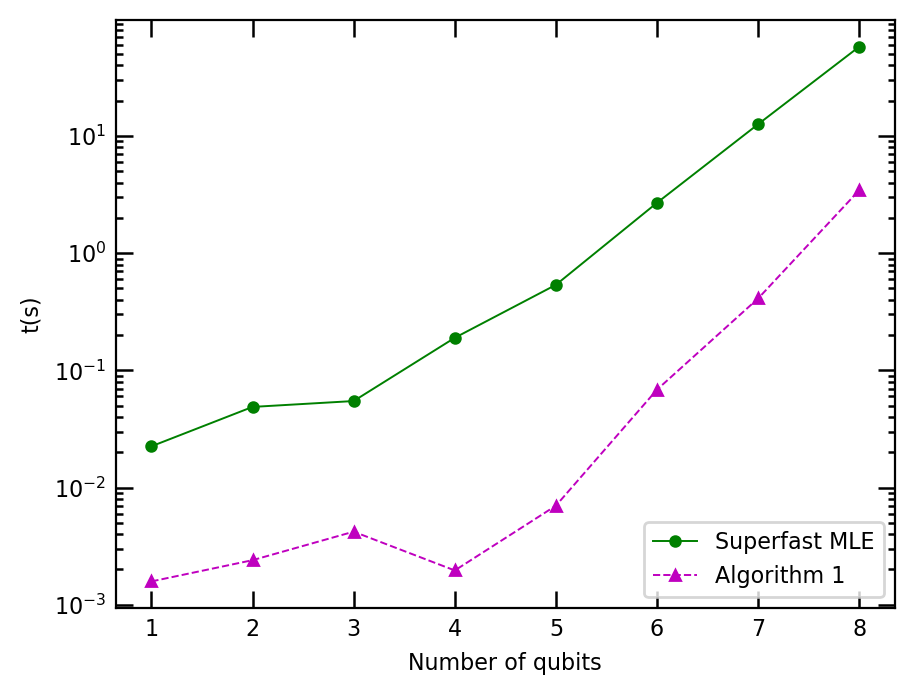}
     }\hspace{-0.15cm}
     \subfloat[\label{fig:random-b}]{%
       \includegraphics[scale=0.85]{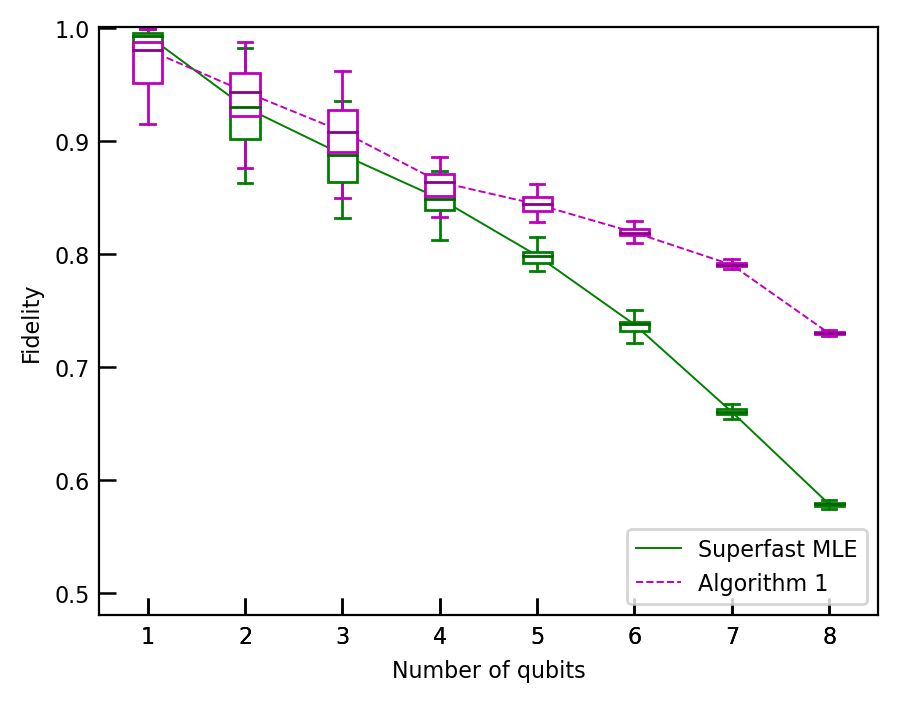}
     }
     \caption[QSE: Performance of Algorithm \ref{alg:pio1} for $d+1=2^N+1$ basis chosen Haar-random]{Performance of Algorithm \ref{alg:pio1} and the CG-AGP Superfast MLE method from \cite{Superfast2007} for the reconstruction of $N$-qubit states from a a set of $d+1=2^N+1$ basis chosen Haar-random in dimension $d=2^N$. Generator state $\rho$ is chosen under the Hilbert-Schmidt measure, subjected to $10\%$ of white noise. Measurement statistics are estimated from $\mathcal{N}=100\times2
    ^N$ identical copies. Fig. \ref{fig:random-a} considers the runtime of the algorithm in seconds, averaged over 50 trials, whereas \ref{fig:random-b} shows bloxplots for the 50 trials datasets of the fidelity between the target and obtained state, the lines cross the medians.}
     \label{fig:random-bases}
\end{figure}

\clearpage
\noindent state reconstruction of $d$-dimensional quantum systems from POVM consisting on $d^2$ elements, inequivalent to SIC-POVM \cite{Weigert2006}, d) spin $s$ density matrix state reconstruction from Stern-Gerlach measurements \cite{Amiet_1999}, e) Quantum state tomography for multiparticle spin $1/2$ systems \cite{D_Ariano_2003}, neither reduced to mutually unbiased bases nor local Pauli measurements.

For both algorithm 1 and the CG-AGP algorithm, considering any of the measurements studied in this section, fidelity decreases when the amount of white noise is increased, as expected.

\section{Discussion and conclusions}\label{sec:conclusions}
We introduced an iterative method for quantum state estimation of density matrices from any informationally complete set of quantum measurements in any finite dimensional Hilbert space. We demonstrated convergence to the unique solution for any informationally complete or overcomplete set of POVMs, see Theorem \ref{thm:convergence}. The method, based on dynamical systems theory, exhibited a simple and intuitive geometrical interpretation in the Bloch sphere for a single qubit system, see Figs. \ref{orthogonal_projection} and \ref{Fig2}. Algorithm \ref{alg:pio1} revealed a fast convergence for a wide class of measurements, including mutually unbiased bases and tensor product of generalized Pauli observables for an arbitrary large number of particles having $d$ internal levels. Furthermore, numerical simulations revealed strong robustness under the presence of realistic errors in both state preparation and measurement stages, see Figs. \ref{fig:MUB} to \ref{fig:random-bases}. We provided an easy to use code developed in Python to implement our algorithm, see \cite{qMLE_gitub}.

As interesting future lines of research, we pose the following list of open issues: (\emph{i}) Find an upper bound for fidelity reconstruction of Algorithm \ref{alg:pio1} as a function of errors and number of iterations; (\emph{ii}) Characterize the full set of quantum measurements for which Algorithm \ref{alg:pio1} converges in a single iteration; (\emph{iii}) Extend our method to quantum process tomography.\bigskip

    \chapter{Quantum non-locality}\label{cap3}
    Certification of quantum nonlocality plays a central role in practical applications like device-independent quantum cryptography and random number generation protocols. In this chapter, we introduce a technique to find a Bell inequality with the largest possible gap between the quantum prediction and the classical local hidden variable limit for a given set of  measurement frequencies. Our method represents an efficient strategy to certify quantum nonlocal correlations from experimental data without requiring extra measurements, in the sense that there is no Bell inequality with a larger gap than the one resulting from our method. Furthermore, we also reduce the photodetector efficiency required to close the detection loophole. We illustrate our technique by improving the detection of quantum nonlocality from experimental data obtained with weakly entangled photons.

\section{Introduction} 
Quantum nonlocality plays a fundamental role in flourishing quantum technologies, such as device \cite{Mayers1998,Acin2007} and semi-device \cite{Pawlowski2011} independent quantum cryptography, device-independent quantum secure direct communication against collective attacks \cite{Zhou2020}, and genuine random number generation \cite{Pironio2010}, as well as fundamental aspects of quantum physics. In these applications, certification of nonlocality is typically required. In the ideal case, for any set of nonlocal correlations, there exists a Bell inequality that is violated. However,  certification of nonlocality can be hard to achieve in practice due to the presence of experimental errors. This is especially true when the optimal quantum state, i.e. the state producing the maximal violation of a given Bell inequality, is weakly entangled \cite{Acin2012}. This problem plays a relevant role even when considering tight Bell inequalities, as also these inequalities might be maximally violated by partially entangled quantum states \cite{Collins2002,Vidick2011}.  Tight Bell inequalities are particularly useful as they are known to maximize the randomness that can be certified in a Bell scenario. For instance, a recent experiment deal with quantum nonlocality certification by using near-ideal two-qubit states for weakly entangled quantum systems \cite{Gomez2019}.  

Bell inequalities can be used to certify that a set of correlations cannot be described by a local hidden variable (LHV) model. Some bipartite Bell inequalities can have a large ratio between the quantum and LHV limits, equal to  $\sqrt{n}/\log{n}$, for $n$ settings and $n$ outputs in $n$ dimensional Hilbert spaces 
\cite{Junge2011LargeVO}. From the experimental perspective, a larger theoretical violation increases the chance to certify quantum nonlocality in the laboratory. Nonetheless, sometimes experiments are not conclusive to certify nonlocality. Under such situation, one can simply choose another Bell inequality with a larger gap between the LHV and quantum values, thus increasing the chances for success. However, this change typically involve a new experimental implementation, as the optimal settings of the new Bell inequality most likely differ from the original one. This procedure consumes additional time and resources in the laboratory. Thus, a fundamental question arises:
\begin{quote}
\emph{Can we certify quantum nonlocality from experimental data that previously failed to violate a target Bell inequality?}    
\end{quote}
To start studying the problem, an asymptotically optimal analysis can be done to reject local realism of a given statistical data \cite{Zhang2011,Zhang2013,Liang2019}. In this work, we find necessary and sufficient conditions to provide a conclusive answer to the above question, for any bipartite scenario. In addition, we present an optimization method that finds a Bell inequality that maximizes the chances to detect quantum nonlocality, among the entire set of inequalities of a given scenario. This method is particularly useful to certify nonlocality when considering weakly entangled quantum states. For instance, we successfully certify nonlocality for quantum states having smaller concurrence than those studied in a recent work \cite{Christensen2015}. Our technique finds a wide range of practical applications including communication complexity problems, where the advantage in communication is an increasing function of the quantum-LHV value gap \cite{Brukner2004,Tavakoli2020}.

\section{Method}\label{sec:optimization}
In this section, we introduce a method that provides the largest possible gap between the quantum and LHV predictions for any given set of experimental data. In particular, this procedure allows us to determine whether a set of experimental data is genuinely nonlocal or not, i.e. whether there is a Bell inequality that can certify quantum nonlocality from the noisy data. Our method represents an efficient certification of nonlocal correlations, that can be applied to experimental data without requiring extra measurements. In other words, we produce a Bell inequality that maximizes the chances to detect quantum nonlocality from a given set of statistical data among the entire set of Bell inequalities.

Let us express Eq. \eqref{BellInequality} in the following form
\begin{equation}\label{Bellineq}
\sum_{x,y=0}^{m-1}\sum_{a,b=0}^{d-1} s^{ab}_{xy}\,p(a,b|x,y)\leq\mathcal{C},
\end{equation}
where $p(a,b|x,y)$ is the probability of obtaining outcomes $a,b\in\{0,\dots,d-1\}$ when inputs $x,y\in\{0,\dots,m-1\}$ are chosen by two observers, Alice and Bob, respectively.  Here, $\mathcal{C}$ denotes the maximal value of the left-hand side in \eqref{Bellineq} that can be achieved by considering \emph{local hidden variable} (LHV) theories, whereas quantum mechanics might predict a violation of this inequality  \cite{bellTheorem}. Without loss of generality, we can restrict our attention to parameters within the set $-1\leq s^{ab}_{xy}\leq1$, for every $a,b=0,\dots,d-1$ and $x,y=0,\dots,m-1$. In order to obtain the LHV value $\mathcal{C}$, we have to optimize the left hand side in \eqref{Bellineq} over all possible local deterministic strategies. Here, locality means statistical independence of simultaneous and distant events, i.e. $p(a,b|x,y)=p(a|x)p(b|y)$, and deterministic means that all probabilities take values $0$ or $1$ only, restricted to the normalization conditions.

Quantum joint probability distributions satisfy the no-signaling principle, i.e. information cannot be instantaneously transmitted between distant parties. In particular, the outcome of one party cannot reveal information about the input of the other. That is expressed in the no-signaling conditions \eqref{no-signaling}, that we rewrite in the following form
\begin{equation}\label{ns1}
\sum_{b=0}^{d-1}p(a,b|x,y)=\sum_{b=0}^{d-1}p(a,b|x,y')=:p_A(a|x),
\end{equation}
and
\begin{equation}\label{ns2}
\sum_{a=0}^{d-1}p(a,b|x,y)=\sum_{a=0}^{d-1}p(a,b|x',y)=:p_B(b|y),
\end{equation}
for every $x\neq x'$ and $y\neq y'$, where $p_A(a|x)$ and $p_B(b|y)$ are the marginal probability distributions associated to Alice and Bob, respectively.\\
\indent Let us now consider a set of relative frequencies  $f(a,b|x,y)$ of occurrence for outcomes $a,b$ when $x,y$ is measured by Alice and Bob, respectively, obtained from experimental data. The no-signaling constraints \eqref{ns1} and \eqref{ns2} are not exactly satisfied for experimental data. However, they can be recovered by minimizing the Kullback-Leible divergence \cite{Lin2018}:
\begin{equation}\label{KLdivergence}
D_{KL}(\vec{f}||\vec{P})=\sum_{a,b,x,y}f(x,y)f(a,b|x,y)\log_2\left[\frac{f(a,b|x,y)}{p(a,b|x,y)}\right],
\end{equation}
where $f(x,y)$ is the relative frequency of implementing a measurement $x$ by Alice and $y$ by Bob, and $p(a,b|x,y)$ the optimization variables, consisting of a joint probability distribution within the framework of quantum mechanics. The minimization procedure \eqref{KLdivergence} is equivalent to maximizing the likelihood of producing the observed frequency $p(a,b|x,y)$, see Appendix D1 in \cite{Lin2018}.

The quantum prediction of a Bell inequality (\ref{Bellineq}), defined by coefficients $s^{ab}_{xy}$, is given by  
\begin{equation}\label{Qexp}
\mathcal{Q} =\sum_{x,y=0}^{m-1}\sum_{a,b=0}^{d-1} s^{ab}_{xy}\,p(a,b|x,y),
\end{equation}
having associated an error propagation $\Delta\mathcal{Q}$; see section \ref{error_prop} for a detailed treatment of errors. An experimentally obtained probability distribution $p(a,b|x,y)$, associated to errors $\Delta p(a,b|x,y)$, is certainty nonlocal if $\mathcal{Q}-\mathcal{C}>\Delta\mathcal{Q}$, for a given Bell inequality. However, sometimes quantum nonlocality cannot be revealed due to the relatively high amount of errors. This especially occurs when a weakly entangled quantum state produces the maximal violation of the inequality, where the gap between the lhv value and the maximal violation is very small. Under such situation, the method introduced here provides a new Bell inequality that increases the chances to prove quantum nonlocality for a given set of probability distributions $p(a,b|x,y)$, associated to experimental errors $\Delta p(a,b|x,y)$. The method consists in solving the following nonlinear problem:
\begin{equation}\label{max}
R=\max_{s}\frac{\mathcal{Q}-\Delta \mathcal{Q}+dm}{\mathcal{C}+dm},
\end{equation}
for a fixed set of statistical data, where the optimization is implemented over all coefficients $s^{ab}_{xy}$ defining a Bell inequality \eqref{Bellineq}. The shifting factor $dm$ introduced in \eqref{max} avoids divergence of the function $R$; otherwise, the output inequality would be any such that the LHV value $\mathcal{C}$ vanishes. Optimization \eqref{max} is typically hard to implement, due to the presence of a large amount of local maximum values. 

To solve this problem, we implement the \emph{Sequential Least Squares Programming} (SLSQP) \cite{kraft1988} algorithm, using routines from the scientific library (Scipy) of the Python Programming Language \cite{scipy}. SLSQP is an efficient method to numerically solve constrained nonlinear optimization problems with bounds, well suited to solve the following problem
\begin{equation}
\begin{tabular}{ccc}
\text{maximize}   & $R(\vec{s}),$  &  \text{with } $\vec{s} = \{ s^{ab}_{xy} \}$ \\
\text{subject to}   &  $-1 \leq s^{ab}_{xy} \leq 1$. &  \\
\end{tabular}    \nonumber
\end{equation}
Our strategy consists in running this optimization for a given number of trials. In the first trial, a random seed real vector $\vec{x}_0$, with entries taken in the range $[-1,1]$, is given to the routine. After the first optimization procedure the algorithm outputs a vector $\vec{s}_0$. Then, the seed for the second trial is taken as $\vec{x}_1 = (\vec{s}_0 + \vec{x}_0)/2$. We update the seed as $\vec{x}_{k+1} = (\vec{s}_{k} + \vec{x}_{k})/2$ during the trials and keep the output $\vec{s}_{k} = \vec{s}$ for which $R(\vec{s})$ is the greatest. The solution to the optimization problem posed above is stored in vector $\vec{s}$. This approach is highly efficient to avoid getting stuck in local maximum values, although there is no way to certify that the reached solution is the global maximum. The code for this optimization procedure was written in Python, see Appendix  \hyperref[app:B]{B}, and is available at GitHub \cite{code_nlc}.\\

\begin{prop}\label{prop1}
A Bell inequality having LHV value $\mathcal{C}$, for which a quantum value $\mathcal{Q}$ is achieved with errors $\Delta\mathcal{Q}$, is genuinely nonlocal if and only if $R>1$ in Eq.(\ref{max}).
\end{prop}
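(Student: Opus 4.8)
The plan is to show that the optimised ratio in \eqref{max} exceeds $1$ \emph{exactly} when the data admits \emph{some} Bell inequality whose quantum value, after subtracting its propagated error, still overshoots the local bound — which is precisely the meaning of the data being genuinely nonlocal. So I would first pin down notation: for a coefficient vector $\vec{s}=\{s^{ab}_{xy}\}$, write $\mathcal{Q}(\vec{s})$ for the quantum value \eqref{Qexp}, $\mathcal{C}(\vec{s})$ for the associated LHV bound appearing in \eqref{Bellineq}, and $\Delta\mathcal{Q}(\vec{s})$ for the error propagation discussed in Section \ref{error_prop}. In this language, ``genuinely nonlocal'' means: there is an admissible $\vec{s}\in[-1,1]^{m^2d^2}$ with $\mathcal{Q}(\vec{s})-\Delta\mathcal{Q}(\vec{s})>\mathcal{C}(\vec{s})$, i.e. the certifiability criterion $\mathcal{Q}-\mathcal{C}>\Delta\mathcal{Q}$ holds for at least one inequality (equivalently, by a separating‑hyperplane argument, the error ball around the observed distribution leaves the local polytope $\mathcal{L}$).

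Next I would record the structural facts that make the equivalence go through. The maps $\mathcal{Q}$, $\mathcal{C}$ and $\Delta\mathcal{Q}$ are continuous in $\vec{s}$ ($\mathcal{C}$ being a finite maximum of linear functions over deterministic local strategies) on the compact cube $[-1,1]^{m^2d^2}$, so the supremum in \eqref{max} is attained at some maximiser $\vec{s}^{\,*}$. The essential point is that the shifting constant $dm$ is chosen so that the denominator $\mathcal{C}(\vec{s})+dm$ is strictly positive for every admissible $\vec{s}$ — this uses that each single‑setting block $\sum_{a,b}s^{ab}_{xy}\,p(a,b|x,y)$ lies in $[-1,1]$ for a normalised distribution, which bounds how negative the LHV optimum can be, and the shift $dm$ compensates for that lower bound (also removing the $\mathcal{C}=0$ divergence). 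Consequently the objective in \eqref{max} is a well‑defined, finite, continuous ratio, and adding the \emph{same} constant $dm$ to numerator and denominator converts the \emph{difference} test $\mathcal{Q}-\Delta\mathcal{Q}>\mathcal{C}$ into the \emph{ratio} test ``$>1$'' without changing its truth value, as long as that denominator stays positive.

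With these in hand the two implications are short algebra. For $(\Rightarrow)$: assume the data is genuinely nonlocal and pick $\vec{s}'$ with $\mathcal{Q}(\vec{s}')-\Delta\mathcal{Q}(\vec{s}')>\mathcal{C}(\vec{s}')$; adding $dm$ to both sides and dividing by the positive quantity $\mathcal{C}(\vec{s}')+dm$ gives $\bigl(\mathcal{Q}(\vec{s}')-\Delta\mathcal{Q}(\vec{s}')+dm\bigr)\big/\bigl(\mathcal{C}(\vec{s}')+dm\bigr)>1$, whence $R$, being the maximum over all $\vec{s}$, satisfies $R>1$. For $(\Leftarrow)$: assume $R>1$ and let $\vec{s}^{\,*}$ attain the maximum, so $\bigl(\mathcal{Q}(\vec{s}^{\,*})-\Delta\mathcal{Q}(\vec{s}^{\,*})+dm\bigr)\big/\bigl(\mathcal{C}(\vec{s}^{\,*})+dm\bigr)>1$; multiplying through by the positive denominator and cancelling $dm$ yields $\mathcal{Q}(\vec{s}^{\,*})-\Delta\mathcal{Q}(\vec{s}^{\,*})>\mathcal{C}(\vec{s}^{\,*})$, so the Bell inequality with coefficients $\vec{s}^{\,*}$ certifies nonlocality and the data is genuinely nonlocal.

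The step I expect to be the only genuine obstacle is the uniform positivity $\mathcal{C}(\vec{s})+dm>0$ over the coefficient cube: the entire ``iff'' collapses if the ratio can be driven above $1$ through a sign flip in the denominator (a negative‑over‑negative ratio), so one must argue that the shift $dm$ is large enough — and that in the edge case where the local bound meets the shift such degenerate coefficient choices cannot spuriously inflate $R$. Everything else reduces to clearing a positive denominator and a standard compactness argument for the existence of the maximiser.
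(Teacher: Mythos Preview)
Your proposal is correct and follows essentially the same approach as the paper: both reduce the equivalence to the algebraic fact that, with $\mathcal{C}+dm>0$, the condition $\mathcal{Q}-\Delta\mathcal{Q}>\mathcal{C}$ is equivalent to the ratio in \eqref{max} exceeding $1$, and then use the optimality of $R$ to pass between the particular maximiser and the existence of \emph{some} certifying inequality. The paper compresses your two implications into the single identity $\mathcal{Q}-\Delta\mathcal{Q}-\mathcal{C}=(R-1)(\mathcal{C}+dm)$ evaluated at the maximiser, but the content is the same; your version is simply more explicit about the positivity of the denominator and the attainment of the supremum, points the paper leaves implicit.
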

\begin{proof}
Let $s^{ab}_{xy}$ be the parameters of the Bell inequality producing the maximum value $R$ in \eqref{max}. Therefore, it is simple to show that $\mathcal{Q}-\Delta\mathcal{Q}-\mathcal{C}=(R-1)(\mathcal{C}+dm)$. Therefore, the statistical data is nonlocal, i.e. $\mathcal{Q} - \Delta\mathcal{Q} - \mathcal{C} > 0$, if and only if $R>1$. Given that $R$ takes the maximal possible value among all Bell inequalities of the scenario, if $R\leq1$ then there is no Bell inequality that can detect quantum nonlocality for the given statistical data. 
\end{proof}
Here, genuinely nonlocal means that there exists a Bell inequality \eqref{Bellineq}, associated to coefficients $s^{ab}_{xy}$, such that the amount of experimental errors do not overpass the gap between the quantum violation $\mathcal{Q}$ and the LHV prediction $\mathcal{C}$. Therefore, $R>1$ is equivalent to saying that there is a way to experimentally certify quantum nonlocality for a given set of experimental data. On the other hand, if $R\leq1$  then there is no way to tell whether there exists a linear Bell inequality, as defined in Eq. \eqref{BellInequality}, that detects nonlocality for the given probabilities. 

Note that the gap of a Bell inequality $\mathcal{Q} - \Delta\mathcal{Q} - \mathcal{C}$ can be artificially enlarged by a multiplicative factor $\kappa>1$ by considering a Bell inequality having a rescaled set of  coefficients $\tilde{s}^{ab}_{xy}=\kappa s^{ab}_{xy}$. In order to avoid this scaling problem, it is convenient to refer to the \emph{relative gap}, given by the gap of a Bell inequality for which $\mathcal{C}=1$, which can be assumed without loss of generality. Thus, the global maximum value of $R$ in \eqref{max} implies the largest possible relative gap, as we show below.
\begin{prop}\label{prop2}
The Bell inequality associated to the global maximum of the function $R$, introduced in \eqref{max}, produces the largest possible relative gap between the LHV and quantum values, among all linear Bell inequalities of a given scenario. 
\end{prop}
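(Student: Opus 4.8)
The plan is to reduce the statement to a one-line monotonicity argument by working with Bell inequalities normalized so that $\mathcal{C}=1$, as introduced just above. First I would observe that multiplying every coefficient $s^{ab}_{xy}$ by a positive constant $\kappa$ multiplies $\mathcal{Q}$, $\Delta\mathcal{Q}$ and $\mathcal{C}$ simultaneously by $\kappa$; hence the \emph{relative gap} --- the value $\mathcal{Q}-\Delta\mathcal{Q}-\mathcal{C}$ of the inequality after rescaling to $\mathcal{C}=1$ --- depends only on the proportionality class of the inequality, and comparing relative gaps over all linear Bell inequalities of the scenario is the same as comparing $\mathcal{Q}-\Delta\mathcal{Q}-1$ over all inequalities with $\mathcal{C}=1$.

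The key step is to evaluate the objective of \eqref{max} on such normalized inequalities. Substituting $\mathcal{C}=1$ gives
\[
R=\frac{\mathcal{Q}-\Delta\mathcal{Q}+dm}{1+dm}=1+\frac{\mathcal{Q}-\Delta\mathcal{Q}-1}{1+dm}=1+\frac{\text{relative gap}}{1+dm}.
\]
Since $1+dm$ is a fixed positive constant, independent of the inequality, $R$ is a strictly increasing affine function of the relative gap on the slice $\mathcal{C}=1$. Therefore the inequality at which $R$ attains its global maximum is exactly the inequality with the largest relative gap, and conversely; the two optimization problems have the same optimizers. Equivalently, using the identity $\mathcal{Q}-\Delta\mathcal{Q}-\mathcal{C}=(R-1)(\mathcal{C}+dm)$ from the proof of Proposition \ref{prop1}, on this slice one reads off $\text{relative gap}=(R-1)(1+dm)$, which makes the equivalence manifest; it is also consistent with Proposition \ref{prop1}, since $R>1$ holds precisely when the relative gap is positive.

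The only point that needs care --- and the main thing to get right --- is the relation between the box constraint $-1\le s^{ab}_{xy}\le1$ used in \eqref{max} to keep the optimization bounded and the rescaling to $\mathcal{C}=1$. I would argue that the box is merely a compactification: a relative gap is defined only for classes with $\mathcal{C}>0$, each such class has a representative with $\mathcal{C}=1$, the relative gap is a class invariant, and on the $\mathcal{C}=1$ slice $R$ is the increasing function of the relative gap computed above. Hence maximizing the relative gap over all Bell inequalities of the scenario coincides with maximizing $R$ on that slice, which is what the procedure of Section \ref{sec:optimization} performs once its output is rescaled so that $\mathcal{C}=1$; no linear Bell inequality of the scenario can then exhibit a larger relative gap.
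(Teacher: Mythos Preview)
Your proposal is correct and follows essentially the same route as the paper: rescale to the slice $\mathcal{C}=1$ using the linearity of $\mathcal{Q}$, $\Delta\mathcal{Q}$ and $\mathcal{C}$ in the coefficients, then observe that on that slice $R$ is a strictly increasing affine function of the relative gap $\mathcal{Q}-\Delta\mathcal{Q}-1$. Your treatment is in fact more explicit than the paper's---you write out $R=1+(\text{relative gap})/(1+dm)$ and you flag the tension between the box constraint $-1\le s^{ab}_{xy}\le 1$ and the rescaling to $\mathcal{C}=1$, a point the paper passes over silently.
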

\begin{proof}
Without loss of generality, we can restrict our attention to Bell inequalities for which $\mathcal{C}=1$. Indeed, this can always be done by considering the following rescaling of a given Bell inequality: $\mathcal{Q}'=\mathcal{Q}/\mathcal{C}$,  $\Delta\mathcal{Q}'=\Delta\mathcal{Q}/\mathcal{C}$ and $\mathcal{C}'=1$. Here, we used the fact that error propagation is linear as a function of the coefficients $s^{ab}_{xy}$ defining the Bell inequality.  After the rescaling, optimization of $R$ is equivalent to maximize $\mathcal{Q}'-\Delta\mathcal{Q}'$ along all Bell inequalities satisfying $\mathcal{C}'=1$, according to \eqref{max}. This is equivalent to maximizing the gap $\mathcal{Q}'-\Delta\mathcal{Q}'-1=\mathcal{Q}'-\Delta\mathcal{Q}'-\mathcal{C}'$.
\end{proof}

In section \ref{sec:tilted}, we apply our optimization method to the so-called \emph{tilted Bell inequality} \cite{Acin2012}. This family of inequalities was used to demonstrate that almost two bits of randomness can be extracted from a quantum system prepared in a weakly entangled state. This property makes the tilted Bell inequality an ideal candidate to test our method, due to the hardness to certify nonlocality in such case.

\section{Error propagation}\label{error_prop}

This section provides a general expression for the experimental error obtained by measuring the Bell inequality value. In particular, we show how errors in the photon counting number due to finite statistics propagates to $\Delta \mathcal{Q}$. Consider the following general expression for a Bell inequality: 
\begin{equation}\label{S1}
 \mathcal{Q} =\sum\limits_{x,y = 0}^{m - 1}  \sum\limits_{a,b = 0}^{d-1} s_{xy}^{ab} p(ab|xy) + \sum\limits_{x = 0}^{m - 1}  \sum\limits_{a = 0}^{d-1} s_{x}^{a} p_{A}(a|x) + \sum\limits_{y = 0}^{m - 1}  \sum\limits_{b = 0}^{d-1} s_{y}^{b} p_{B}(b|y).
\end{equation}
\noindent Here we include both joint and marginal probability distributions. As typical, the marginal probabilities are calculated from the joint probabilities, and we average over all possible $x$ (or $y$), i.e. $ p(a|x) = m^{-1} \sum_{y = 0}^{m-1}\sum_{b = 0}^{d-1} p(ab|xy)$ and $  p(b|y) = m^{-1}\sum_{x = 0}^{m-1}\sum_{a = 0}^{d-1} p(ab|xy)$. Replacing these quantities in Eq.\eqref{S1} and rewriting $\mathcal{Q}$ in term of the coincidence count $c(ab|xy)$ we get
\begin{equation}\label{S2}
   \mathcal{Q}=\sum\limits_{x,y = 0}^{m - 1}  \sum\limits_{a,b = 0}^{d-1}  \dfrac{c(ab|xy)}{\sum_{\alpha \beta} c(\alpha \beta|xy)} \left [ s_{xy}^{ab} + \dfrac{s_{x}^{a}}{m}    +  \dfrac{s_{y}^{b}}{m}   \right ],
\end{equation}
\noindent \noindent with $p(ab|xy) = c(ab|xy)\left( \sum_{\alpha \beta} c(\alpha \beta|xy)\right)^{-1}$. Finally, Gaussian error propagation and the Poisson statistics of the recorded coincidence count are considered to calculate $\Delta \mathcal{Q}$. The Possonian nature of the coincidence counts gives squared error  $(\Delta c(ab|xy))^2=c(ab|xy)$. The general expression for the experimental error is then
\begin{equation}\label{S3}
    \Delta \mathcal{Q} = \sqrt{ \sum_{abxy}\left( \dfrac{\partial \mathcal{Q}}{\partial c(ab|xy)} \right)^2 c(ab|xy)},
\end{equation}
\noindent and straightforward calculation leads to 
\begin{multline}\label{S4}
    \dfrac{\partial \mathcal{Q}}{\partial c(a'b'|x'y')} = \dfrac{1 }{ \left( \sum_{ab} c(ab|x'y') \right)^2 } \left[  \left( S_{x'y'}^{a'b'}  + \dfrac{1}{m}\left( S_{x'}^{a'} + S_{y'}^{b'} \right) \right)\sum_{ab} c(ab|x'y') \right. - \\ \left.  \sum_{ab}\left( S_{x'y'}^{ab}  + \dfrac{1}{m}\left( S_{x'}^{a} + S_{y'}^{b} \right) \right) c(ab|x'y')   \right].
\end{multline}

\section{Tilted Bell inequality}\label{sec:tilted} 
Certification of quantum nonlocality from experimental data is a challenging task in general. In a recent work,  a genuine random number generation protocol based on quantum nonlocality was experimentally demonstrated \cite{Gomez2019}. Here, authors certified randomness from a physical system prepared in quantum states with concurrences $C=0.986$, $C=0.835$ and $C=0.582$. However, certification failed for $C=0.193$ and $C=0.375$, due to the high sensibility of certifying quantum violation under the presence of errors. The inequality in question is the tilted Bell inequality \cite{Acin2012}. This inequality is well suited for our study, since it allows us to test our method for different levels of entanglement, from maximally entangled to separable, by just modifying the parameter $\alpha$: 
\begin{equation}\label{belltilted}
\alpha \left[ p_{A}(0|0)-p_{A}(1|0) \right]+ \sum_{x,y=0}^{1}\sum_{a,b=0}^{1} (-1)^{xy}\left[ p(a=b|xy)-p(a \neq b|xy)\right] \leq \mathcal{C}_{\alpha},
\end{equation}
where $p_A(a|x)$ is the marginal probability distribution for Alice,  $\mathcal{C}_{\alpha}=\alpha+2$ and $\mathcal{Q}_{\alpha}=\sqrt{8+2\alpha^2}$, $\alpha\in[0,2]$. The  quantum state producing the maximal violation in \eqref{belltilted} is given by $|\psi(\theta)\rangle=\cos(\theta)|00\rangle+\sin(\theta)|11\rangle$,
where $\theta=2^{-1}\arcsin \left( \sqrt{\frac{1-(\alpha/2)^{2}}{1+(\alpha/2)^{2}}}\right) $. Optimal settings are provided by eigenvector bases of the following observables:
\begin{equation}
    A_{0}=\sigma_{z}, \quad A_{1}=\sigma_{x}, \quad B_{0}= \cos(\mu)\sigma_{z} + \sin(\mu)\sigma_{x} \quad \text{and} \quad B_{1} = \cos(\mu)\sigma_{z} - \sin(\mu)\sigma_{x} \nonumber 
\end{equation}
\noindent where $\mu=\arctan \left( \sqrt{\frac{1-(\alpha/2)^{2}}{1+(\alpha/2)^{2}}}\right)$ and $\sigma_x = \sigma_1$ and $\sigma_z = \sigma_3$ the Pauli Matrices defined in Eq. \eqref{PauliMatrices}.

\begin{figure}[h!]
\centering
\includegraphics[scale=0.675]{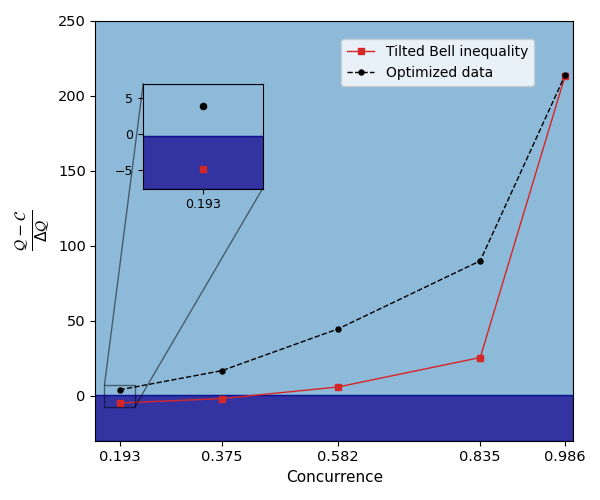}
\caption[NL: Nonlocality certification for different concurrences]{Number of standard deviation (SDN) for the gap between the LHV and quantum values, as a function of concurrence. For each value of concurrence, the optimization procedure \eqref{max} provides a  Bell inequality that increases the number of standard deviations of the quantum-LHV value gap. SDN is calculated for two cases: the original tilted Bell inequality [red squares] and an inequality arising from optimization \eqref{max} [black circles]. In both cases, we consider experimental data, where quantum nonlocality can be certified in the light blue region. For concurrences $C=0.375$ and $C=0.193$, there is no quantum violation of the tilted Bell inequality, whereas inequalities arising from optimization \eqref{max} produce a violation for both concurrences. This result is obtained with the same statistical data that did not produce a violation with the tilted Bell inequality, i.e. a new experiment was not required to certify quantum nonlocality.}
\label{Fig1}
\end{figure}

We tested optimization \eqref{max} for a photonic Bell inequality experiment, see Appendix B in Ref. \cite{Gomez_2021} for experimental details. The experiment consisted of a high-quality source of polarization-entangled Bell states of the form $\ket{\psi(\theta)}$.  Entanglement of this optimal quantum state can be characterized by its concurrence, given by $C=\sin(2 \theta)$. By using optimization \eqref{max}, we improve the experimental violation of the tilted Bell inequality for high concurrences and, more importantly, we successfully demonstrate quantum nonlocality for low values of concurrence, i.e. $C=0.375$ and $C=0.193$, something that failed to be proven  when considering the tilted Bell inequality \eqref{belltilted} \emph{from the same statistical data}. In Fig. \ref{Fig1} we show the number of standard deviations of the quantum-LHV gap for the tilted Bell inequality and the one resulting from our optimization procedure \eqref{max} (more details in Table \ref{table:one}).

\begin{table}[h!]
\centering
   \caption[Summary of results for non-locality certification]{Summary of results. The concurrence was obtained from quantum state tomography. $\mathcal{Q}_{\alpha}$  refers to the experimental values of the Tilted Bell inequality. $\mathcal{C}_{\alpha}$ is the LHV bound for each $\alpha$. After implementing the method mentioned in section II, the results obtained are presented in the values $\mathcal{Q}$ and $\mathcal{C}$.}
\resizebox{1.0\textwidth}{!}{\begin{minipage}
{\textwidth} \begin{center}
        \begin{tabular}{ || c c c c c c ||}
        \hline 
        Concurrence & $0.193$ & $0.375$ & $0.582$  &  $0.835$ & $0.986$ 
        \\
        \hline \hline
        $\mathcal{Q}_{\alpha}$   & $3.890$ & $3.686$ & $3.418$ & $3.108$ & $2.819$  \\
        $(\pm) \Delta \mathcal{Q}_{\alpha}$ & $0.006$ & $0.008$ & $0.008$ & $0.007$ & $0.004$\\ 
        $\mathcal{C}_{\alpha}$ & $3.921$ & $3.702$ & $3.372$ & $2.937$ & $2.002$ \\ 
        $\frac{\mathcal{Q}_{\alpha}-\mathcal{C}_{\alpha}}{\Delta \mathcal{Q}_{\alpha}}$ & $-4.85$ & $-1.93$ & $5.83$ & $25.44$ & $213.06$ \\
        \hline
        $\mathcal{Q}$   & $1.417$  & $1.505$ & $1.314$ & $1.557$ & $1.819$ \\
        $(\pm) \Delta \mathcal{Q}$ & $0.001$ & $0.007$ & $0.007$ & $0.006$ & $0.004$  \\ 
        $\mathcal{C}$ & $1.412$ & $1.381$ & $1.000$ & $1.000$ & $1.000$ \\ 
        $\frac{\mathcal{Q}-\mathcal{C}}{\Delta \mathcal{Q}}$ &  $3.88$ & $16.68$ & $44.53$ & $89.87$ & $213.57$         \\    
        \hline \hline
       
        \end{tabular} 
        \end{center}
	\end{minipage}}
	\label{table:one}
\end{table}

This result can lead to interesting practical applications. For instance, note that the amount of bits of randomness that can be extracted from the protocol \cite{Gomez2019} are identical to the number of bits that can be extracted by considering our optimized Bell inequality. This is so because both schemes consider the same experimental data, representing the same physical phenomena. 

\newpage
\subsubsection{Inequalities}    
\vspace{-2.25mm}
For each value of concurrence in table \ref{table:one}, a set of coefficients $s^{ab}_{xy}$ was obtained by optimizing Eq. \eqref{max}. Here we show those coefficients by writing the Bell inequalities:\\

\noindent \textit{Concurrence = 0.193}
\begin{equation}
\begin{aligned}
 & \;\;\;\; 0.9424 p(00|00)-0.2156 p(01|00)-0.3479 p(10|00) + 0.8025 p(11|00) \\ 
 &  + 0.5397 p(00|01)-0.3703 p(01|01)-0.9992 p(10|01) + 0.3785 p(11|01) \\ 
 &  + 0.1495 p(00|10)-0.9801 p(01|10)-0.1586 p(10|10) + 0.9992 p(11|10) \\ 
 &  + 0.0928 p(00|11) + 0.9992 p(01|11) + 0.4009 p(10|11)-0.9773 p(11|11) \\ 
 &  -0.3122 p_A(0|0) + 0.2084 p_A(1|0) \leq 1.4122.
\end{aligned}
\nonumber
\end{equation}
\noindent \textit{Concurrence = 0.375}
\begin{equation}
\begin{aligned}
 & \;\;\;\; 0.9784 p(00|00)-0.9989 p(01|00)-0.9993 p(10|00) + 0.9970 p(11|00) \\ 
 &  + 0.9980 p(00|01)-0.9932 p(01|01)-0.9990 p(10|01) + 0.9833 p(11|01) \\ 
 &  + 0.9996 p(00|10)-0.9967 p(01|10)-0.9807 p(10|10) + 0.9965 p(11|10) \\ 
 & -0.9976 p(00|11) + 0.9936 p(01|11) + 0.9827 p(10|11)-0.9997 p(11|11) \\ 
 &  -0.5965 p_A(0|0) -0.5953 p_A(1|0) \leq 1.3819.
\end{aligned}
\nonumber
\end{equation}
\noindent \textit{Concurrence = 0.582}
\begin{equation}
\begin{aligned}
 & \;\;\;\; p(00|00)- p(01|00)- p(10|00) +  p(11|00) +  p(00|01)- p(01|01) \\
 & - p(10|01) + p(11|01) + p(00|10)- p(01|10)- p(10|10) + p(11|10) \\ 
 & - p(00|11) + p(01|11) + p(10|11)- p(11|11)  - p_A(0|0) - p_A(1|0) \leq 1
\end{aligned}
\nonumber
\end{equation}
\noindent \textit{Concurrence = 0.835}
\begin{equation}
\begin{aligned}
 & \;\;\;\;  p(00|00)- p(01|00)- p(10|00) + p(11|00)  +  p(00|01)- p(01|01) \\
 & - p(10|01) +  p(11|01) +  p(00|10)- p(01|10)- p(10|10) +  p(11|10) \\ 
 & - p(00|11) + p(01|11) + p(10|11)- p(11|11)  - p_A(0|0) - p_A(1|0) \leq 1.
\end{aligned}
\nonumber
\end{equation}
\noindent \textit{Concurrence = 0.986}
\begin{equation}
\begin{aligned}
 & \;\;\;\; p(00|00)- p(01|00)- p(10|00) + p(11|00)  + p(00|01)- p(01|01) \\
 & - p(10|01) + p(11|01) + p(00|10)- p(01|10)- p(10|10) + p(11|10) \\ 
 & - p(00|11) + p(01|11) +  p(10|11)- p(11|11)  -p_A(0|0) - p_A(1|0) \leq 1.
\end{aligned}
\nonumber
\end{equation}

\section{Closing the detection loophole}\label{sec:loophole}

Since the work conducted by Aspect et al. in 1981 \cite{Aspect_1982}, which signified the experimental verification of Bell's work, physicists have been concerned about \textit{loopholes} in the design of experiments to certify nonlocality. Free loopholes experiments are important, for example, in quantum key distribution protocols for secure communication \cite{Acin2007}. Experiments to certify nonlocality has to guarantee that no messages are being interchanged between the parts involved; if they are not sufficiently distant apart, then signals can travel, at the speed of light, from one measurement location to the other and generate correlations. This is called the \textit{locality loophole} \cite{bell_aspect_2004}. 

Another source of loopholes is ``missing" detections.  If the \textit{efficiency} of the detector is too low, then some of the particles, e.g. photons, might end up not being registered by the detector and the quantum correlations could be reproduced by a local hidden variables model \cite{Santos_1992}; this is called the \textit{detection loophole}. Thus, one might wonder what is the \textit{minimum} detection efficiency that an experiment can tolerate to consider that a Bell inequality is genuinely being violated. For example, for the CHSH inequality a detector needs to have an efficiency larger than 82.8\% to close the detection loophole considering maximally entangled states \cite{Brunner_2007}. A loophole-free Bell experiment was reported in 2015 \cite{Hensen_2015}. In this section, we show that the optimization procedure \eqref{max}, apart from increasing the quantum-LHV value gap, also reduces the detection efficiency required to close the detection loophole for a fixed set of statistical data. 

Suppose that Alice and Bob have detector efficiencies $\eta_A$ and $\eta_B$, respectively. The minimal efficiencies required to violate a Bell inequality are given by the following procedure \cite{Pironio2010}: first, a two-outcome Bell inequality has to be written in a canonical form. Namely, it has to consider only one of its outcomes per party (we choose $a=b=0$), as we associate the other outcomes ($a=b=1$) to the cases where detectors do not fire correctly. 

To transform any bipartite Bell inequality with $m$ settings and two outcomes to its canonical form, i.e. depending on outputs $a=b=0$ only, the following identities have to be considered for local
\begin{eqnarray}
p_A(1|x)&=&1-p_A(0|x),\nonumber\\
p_B(1|y)&=&1-p_B(0|y),\nonumber
\end{eqnarray}
and joint probabilities
\begin{eqnarray}
p(0,1|x,y)&=&p_A(0|x)-p(0,0|x,y),\nonumber\\
p(1,0|x,y)&=&p_B(0|y)-p(0,0|x,y),\nonumber\\
p(1,1|x,y)&=&1-p_A(0|x)-p_B(0|y)+p(0,0|x,y)\nonumber,
\end{eqnarray}
\noindent for every $x,y=0,m-1$. Applying these identities, it is simple to show that any bipartite Bell inequality \eqref{Bellineq} with $m$ settings per side and two outcomes can be written as: 
\begin{equation}\label{effi1a}
\sum_{x,y=0}^{m-1} \tilde{s}_{xy}^{00} p(0,0|x,y)+\sum_{x=0}^{m-1} \tilde{s}_{Ax}^{\hspace{6px}0} p_A(0|x)+\sum_{y=0}^{m-1} \tilde{s}_{By}^{\hspace{6px}0} p_B(0|y) \leq \mathcal{C},
\end{equation}
\noindent where 
\vspace{-2mm}
\begin{eqnarray}
\tilde{s}^{00}_{xy}&=&\sum_{a,b=0}^1 (-1)^{a+b}s^{ab}_{xy},\nonumber\\
\tilde{s}_{Ax}^{\hspace{6px}0}&=&s_{Ax}^{\hspace{6px}0}-s_{Ax}^{\hspace{6px}1}+\sum_{y=0}^{m-1}\sum_{a=0}^1 (-1)^a s^{a1}_{xy},\nonumber\\
\tilde{s}_{By}^{\hspace{6px}0}&=&s_{By}^{\hspace{6px}0}-s_{By}^{\hspace{6px}1}+\sum_{x=0}^{m-1}\sum_{b=0}^1 (-1)^b s^{1b}_{xy}.\nonumber
\end{eqnarray}
For instance, for the tilted Bell inequality (\ref{belltilted}), we have the following canonical form:
\begin{equation}\label{canonical_tilted}
(\alpha/2-1) p_A(0|0)- 
 p_B(0|0)+p(0,0|0,0) + p(0,0|0,1) +p(0,0|1,0)-
 p(0,0|1,1)\leq \tilde{\mathcal{C}}_{\alpha},
\end{equation}
where $\tilde{\mathcal{C}}_{\alpha}=(\mathcal{C}_{\alpha}+\alpha-2)/4$. Second, due to imperfect detectors, probabilities have to be transformed according to the the rule $p(0,0|x,y)\rightarrow \eta_A\eta_B p(0,0|x,y)$, $p(0|x)\rightarrow \eta_A\, p(0|x)$ and $p(0|y)\rightarrow \eta_B\, p(0|y)$, for every setting $x,y$. Therefore, the lower bound for the minimal efficiencies required to detect a quantum violation satisfies:
\begin{equation}\label{effi1b}
\eta_A\eta_B\sum_{x,y=0}^{1} \tilde{s}_{xy}^{00} p(0,0|x,y)+\eta_A\sum_{x=0}^{1} \tilde{s}_{Ax}^{\hspace{6px}0} p_A(0|x)+\eta_B\sum_{y=0}^{1} \tilde{s}_{By}^{\hspace{6px}0} p_B(0|y)=\mathcal{C}(s).
\end{equation}
We observe that optimization \eqref{max} reduces the detection efficiency required to close the detection loophole, with respect to a target Bell inequality, for a fixed set of probability distributions $p(a,b|x,y)$, $p_A(0|x)$ and $p_B(0|y)$. Here, we show evidence of this fact by improving detection efficiencies for the experimental 

\begin{figure}[h]
    \centering
     \subfloat[\label{Fig3a}]{%
       \includegraphics[scale=0.76]{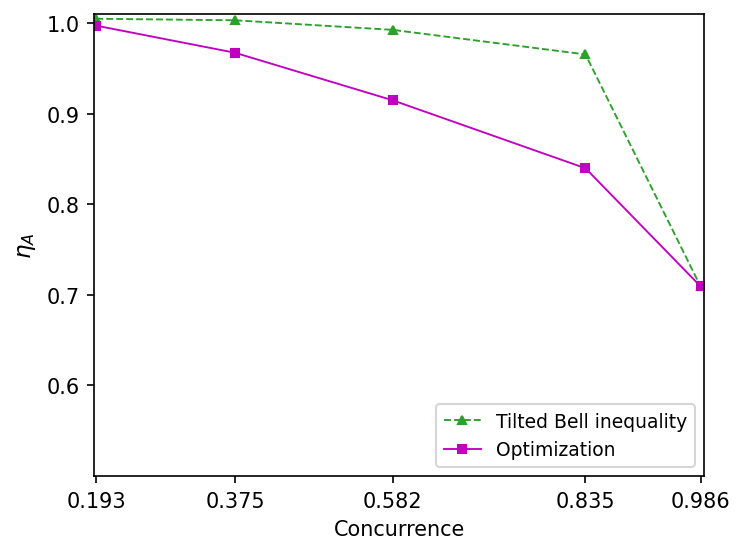}
     }
     
     \subfloat[\label{Fig3b}]{%
       \includegraphics[scale=0.76]{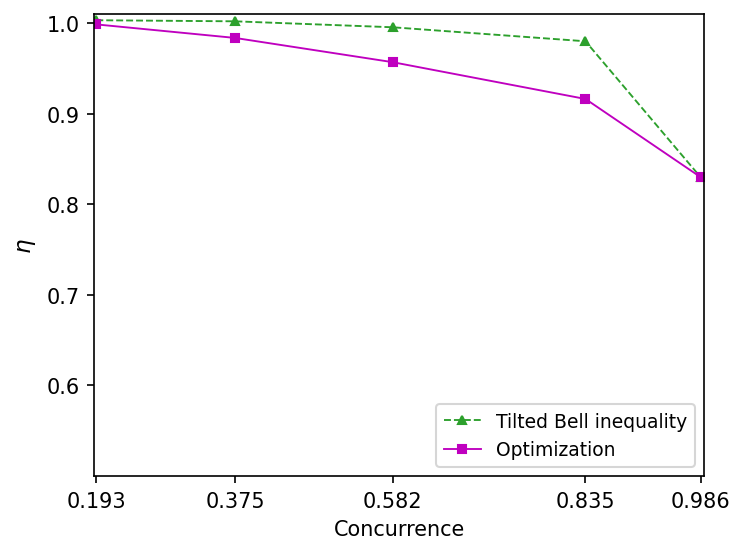}
     }
     \caption[NL: Efficiencies ]{Minimum efficiencies required to close the loophole for the (a) Asymmetric case, where $\eta_B=1$, and (b) Symmetric case, where $\eta_A=\eta_B=\eta$. We compare efficiencies associated with the tilted Bell inequality and the optimization proposed in (\ref{max}). In both cases, we consider the optimal experimental data obtained for the tilted inequality \eqref{canonical_tilted}.}
     \label{efficiency}
\end{figure}
\clearpage 
\noindent statistical distributions associated to the tilted Bell inequality (\ref{canonical_tilted}), see Figure \ref{efficiency}.

\section{Conclusions}
Violation of Bell inequalities is at the heart of quantum physics and defines a cornerstone for a wide range of quantum information protocols with real-world appeal.  We have shown how an ``experiment-inspired" optimization procedure can be applied to the search of Bell inequalities that increase the quantum-LHV gap with respect to a target Bell inequality, for a given set of experimental data (see Prop. \ref{prop2}). Furthermore, we demonstrated that the gap provided by our optimization procedure is the largest possible among the entire set of Bell inequalities having LHV equal to 1, something that can be assumed without loss of generality (see Prop. \ref{prop2}). When nonlocality certification from a given set of experimental data fails, our method provides a ``second chance" to succeed, without requiring to perform any additional measurement. Furthermore, our method also provides a gain in the minimal detection efficiency required by a fixed statistical set, a crucial ingredient to maximize the chances to close the detection loophole.  We illustrated our technique by considering  experimental data associated to the maximal violation of the so-called \emph{tilted Bell inequality}. Here, we considerably increased some previously obtained gaps, a fact that allowed us to certify nonlocality when considering weakly entangled quantum states. This certification was not possible to do with the tilted Bell inequality, i.e. the inequality that motivated the experiment, see Section \ref{sec:tilted}. Furthermore, we also showed how the detection efficiencies required to close the detection loophole can be reduced after implementing our optimization procedure, see Section \ref{sec:loophole}. Our technique finds application in device-independent protocols, random number generation, communication complexity and any practical application based in quantum nonlocality.

    \chapter{Quantum Marginal Problem}\label{cap4}

In this chapter, we study the quantum marginal problem. Here, we introduce an operator to tackle the problem of compatibility between parts of a quantum system and the whole. We also present an algorithm that applies iteratively this operator to find a global quantum state, if it exists, compatible with a prescribed set of quantum marginals and spectra.

\section{Introduction}

The Quantum Marginal Problem (QMP) is basically a problem of compatibility: given a set of marginals we want to find a \emph{compatible} global density matrix, where compatible means that the marginals can be recovered from the global state using the partial trace. The fermionic version of the QMP is known as the \emph{N-Representability problem}, terminology introduced in 1963 by  A. J. Coleman \cite{coleman1963}, and represents one of the biggest challenges in quantum chemistry and quantum many body physics. Knowing the compatibility conditions would, for example, reduce the computational cost of calculating the ground state of many-body quantum systems \cite{coleman1963}. Several works have also been devoted to establish the connections between the QMP and quantum entanglement \cite{Yu2021ACH, Navascus2020EntanglementMP}. 


The QMP, in its most general form, is hard; it was shown in Ref. \cite{QMA_2007} that the $N$-Representability problem belongs to the Quantum Merlin-Arthur complete complexity class, which is the quantum analogous of the NP-complete class. Even the case of two-body marginals, of particular interest since most of Hamiltonians describing $N$-fermion systems, such as atoms and molecules, contain only two-body interactions \cite{coleman1963}, is believed to be intractable.

Despite the complexity of the QMP, tremendous progress has been made. Klyachko \cite{Klyachko_2006} completely solved the problem for the case of one-body marginals. He showed that the compatibility conditions are given by a set of linear inequalities involving the spectra of the one-body marginals. For the case of symmetric states (states describing bosons), necessary and sufficient conditions are given in Ref. \cite{Aloy_2021}, where the compatibility conditions are expressed as an SDP problem. Other works have focused on establishing whether a global state is uniquely determined by a given set of marginals among the pure states or among all the states. For example, only two $2$-body marginals are necessary to uniquely determine, in most of cases, $3$-body pure states \cite{Lajos_2004}. Wyderka and co-authors \cite{Wyderka2017} showed that most of $n$-body pure states are uniquely determined, among pure states, by only three of their $(n-2)$-body marginals.

\section{Imposing Marginals}
\label{section_4.1}
In this section, we introduce the main tool of our study and describe its properties. We first need to distinguish two cases of the QMP: overlapping and non-overlapping. Let us consider a $N$-body quantum system formed by parties labeled $A, B, C, \ldots$, which form the ordered set $\mathcal{I} = \{ A,B,C, \ldots \}$. Let us also consider sets $\mathcal{J}_i  \subset \mathcal{I}$. We say that $\mathcal{J}_i$ and $\mathcal{J}_j$ are \textit{overlapped} if $\mathcal{J}_i \cap \mathcal{J}_j \neq \emptyset $ for $i \neq j$. Similarly, $\mathcal{J}_i$ and $\mathcal{J}_j$ are \textit{non-overlapped} if $\mathcal{J}_i \cap \mathcal{J}_j = \emptyset $ for $i \neq j$.  In Fig. \ref{overlapping_no_overlapping} we illustrate the overlapping and non-overlapping cases for a 4-body system. 

To simplify the notation, we denote along this chapter the maximally mixed state of the system $\mathcal{J}$ by $\mathbb{I}_{\mathcal{J}}$. Now, we define the following operator:

\begin{figure}
    \centering
     \subfloat[\label{fig:overlapping}]{%
       \includegraphics[scale=0.565]{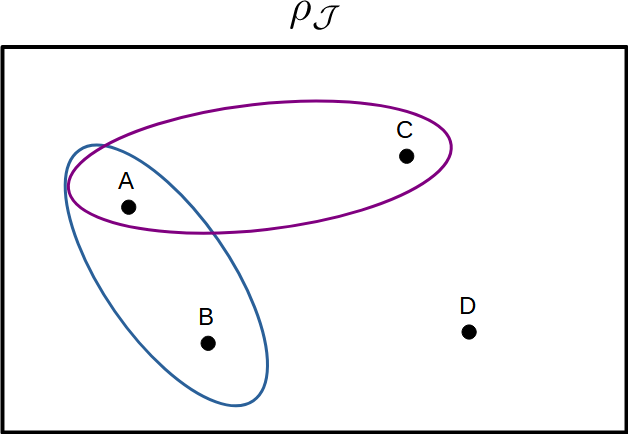}
     }\hspace{5mm}
     \subfloat[\label{fig:no_overlapping}]{%
       \includegraphics[scale=0.565]{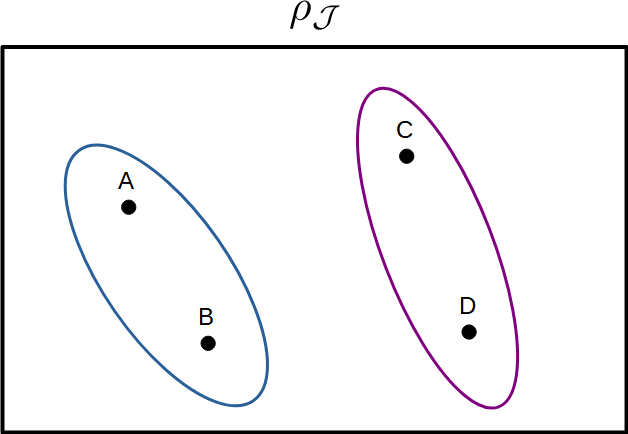}
     }
     \caption[QMP: Overlapping and non-overlapping quantum marginals]{Overlapping and non-overlapping cases for a 4-body quantum system with 2-body marginals (enclosed by the coloured islands). a) \textit{Overlapping:} $\mathcal{I}_1 \cap \mathcal{I}_2 \neq \emptyset $ with $\mathcal{I}_1 = \{A,B\}$ and $\mathcal{I}_2 = \{A,C\}$  b) \textit{Non-overlapping:} $\mathcal{I}_1 \cap \mathcal{I}_2 = \emptyset $ with $\mathcal{I}_1 = \{A,B\}$ and $\mathcal{I}_2 = \{C,D\}$.}
     \label{overlapping_no_overlapping}
\end{figure}

\begin{defi}\label{def:qmo}
Let $\rho_{\mathcal{I}}$ and  $\sigma_{\mathcal{J}}$ be multipartite quantum states associated to a set of parties $\mathcal{I}$ and $\mathcal{J}\subset\mathcal{I}$, respectively. We define the marginal imposition operator as:
\begin{equation}\label{qmo}
Q_{\sigma_{\mathcal{J}}}(\rho_{\mathcal{I}}):=\rho_{\mathcal{I}}-\rho_{\mathcal{J}}+\sigma_{\mathcal{J}},
\end{equation}
where $\rho_{\mathcal{J}}=\mathrm{Tr}_{\mathcal{J}^c}[ \rho_{\mathcal{I}}]$ and $\mathcal{J}^c$ denotes the complement of $\mathcal{J}$ with respect to $\mathcal{I}$. 
\end{defi}

Here, reductions are implicitly tensored with maximally mixed states, e.g. $\sigma_{\mathcal{J}}$ means $\sigma_{\mathcal{J}}\otimes\mathbb{I}_{\mathcal{J}^c}$. As its name suggests, the operator $Q_{\sigma_{\mathcal{J}}}(\rho_{\mathcal{I}})$ imposes the marginal $\sigma_{\mathcal{J}}$ into the density matrix $\rho_{\mathcal{I}}$. To illustrate this, let us consider the case with $\mathcal{I} = \{A,B,C\}$ and $\mathcal{J} = \{A,B\}$. Thus, Eq. \eqref{qmo} becomes:
\begin{equation}\label{qmo_3parties}
Q_{\sigma_{AB}}(\rho_{ABC})=  \rho_{ABC}-\rho_{AB} \otimes \mathbb{I}_C +\sigma_{AB}\otimes \mathbb{I}_C ,
\end{equation}
\noindent where $\rho_{AB} = \mathrm{Tr}_{C}[\rho_{ABC}]$. By partial tracing Eq. \eqref{qmo_3parties} over $C$, one obtain $\sigma_{AB}$. Let us formalize this observation in the following proposition:

\vspace{3mm}
\begin{prop}\label{qmo:imposition}
Let $\rho_{\mathcal{I}}$ and  $\sigma_{\mathcal{J}}$ be multipartite quantum states associated to a set of parties $\mathcal{I}$ and $\mathcal{J}\subset\mathcal{I}$, respectively. Then $\mathrm{Tr}_{\mathcal{J}^c}\left[ Q_{\sigma_{\mathcal{J}}}(\rho_{\mathcal{I}})  \right] = \sigma_{\mathcal{J}}$.
\end{prop}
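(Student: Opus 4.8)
The plan is to compute $\mathrm{Tr}_{\mathcal{J}^c}$ of both sides of the defining equation \eqref{qmo} directly, exploiting linearity of the partial trace. First I would recall that, by Definition \ref{def:qmo}, $Q_{\sigma_{\mathcal{J}}}(\rho_{\mathcal{I}})=\rho_{\mathcal{I}}-\rho_{\mathcal{J}}\otimes\mathbb{I}_{\mathcal{J}^c}+\sigma_{\mathcal{J}}\otimes\mathbb{I}_{\mathcal{J}^c}$, where $\rho_{\mathcal{J}}=\mathrm{Tr}_{\mathcal{J}^c}[\rho_{\mathcal{I}}]$ and, per the convention fixed just before the definition, $\mathbb{I}_{\mathcal{J}^c}$ denotes the \emph{maximally mixed state} on $\mathcal{J}^c$, so that $\mathrm{Tr}[\mathbb{I}_{\mathcal{J}^c}]=1$.

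The key observation is that for any operator $X_{\mathcal{J}}$ acting on the subsystem $\mathcal{J}$ one has $\mathrm{Tr}_{\mathcal{J}^c}\!\left[X_{\mathcal{J}}\otimes\mathbb{I}_{\mathcal{J}^c}\right]=X_{\mathcal{J}}\,\mathrm{Tr}[\mathbb{I}_{\mathcal{J}^c}]=X_{\mathcal{J}}$, which follows from the definition of the partial trace in Eq. \eqref{reducedSystem} together with the normalization $\mathrm{Tr}[\mathbb{I}_{\mathcal{J}^c}]=1$. Applying $\mathrm{Tr}_{\mathcal{J}^c}$ term by term then gives
\begin{equation}
\mathrm{Tr}_{\mathcal{J}^c}\!\left[Q_{\sigma_{\mathcal{J}}}(\rho_{\mathcal{I}})\right]
=\mathrm{Tr}_{\mathcal{J}^c}[\rho_{\mathcal{I}}]-\mathrm{Tr}_{\mathcal{J}^c}\!\left[\rho_{\mathcal{J}}\otimes\mathbb{I}_{\mathcal{J}^c}\right]+\mathrm{Tr}_{\mathcal{J}^c}\!\left[\sigma_{\mathcal{J}}\otimes\mathbb{I}_{\mathcal{J}^c}\right]
=\rho_{\mathcal{J}}-\rho_{\mathcal{J}}+\sigma_{\mathcal{J}}=\sigma_{\mathcal{J}},
\end{equation}
which is the claim. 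I would close by noting the intermediate cancellation $\rho_{\mathcal{J}}-\rho_{\mathcal{J}}=0$ is exactly what makes the operator "impose" the prescribed marginal $\sigma_{\mathcal{J}}$ while leaving alone the information on $\mathcal{J}^c$.

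There is no real obstacle here: the statement is essentially immediate once the normalization convention for $\mathbb{I}_{\mathcal{J}^c}$ is made explicit, and the only point that deserves a sentence of care is the identity $\mathrm{Tr}_{\mathcal{J}^c}[X_{\mathcal{J}}\otimes\mathbb{I}_{\mathcal{J}^c}]=X_{\mathcal{J}}$, i.e. making sure the reader knows $\mathbb{I}_{\mathcal{J}^c}$ is trace-one rather than the (unnormalized) identity operator, since otherwise an extra dimensional factor would appear. If instead one wished to phrase everything with the unnormalized identity, the same computation goes through provided the reductions in \eqref{qmo} are defined with the matching normalization, so the result is convention-robust.
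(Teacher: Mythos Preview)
Your proof is correct and follows essentially the same approach as the paper: apply $\mathrm{Tr}_{\mathcal{J}^c}$ term by term to the defining equation \eqref{qmo}, use that $\mathrm{Tr}_{\mathcal{J}^c}[\rho_{\mathcal{I}}]=\rho_{\mathcal{J}}$ and $\mathrm{Tr}_{\mathcal{J}^c}[X_{\mathcal{J}}\otimes\mathbb{I}_{\mathcal{J}^c}]=X_{\mathcal{J}}$, and conclude from the cancellation $\rho_{\mathcal{J}}-\rho_{\mathcal{J}}+\sigma_{\mathcal{J}}=\sigma_{\mathcal{J}}$. Your explicit remark on the trace-one normalization of $\mathbb{I}_{\mathcal{J}^c}$ is a welcome clarification that the paper leaves implicit.
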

\begin{proof}
Let us compute $\mathrm{Tr}_{\mathcal{J}^c}\left[ Q_{\sigma_{\mathcal{J}}}(\rho_{\mathcal{I}})  \right]$ 
\begin{eqnarray}\label{imposed_marginal}
\mathrm{Tr}_{\mathcal{J}^c}\left[Q_{\sigma_{\mathcal{J}}}(\rho_{\mathcal{I}}) \right] &=& \mathrm{Tr}_{\mathcal{J}^c}\left[\rho_{\mathcal{I}}\right]- \mathrm{Tr}_{\mathcal{J}^c}\left[\rho_{\mathcal{J}}\right]+\mathrm{Tr}_{\mathcal{J}^c}\left[\sigma_{\mathcal{J}} \right],
\end{eqnarray}
\noindent from Def. \ref{def:qmo} $\rho_{\mathcal{J}}=\mathrm{Tr}_{\mathcal{J}^c}[ \rho_{\mathcal{I}}]$. Also, $\mathrm{Tr}_{\mathcal{J}^c}\left[\rho_{\mathcal{J}}\right] = \rho_{\mathcal{J}}$ and $\mathrm{Tr}_{\mathcal{J}^c}\left[\sigma_{\mathcal{J}} \right] = \sigma_{\mathcal{J}}$. Thus, Eq. \eqref{imposed_marginal} becomes $\mathrm{Tr}_{\mathcal{J}^c}\left[Q_{\sigma_{\mathcal{J}}}(\rho_{\mathcal{I}}) \right] = \rho_{\mathcal{J}}- \rho_{\mathcal{J}}+\sigma_{\mathcal{J}} = \sigma_{\mathcal{J}}$.
\end{proof}

The action of $Q_{\sigma_{\mathcal{J}}}$ over $\rho_{\mathcal{I}}$ removes all the information about $\mathcal{J}$ and outputs an hermitian matrix containing the marginal $\sigma_{\mathcal{J}}$. Furthermore, the action of $Q_{\sigma_{\mathcal{J}}}$ does not affect the information about the subsystems in $\mathcal{I}\setminus\mathcal{J} $. Let us formalize this last statement in the following proposition:

\vspace{3mm}
\begin{prop}\label{obs1}
Let $\rho_{\mathcal{I}}$ and $\sigma_{\mathcal{J}}$ be multipartite quantum states, with $\mathcal{J}\subset\mathcal{I}$ and $\mathcal{K}\subset\mathcal{I}$ such that $\mathcal{J}\cap\mathcal{K}=\emptyset$. Therefore,
$\mathrm{Tr}_{\mathcal{K}^c}[Q_{\sigma_{\mathcal{J}}}(\rho_{\mathcal{I}})]=\mathrm{Tr}_{\mathcal{K}^c}[\rho_{\mathcal{I}}]=\rho_{\mathcal{K}}$.
\end{prop}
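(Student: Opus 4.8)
The plan is to apply the partial trace $\mathrm{Tr}_{\mathcal{K}^c}$ directly to the defining expression \eqref{qmo}, using linearity together with the standing convention that every reduction is implicitly tensored with a maximally mixed state. Writing $Q_{\sigma_{\mathcal{J}}}(\rho_{\mathcal{I}})=\rho_{\mathcal{I}}-\rho_{\mathcal{J}}\otimes\mathbb{I}_{\mathcal{J}^c}+\sigma_{\mathcal{J}}\otimes\mathbb{I}_{\mathcal{J}^c}$ and taking $\mathrm{Tr}_{\mathcal{K}^c}$ of both sides, linearity of the partial trace gives
\[
\mathrm{Tr}_{\mathcal{K}^c}\!\left[Q_{\sigma_{\mathcal{J}}}(\rho_{\mathcal{I}})\right]=\mathrm{Tr}_{\mathcal{K}^c}[\rho_{\mathcal{I}}]-\mathrm{Tr}_{\mathcal{K}^c}\!\left[\rho_{\mathcal{J}}\otimes\mathbb{I}_{\mathcal{J}^c}\right]+\mathrm{Tr}_{\mathcal{K}^c}\!\left[\sigma_{\mathcal{J}}\otimes\mathbb{I}_{\mathcal{J}^c}\right].
\]
Since $\mathrm{Tr}_{\mathcal{K}^c}[\rho_{\mathcal{I}}]=\rho_{\mathcal{K}}$ by definition of the marginal, it suffices to prove that the last two terms coincide, so that they cancel.

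The key step is to use the hypothesis $\mathcal{J}\cap\mathcal{K}=\emptyset$, equivalently $\mathcal{J}\subseteq\mathcal{K}^c$, to split the trace. I would decompose $\mathrm{Tr}_{\mathcal{K}^c}$ as first tracing out all of $\mathcal{J}$ and then tracing out $\mathcal{K}^c\setminus\mathcal{J}=\mathcal{I}\setminus(\mathcal{J}\cup\mathcal{K})$, a subset of $\mathcal{J}^c$. Because $\rho_{\mathcal{J}}$ is a normalized quantum state, $\mathrm{Tr}_{\mathcal{J}}[\rho_{\mathcal{J}}]=1$, hence $\mathrm{Tr}_{\mathcal{J}}[\rho_{\mathcal{J}}\otimes\mathbb{I}_{\mathcal{J}^c}]=\mathbb{I}_{\mathcal{J}^c}$; tracing this remaining maximally mixed factor over $\mathcal{J}^c\setminus\mathcal{K}$ leaves $\mathbb{I}_{\mathcal{K}}$, since $\mathbb{I}_{\mathcal{J}^c}$ factorizes as a tensor product of single-party maximally mixed states each of trace one. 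The same computation, with $\sigma_{\mathcal{J}}$ in place of $\rho_{\mathcal{J}}$, yields the same result $\mathbb{I}_{\mathcal{K}}$, because it used only the normalization $\mathrm{Tr}[\sigma_{\mathcal{J}}]=1$. Therefore both terms equal $\mathbb{I}_{\mathcal{K}}$, they cancel, and we conclude $\mathrm{Tr}_{\mathcal{K}^c}[Q_{\sigma_{\mathcal{J}}}(\rho_{\mathcal{I}})]=\mathrm{Tr}_{\mathcal{K}^c}[\rho_{\mathcal{I}}]=\rho_{\mathcal{K}}$.

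The only delicate point — and the place I would write most carefully — is the bookkeeping of which parties the various maximally mixed padding factors live on, and the elementary fact that the partial trace of $\mathbb{I}_{\mathcal{J}^c}$ over some of its parties is again the maximally mixed state on the remaining parties. No estimates or analytic arguments are required: the statement is essentially a consequence of linearity of the partial trace and the normalizations $\mathrm{Tr}[\rho_{\mathcal{J}}]=\mathrm{Tr}[\sigma_{\mathcal{J}}]=1$, the disjointness hypothesis $\mathcal{J}\cap\mathcal{K}=\emptyset$ serving precisely to guarantee that tracing out $\mathcal{K}^c$ completely erases the subsystem $\mathcal{J}$ on which $\rho_{\mathcal{J}}$ and $\sigma_{\mathcal{J}}$ differ.
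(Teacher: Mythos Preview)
Your proof is correct and follows essentially the same approach as the paper: expand $Q_{\sigma_{\mathcal{J}}}(\rho_{\mathcal{I}})$ via \eqref{qmo}, apply linearity of the partial trace, and use $\mathcal{J}\subseteq\mathcal{K}^c$ to see that the two correction terms both reduce to the maximally mixed state on $\mathcal{K}$ and hence cancel. Your bookkeeping is in fact more careful than the paper's, which contains a typo (writing $\mathbb{I}_{\mathcal{K}^c}$ where $\mathbb{I}_{\mathcal{K}}$ is meant).
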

\begin{proof}
Let us compute $\mathrm{Tr}_{\mathcal{K}^c}\left[ Q_{\sigma_{\mathcal{J}}}(\rho_{\mathcal{I}})  \right]$ 
\begin{eqnarray}\label{imposed_marginal_k}
\mathrm{Tr}_{\mathcal{K}^c}\left[Q_{\sigma_{\mathcal{J}}}(\rho_{\mathcal{I}}) \right] &=& \mathrm{Tr}_{\mathcal{K}^c}\left[\rho_{\mathcal{I}}\right]- \mathrm{Tr}_{\mathcal{K}^c}\left[\rho_{\mathcal{J}}\right]+\mathrm{Tr}_{\mathcal{K}^c}\left[\sigma_{\mathcal{J}} \right],
\end{eqnarray}
\noindent by definition \eqref{reducedSystem} $\mathrm{Tr}_{\mathcal{K}^c}[\rho_{\mathcal{I}}]=\rho_{\mathcal{K}}$. Also, $\mathcal{J}\cap\mathcal{K}=\emptyset$ implies that $\mathcal{J} \subseteq	\mathcal{K}^c$ and therefore $\mathrm{Tr}_{\mathcal{K}^c}\left[\rho_{\mathcal{J}}\right] = \mathrm{Tr}_{\mathcal{K}^c}\left[\sigma_{\mathcal{J}}\right] = \mathbb{I}_{\mathcal{K}^c}$. Thus, \eqref{imposed_marginal_k} becomes $\mathrm{Tr}_{\mathcal{K}^c}[Q_{\sigma_{\mathcal{J}}}(\rho_{\mathcal{I}})]=\mathrm{Tr}_{\mathcal{K}^c}[\rho_{\mathcal{I}}]=\rho_{\mathcal{K}}$.
\end{proof}

We can verify Prop. \ref{obs1} for Eq. \eqref{qmo_3parties} by choosing $\mathcal{K} = \{C\}$. We obtain $\mathrm{Tr}_{AB}[\rho_{ABC}] = \mathrm{Tr}_{AB}[Q_{\sigma_{AB}}(\rho_{ABC})] = \rho_{C} $. 

In the following proposition we show that the output of Eq. \eqref{qmo} is not always a positive semidefinite hermitian matrix (PSD). Henceforth, to simplify the notation, we will use $Q_{\sigma_{\mathcal{J}}} = Q_{\mathcal{J}}$ and $\rho = \rho_{\mathcal{I}}$.

\begin{prop}\label{obs:CPTP}
The operator \eqref{qmo} is not always positive semidefinite. 
\end{prop}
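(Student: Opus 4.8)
The plan is to prove the proposition by exhibiting an explicit counterexample in the smallest nontrivial case, a two–qubit system with $\mathcal{I}=\{A,B\}$ and $\mathcal{J}=\{A\}$. First I would unfold the conventions of the chapter: with reductions implicitly tensored with maximally mixed states, Def.~\ref{def:qmo} reads $Q_{\sigma_{\mathcal{J}}}(\rho_{\mathcal{I}})=\rho_{\mathcal{I}}-\rho_{\mathcal{J}}\otimes\mathbb{I}_{\mathcal{J}^c}+\sigma_{\mathcal{J}}\otimes\mathbb{I}_{\mathcal{J}^c}$, where $\mathbb{I}_{\mathcal{J}^c}$ is the \emph{maximally mixed state} on the complement. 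For two qubits this becomes $Q_{\sigma_A}(\rho_{AB})=\rho_{AB}-\rho_A\otimes\tfrac{1}{2}I_B+\sigma_A\otimes\tfrac{1}{2}I_B$, where $I_B$ is the $2\times 2$ identity and $\rho_A=\mathrm{Tr}_B[\rho_{AB}]$.

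Next I would make the concrete choice $\rho_{AB}=\kb{00}{00}$, a legitimate pure product density matrix, so that $\rho_A=\kb{0}{0}$, together with the target marginal $\sigma_A=\kb{1}{1}$, which is also a legitimate state. Substituting and writing everything in the computational basis $\{\ket{00},\ket{01},\ket{10},\ket{11}\}$, the three terms are diagonal, equal to $\mathrm{diag}(1,0,0,0)$, $\mathrm{diag}(\tfrac12,\tfrac12,0,0)$ and $\mathrm{diag}(0,0,\tfrac12,\tfrac12)$ respectively, whence $Q_{\sigma_A}(\rho_{AB})=\mathrm{diag}\!\left(\tfrac12,-\tfrac12,\tfrac12,\tfrac12\right)$. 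This operator is Hermitian and has unit trace, consistent with Prop.~\ref{qmo:imposition}, but it carries the eigenvalue $-\tfrac12$ on $\ket{01}$; hence $Q_{\sigma_{\mathcal{J}}}(\rho_{\mathcal{I}})$ need not be positive semidefinite, establishing the claim.

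Finally I would append a short remark explaining that this is not an accident: the map $Q_{\mathcal{J}}$ is affine rather than completely positive, and, exactly as in the discussion of the physical imposition operator in Chapter~\ref{cap2}, its image extends outside the set of density matrices — whenever the prescribed marginal $\sigma_{\mathcal{J}}$ differs too much (in Hilbert–Schmidt distance) from the actual reduction $\rho_{\mathcal{J}}$ of a sufficiently pure $\rho_{\mathcal{I}}$, the subtracted term $-\rho_{\mathcal{J}}\otimes\mathbb{I}_{\mathcal{J}^c}$ forces some eigenvalue negative. There is essentially no real obstacle in this proof; the only point requiring care is the bookkeeping of the "tensor with the maximally mixed state" convention, since using the bare identity operator in place of $\mathbb{I}_{\mathcal{J}^c}$ would break trace normalization and muddle the computation.
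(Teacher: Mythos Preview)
Your proof is correct and takes essentially the same approach as the paper: both exhibit a diagonal two-qubit counterexample with $\mathcal{J}=\{A\}$. The paper works with the parameterized family $\rho_{AB}=\mathrm{diag}(\alpha,0,0,\beta)$ and $\sigma_A=\mathrm{diag}(\gamma,\delta)$, obtaining $Q_A(\rho_{AB})=\tfrac12\,\mathrm{diag}(\alpha+\gamma,\gamma-\alpha,\delta-\beta,\beta+\delta)$ and noting non-positivity when $\alpha>\gamma$ or $\beta>\delta$; your choice $\rho_{AB}=\kb{00}{00}$, $\sigma_A=\kb{1}{1}$ is the extreme point $\alpha=1,\gamma=0$ of that same family.
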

\begin{proof}

Let us consider a $2$-qubit quantum system. Let us define the following density matrix
\begin{equation}\label{np_pAB}
    \rho_{AB} = 
    \begin{pmatrix}
      \alpha & 0 & 0 & 0 \\
      0 & 0 & 0 & 0 \\
      0 & 0 & 0 & 0 \\
      0 & 0 & 0 & \beta
     \end{pmatrix},
\end{equation}
\noindent with $\alpha + \beta = 1$ and $0 \leq \alpha, \beta \leq 1$, and calculate $\rho_A$ from \eqref{np_pAB}
\begin{equation}
    \rho_{A} = \mathrm{Tr}_{B}[\rho_{AB}] = 
    \begin{pmatrix}
      \alpha & 0 \\
      0 & \beta
     \end{pmatrix}.
\end{equation}

\noindent Let us also consider the following $1$-body marginal 
\begin{equation}
    \sigma_{A} = 
    \begin{pmatrix}
      \gamma & 0 \\
      0 & \delta
     \end{pmatrix},
\end{equation}
\noindent with $\gamma + \delta = 1$ and $0 \leq \gamma, \delta \leq 1$. Replacing $\rho_{AB}$, $\rho_A$ and $\sigma_A$ in \eqref{qmo} we obtain



\begin{eqnarray} \label{non_positive}
Q_{A}(\rho_{AB}) &= &  \rho_{AB}-\rho_{A} \otimes \dfrac{\mathbb{I}}{2} +\sigma_{A}\otimes  \dfrac{\mathbb{I}}{2} \nonumber \\
&=& 
\dfrac{1}{2}
\begin{pmatrix}
  \alpha + \gamma & 0 & 0 & 0 \\
  0 & -\alpha + \gamma & 0 & 0 \\
  0 & 0 & -\beta+ \delta & 0 \\
  0 & 0 & 0 & \beta + \delta
\end{pmatrix}.
\end{eqnarray}

\noindent Thus, \eqref{non_positive} is not positive semidefinite when either $\alpha >  \gamma$ or $\beta > \delta$
\end{proof}
We address the issue of positivity in the next sections. Density matrices are trace one hermitian matrices. It turns out that \eqref{qmo} preserves this property:
\vspace{3mm}
\begin{prop}
Let $\rho$ and  $\sigma_{\mathcal{J}}$ be multipartite quantum states, with $\mathcal{I}$ and $\mathcal{J}\subset\mathcal{I}$. The operator \eqref{qmo} is a trace one hermitian matrix.
\end{prop}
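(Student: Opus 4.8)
The plan is to verify the two claimed properties of $Q_{\sigma_{\mathcal{J}}}$ directly from the definition \eqref{qmo}, using linearity of the trace and the already-established convention that reductions are implicitly tensored with maximally mixed states. Recall that, under this convention, a marginal $\tau_{\mathcal{J}}$ really denotes $\tau_{\mathcal{J}}\otimes\mathbb{I}_{\mathcal{J}^c}$, where $\mathbb{I}_{\mathcal{J}^c}$ is the \emph{normalized} maximally mixed state on $\mathcal{J}^c$, so that $\mathrm{Tr}(\tau_{\mathcal{J}}\otimes\mathbb{I}_{\mathcal{J}^c})=\mathrm{Tr}(\tau_{\mathcal{J}})$.

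First I would treat hermiticity. Since $\rho=\rho_{\mathcal{I}}$ is a quantum state it is hermitian; its reduction $\rho_{\mathcal{J}}=\mathrm{Tr}_{\mathcal{J}^c}[\rho_{\mathcal{I}}]$ is hermitian because the partial trace of a hermitian operator is hermitian, and tensoring with $\mathbb{I}_{\mathcal{J}^c}$ preserves hermiticity; likewise $\sigma_{\mathcal{J}}\otimes\mathbb{I}_{\mathcal{J}^c}$ is hermitian. Hence $Q_{\sigma_{\mathcal{J}}}(\rho_{\mathcal{I}})=\rho_{\mathcal{I}}-\rho_{\mathcal{J}}+\sigma_{\mathcal{J}}$ is a real linear combination of hermitian operators and is therefore hermitian.

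Next I would compute the trace. By linearity, $\mathrm{Tr}[Q_{\sigma_{\mathcal{J}}}(\rho_{\mathcal{I}})]=\mathrm{Tr}[\rho_{\mathcal{I}}]-\mathrm{Tr}[\rho_{\mathcal{J}}]+\mathrm{Tr}[\sigma_{\mathcal{J}}]$. Now $\mathrm{Tr}[\rho_{\mathcal{I}}]=1$ since $\rho_{\mathcal{I}}$ is a quantum state. For the middle term, $\mathrm{Tr}[\rho_{\mathcal{J}}]=\mathrm{Tr}[\mathrm{Tr}_{\mathcal{J}^c}[\rho_{\mathcal{I}}]]=\mathrm{Tr}[\rho_{\mathcal{I}}]=1$, because taking a partial trace and then a full trace equals the full trace (and the implicit tensoring with the normalized $\mathbb{I}_{\mathcal{J}^c}$ does not change the trace). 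Similarly $\mathrm{Tr}[\sigma_{\mathcal{J}}]=1$ since $\sigma_{\mathcal{J}}$ is a quantum state. Therefore $\mathrm{Tr}[Q_{\sigma_{\mathcal{J}}}(\rho_{\mathcal{I}})]=1-1+1=1$, which together with the hermiticity established above proves the proposition.

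There is no real obstacle here; the only subtlety worth flagging is the bookkeeping around the implicit tensoring convention — one must be careful that $\mathbb{I}_{\mathcal{J}^c}$ is the normalized maximally mixed state so that all three trace contributions equal $1$ rather than picking up a dimension factor. Once that convention is applied consistently, both claims follow immediately from linearity of the trace and the cyclicity/compatibility of partial and full traces.
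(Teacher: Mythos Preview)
Your proof is correct and follows essentially the same approach as the paper: hermiticity from a sum of hermitian operators, and trace one from linearity of the trace together with $\mathrm{Tr}[\rho]=\mathrm{Tr}[\rho_{\mathcal{J}}]=\mathrm{Tr}[\sigma_{\mathcal{J}}]=1$. Your extra remark about the normalized $\mathbb{I}_{\mathcal{J}^c}$ convention is a helpful clarification but not a different argument.
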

\begin{proof}
$Q_{\mathcal{J}}(\rho)$ is the sum of hermitian matrices, therefore it is also hermitian. Let us calculate the trace of \eqref{qmo}:
\begin{equation}
    \mathrm{Tr}[ Q_{\mathcal{J}}(\rho) ] = \mathrm{Tr}[ \rho ] -  \mathrm{Tr}[ \rho_{\mathcal{J}} ] + \mathrm{Tr}[ \sigma_{\mathcal{J}} ],
\end{equation}

\noindent but $\mathrm{Tr}[ \rho ] = \mathrm{Tr}[ \rho_{\mathcal{J}} ] = \mathrm{Tr}[ \sigma_{\mathcal{J}} ] = 1$. Thus, $\mathrm{Tr}[ Q_{\mathcal{J}}(\rho) ] = 1$.
\end{proof}

Another property of operator \eqref{qmo} is that the composition of two of them commute. 

\begin{prop}\label{Qcommute}
$Q_{\mathcal{J}_2} \circ Q_{\mathcal{J}_1}(\rho) = Q_{\mathcal{J}_1} \circ Q_{\mathcal{J}_2}(\rho) $ for any subsets $\mathcal{J}_1, \mathcal{J}_2 \subset\mathcal{I}$ and any quantum states $\rho,\sigma_{\mathcal{J}_1}$ and $\sigma_{\mathcal{J}_2}$.
\end{prop}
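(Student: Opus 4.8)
The plan is to rewrite $Q_{\mathcal{J}}$ as an affine operator on the real vector space of Hermitian operators on $\mathcal{H}_{\mathcal{I}}$ whose linear part is a projection, and then reduce the claimed commutativity to the commutativity of these projections. Concretely, set $R_{\mathcal{J}}(X):=\mathrm{Tr}_{\mathcal{J}^c}[X]\otimes\mathbb{I}_{\mathcal{J}^c}$ (a linear map) and $\widehat{\sigma}_{\mathcal{J}}:=\sigma_{\mathcal{J}}\otimes\mathbb{I}_{\mathcal{J}^c}$ (a fixed operator), so that, by Definition \ref{def:qmo} together with the convention that reductions are implicitly tensored with maximally mixed states, $Q_{\mathcal{J}}(\rho)=(\mathrm{id}-R_{\mathcal{J}})(\rho)+\widehat{\sigma}_{\mathcal{J}}$.

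First I would record three elementary facts about the maps $R_{\mathcal{J}}$, all flowing from $\mathrm{Tr}[\mathbb{I}_{\mathcal{K}}]=1$, linearity of the partial trace, and $\mathbb{I}_{\mathcal{X}}\otimes\mathbb{I}_{\mathcal{Y}}=\mathbb{I}_{\mathcal{X}\sqcup\mathcal{Y}}$: (i) $R_{\mathcal{J}}\circ R_{\mathcal{J}}=R_{\mathcal{J}}$; (ii) $R_{\mathcal{J}}(\widehat{\sigma}_{\mathcal{J}})=\widehat{\sigma}_{\mathcal{J}}$; and, crucially, (iii) $R_{\mathcal{J}_1}\circ R_{\mathcal{J}_2}=R_{\mathcal{J}_2}\circ R_{\mathcal{J}_1}=R_{\mathcal{J}_1\cap\mathcal{J}_2}$. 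To obtain (iii) I would partition the parties as $\mathcal{I}=\mathcal{K}\sqcup\mathcal{L}\sqcup\mathcal{M}\sqcup\mathcal{N}$ with $\mathcal{K}=\mathcal{J}_1\cap\mathcal{J}_2$, $\mathcal{L}=\mathcal{J}_1\setminus\mathcal{J}_2$, $\mathcal{M}=\mathcal{J}_2\setminus\mathcal{J}_1$, $\mathcal{N}=(\mathcal{J}_1\cup\mathcal{J}_2)^c$, and compute $R_{\mathcal{J}_1}(R_{\mathcal{J}_2}(X))$ by tracing out $\mathcal{L}\sqcup\mathcal{N}$ first and then $\mathcal{M}\sqcup\mathcal{N}$, using that partial traces over disjoint subsystems commute and that the maximally mixed factors created along the way carry trace $1$; everything collapses to $\mathrm{Tr}_{\mathcal{K}^c}[X]\otimes\mathbb{I}_{\mathcal{K}^c}=R_{\mathcal{K}}(X)$, and interchanging $\mathcal{J}_1,\mathcal{J}_2$ gives the reverse order.

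With these in hand I would expand both sides. Writing $Q_{\mathcal{J}_i}(\cdot)=(\mathrm{id}-R_{\mathcal{J}_i})(\cdot)+\widehat{\sigma}_{\mathcal{J}_i}$ and using linearity gives $Q_{\mathcal{J}_2}\circ Q_{\mathcal{J}_1}(\rho)=(\mathrm{id}-R_{\mathcal{J}_2})(\mathrm{id}-R_{\mathcal{J}_1})(\rho)+(\mathrm{id}-R_{\mathcal{J}_2})(\widehat{\sigma}_{\mathcal{J}_1})+\widehat{\sigma}_{\mathcal{J}_2}$, and symmetrically with the indices swapped. Subtracting, the $\rho$-dependent parts cancel since $(\mathrm{id}-R_{\mathcal{J}_2})(\mathrm{id}-R_{\mathcal{J}_1})-(\mathrm{id}-R_{\mathcal{J}_1})(\mathrm{id}-R_{\mathcal{J}_2})=R_{\mathcal{J}_2}R_{\mathcal{J}_1}-R_{\mathcal{J}_1}R_{\mathcal{J}_2}=0$ by (iii). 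The leftover constant difference is $\widehat{\sigma}_{\mathcal{J}_1}-\widehat{\sigma}_{\mathcal{J}_2}-R_{\mathcal{J}_2}(\widehat{\sigma}_{\mathcal{J}_1})+R_{\mathcal{J}_1}(\widehat{\sigma}_{\mathcal{J}_2})$, and by (ii) and (iii) we have $R_{\mathcal{J}_2}(\widehat{\sigma}_{\mathcal{J}_1})=R_{\mathcal{J}_2}R_{\mathcal{J}_1}(\widehat{\sigma}_{\mathcal{J}_1})=R_{\mathcal{J}_1\cap\mathcal{J}_2}(\widehat{\sigma}_{\mathcal{J}_1})$, which equals the reduction of $\sigma_{\mathcal{J}_1}$ to $\mathcal{J}_1\cap\mathcal{J}_2$ tensored with $\mathbb{I}_{(\mathcal{J}_1\cap\mathcal{J}_2)^c}$, and likewise for $R_{\mathcal{J}_1}(\widehat{\sigma}_{\mathcal{J}_2})$. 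Hence the constant difference vanishes as soon as the prescribed marginals $\sigma_{\mathcal{J}_1},\sigma_{\mathcal{J}_2}$ share the same reduction on the overlap $\mathcal{J}_1\cap\mathcal{J}_2$; in the non-overlapping case $\mathcal{J}_1\cap\mathcal{J}_2=\emptyset$ this is automatic (both reductions equal the scalar $1$ and $R_{\emptyset}(\widehat{\sigma}_{\mathcal{J}_i})=\mathbb{I}_{\mathcal{I}}$), and otherwise it holds whenever the data is compatible on overlaps, which is in any case necessary for the marginal problem to admit a solution. In either situation $Q_{\mathcal{J}_2}\circ Q_{\mathcal{J}_1}=Q_{\mathcal{J}_1}\circ Q_{\mathcal{J}_2}$.

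I expect the main obstacle to be establishing (iii): keeping the bookkeeping of partial traces over the four-block partition straight, and in particular making sure the maximally mixed factors are normalized consistently so that tracing over $\mathcal{L}\sqcup\mathcal{N}$ of a state on $\mathcal{K}\sqcup\mathcal{M}$ tensored with $\mathbb{I}_{\mathcal{L}}\otimes\mathbb{I}_{\mathcal{N}}$ behaves as expected. Once (iii) is secured, the cancellation of the linear parts and the comparison of the constant terms are routine, the only extra point being the coincidence of the two marginals' reductions on the intersection, which is precisely why the non-overlapping case treated in this chapter is the clean one.
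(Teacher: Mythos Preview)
Your argument is correct and follows the same core idea as the paper: expand the composition explicitly and observe that the resulting expression is symmetric under $\mathcal{J}_1\leftrightarrow\mathcal{J}_2$. The paper does this by direct computation, assuming from the outset that $\sigma_{\mathcal{J}_1}$ and $\sigma_{\mathcal{J}_2}$ arise as reductions of a common global state $\sigma$; it then writes out
\[
Q_{\mathcal{J}_2}\circ Q_{\mathcal{J}_1}(\rho)=\rho-\mathrm{Tr}_{\mathcal{J}_2^c}[\rho]-\mathrm{Tr}_{\mathcal{J}_1^c}[\rho]+\mathrm{Tr}_{\mathcal{J}_2^c}\mathrm{Tr}_{\mathcal{J}_1^c}[\rho]+\sigma_{\mathcal{J}_1}+\sigma_{\mathcal{J}_2}-\mathrm{Tr}_{\mathcal{J}_2^c}\mathrm{Tr}_{\mathcal{J}_1^c}[\sigma]
\]
and notes the symmetry.

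Your version is somewhat cleaner and slightly sharper: by packaging the linear part as the projection $R_{\mathcal{J}}$ and isolating the identity $R_{\mathcal{J}_1}R_{\mathcal{J}_2}=R_{\mathcal{J}_2}R_{\mathcal{J}_1}=R_{\mathcal{J}_1\cap\mathcal{J}_2}$, you make explicit that the $\rho$-dependent part always commutes, and that the only obstruction lies in the constant terms. You then identify the minimal hypothesis needed, namely that $\sigma_{\mathcal{J}_1}$ and $\sigma_{\mathcal{J}_2}$ agree on the overlap $\mathcal{J}_1\cap\mathcal{J}_2$, which is weaker than the paper's assumption of a common global $\sigma$. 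Both proofs thus tacitly add a compatibility hypothesis absent from the proposition as stated; you are right to flag this, and your formulation pinpoints exactly where it enters.
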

\begin{proof}
Let us first consider that the given marginals are compatible with some global density matrix $\sigma$, this is $\sigma_{\mathcal{J}_1} = \mathrm{Tr}_{\mathcal{J}_1^c}\left[ \sigma \right]$ and $\sigma_{\mathcal{J}_2} = \mathrm{Tr}_{\mathcal{J}_2^c}\left[ \sigma \right]$. Let us compute $Q_{\mathcal{J}_2} \circ Q_{\mathcal{J}_1}(\rho) = Q_{\mathcal{J}_2}(Q_{\mathcal{J}_1}(\rho) )$
\begin{multline}\label{commu}
Q_{\mathcal{J}_2} \circ Q_{\mathcal{J}_1}(\rho) = \rho - \mathrm{Tr}_{\mathcal{J}_2^c}\left[ \rho \right] - \mathrm{Tr}_{\mathcal{J}_1^c}\left[ \rho \right] + \mathrm{Tr}_{\mathcal{J}_2^c}\mathrm{Tr}_{\mathcal{J}_1^c}\left[ \rho \right] + \sigma_{\mathcal{J}_1} + \sigma_{\mathcal{J}_2} \\
+ \mathrm{Tr}_{\mathcal{J}_2^c}\mathrm{Tr}_{\mathcal{J}_1^c}\left[ \sigma \right].
\end{multline}
\noindent Note that \eqref{commu} is invariant under the interchange $\mathcal{J}_1 \leftrightarrow \mathcal{J}_2$. Therefore, 
\begin{equation}\label{commu1}
    Q_{\mathcal{J}_2} \circ Q_{\mathcal{J}_1}(\rho) = Q_{\mathcal{J}_1} \circ Q_{\mathcal{J}_2}(\rho).
\end{equation}
\end{proof}
We can use \eqref{commu1} to prove that the composition $Q_{\mathcal{J}_m}\circ\dots\circ Q_{\mathcal{J}_1}$ contains the set $\{ \sigma_{\mathcal{J}_i} \}_{i=1}^m$ of imposed quantum marginals. To do that, we use the notation $Q_{\mathcal{J}_1,\dots,\mathcal{J}_m}=Q_{\mathcal{J}_m}\circ\dots\circ Q_{\mathcal{J}_1}$. Here, we are considering overlapping and non-overlapping marginals.

\vspace{3mm}
\begin{prop}\label{prop:composition}
Given a set $\{ \sigma_{\mathcal{J}_i}\}_{i=1}^m$ of quantum marginals such that $\mathrm{Tr}_{\mathcal{J}_i^c}\left[ \sigma \right] = \sigma_{\mathcal{J}_i}$ for a density matrix $\sigma$, then the composition 
\begin{equation}\label{composition_qmo}
 Q_{\mathcal{J}_1,\dots,\mathcal{J}_m}(\rho),  
\end{equation}
 satisfy $   \mathrm{Tr}_{\mathcal{J}_i^c}\left[  Q_{\mathcal{J}_1,\dots,\mathcal{J}_m}(\rho) \right] = \sigma_{\mathcal{J}_i}$, for $i = 1 \ldots m.$
\end{prop}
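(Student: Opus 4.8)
The plan is to reduce the statement to the single-imposition case, Prop.~\ref{qmo:imposition}, by exploiting the commutativity established in Prop.~\ref{Qcommute}. Fix an index $i\in\{1,\dots,m\}$. Since by hypothesis all the imposed marginals $\sigma_{\mathcal{J}_1},\dots,\sigma_{\mathcal{J}_m}$ are reductions of one common density matrix $\sigma$, Prop.~\ref{Qcommute} applies to every pair $Q_{\mathcal{J}_a},Q_{\mathcal{J}_b}$ occurring in the composition. First I would upgrade this pairwise commutativity to commutativity of the whole composition: writing $Q_{\mathcal{J}_1,\dots,\mathcal{J}_m}(\rho)=L\bigl(Q_{\mathcal{J}_a}(Q_{\mathcal{J}_b}(R(\rho)))\bigr)$ for any adjacent pair, where $L,R$ are the surrounding sub-compositions, Prop.~\ref{Qcommute} applied with argument $R(\rho)$ lets me interchange $Q_{\mathcal{J}_a}$ and $Q_{\mathcal{J}_b}$, after which $L$ is applied to both sides. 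Repeating such adjacent transpositions brings $Q_{\mathcal{J}_i}$ to the outermost slot, giving
\begin{equation}
Q_{\mathcal{J}_1,\dots,\mathcal{J}_m}(\rho)=Q_{\mathcal{J}_i}\bigl(Z_i\bigr),\qquad Z_i:=\bigl(Q_{\mathcal{J}_m}\circ\dots\circ\widehat{Q_{\mathcal{J}_i}}\circ\dots\circ Q_{\mathcal{J}_1}\bigr)(\rho),\nonumber
\end{equation}
where the hat denotes omission. Applying Prop.~\ref{qmo:imposition} with $\sigma_{\mathcal{J}_i}$ the imposed marginal and $Z_i$ in the role of $\rho_{\mathcal{I}}$ yields $\mathrm{Tr}_{\mathcal{J}_i^c}[Q_{\mathcal{J}_i}(Z_i)]=\sigma_{\mathcal{J}_i}$, hence $\mathrm{Tr}_{\mathcal{J}_i^c}[Q_{\mathcal{J}_1,\dots,\mathcal{J}_m}(\rho)]=\sigma_{\mathcal{J}_i}$; since $i$ was arbitrary, the proposition follows.

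The single point requiring care — and the main (mild) obstacle — is that $Z_i$ is in general only a trace-one hermitian operator, not a genuine quantum state (cf. Prop.~\ref{obs:CPTP}), whereas Props.~\ref{qmo:imposition} and~\ref{Qcommute} are phrased for quantum states. I would therefore note explicitly that the proofs of those two propositions invoke only linearity of the partial trace together with the elementary identities $\mathrm{Tr}_{\mathcal{J}^c}[\tau_{\mathcal{J}}\otimes\mathbb{I}_{\mathcal{J}^c}]=\tau_{\mathcal{J}}$ and $\mathrm{Tr}[\tau]=1$; positivity is never used. Consequently both results remain valid with arbitrary trace-one hermitian operators in place of states, and for Prop.~\ref{Qcommute} the hypothesis genuinely used is only that the imposed marginals come from a common global operator, which holds here. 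With this observation the reduction above goes through verbatim.

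It is worth recording why the commutativity route is the economical one. A naive induction on $m$ handles the base case $m=1$ via Prop.~\ref{qmo:imposition}, and in the step the index $i=m$ is again handled by Prop.~\ref{qmo:imposition} applied to $Q_{\mathcal{J}_1,\dots,\mathcal{J}_{m-1}}(\rho)$; but for $i<m$ the outermost operator $Q_{\mathcal{J}_m}$ may overlap $\mathcal{J}_i$, so Prop.~\ref{obs1} does not apply and, in the overlapping regime, the imposition of $\sigma_{\mathcal{J}_m}$ genuinely disturbs the reduction onto $\mathcal{J}_i$. One is thus forced back to commutativity — which, thanks to the common global $\sigma$, guarantees that re-imposing $\sigma_{\mathcal{J}_i}$ last (equivalently, reordering it to the outside as above) restores exactly the prescribed marginal.
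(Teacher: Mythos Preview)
Your proof is correct and follows essentially the same route as the paper: use the commutativity from Prop.~\ref{Qcommute} to bring $Q_{\mathcal{J}_i}$ to the outermost position, then apply Prop.~\ref{qmo:imposition}. Your version is in fact more careful than the paper's, since you explicitly address the point that the intermediate operators need not be positive semidefinite and observe that the relevant propositions only require hermiticity and unit trace---a subtlety the paper's proof passes over in silence.
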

\begin{proof}
By taking $\mathcal{J} = \mathcal{J}_2$ and $\rho_{\mathcal{I}}  = Q_{\mathcal{J}_1}(\rho)$ from Prop. \ref{qmo:imposition}, we know that $Q_{\mathcal{J}_2} ( Q_{\mathcal{J}_1}(\rho) )$ satisfies
$$\mathrm{Tr}_{\mathcal{J}_2^c}\left[ Q_{\mathcal{J}_2} \left( Q_{\mathcal{J}_1}(\rho) \right)  \right] = \sigma_{\mathcal{J}_2}.$$ Also, because of Prop. \ref{Qcommute} $\mathrm{Tr}_{\mathcal{J}_1^c}\left[ Q_{\mathcal{J}_2} \left( Q_{\mathcal{J}_1}(\rho) \right)  \right] = \mathrm{Tr}_{\mathcal{J}_1^c}\left[ Q_{\mathcal{J}_1} \left( Q_{\mathcal{J}_2}(\rho) \right)  \right] = \sigma_{\mathcal{J}_1}.$ In general, 
$$Q_{\mathcal{J}_1,\dots,\mathcal{J}_{m}} = Q_{\mathcal{J}_m}( Q_{\mathcal{J}_1,\dots,\mathcal{J}_{m-1}}) = \ldots = Q_{\mathcal{J}_{m-1}}( Q_{\mathcal{J}_1,\dots,\mathcal{J}_{m}}),$$
\noindent and therefore $\mathrm{Tr}_{\mathcal{J}_i^c}\left[  Q_{\mathcal{J}_1,\dots,\mathcal{J}_m}(\rho) \right] = \sigma_{\mathcal{J}_i}$, for $i=1,\ldots,m$.
\end{proof}

\newpage
\subsubsection{Derivation of expressions}\label{app:maps}
Here, we show some expressions obtained from applying \eqref{composition_qmo}. We consider the set of all $k$-body marginals for a global state formed by $N$ bodies. For example, for $N=3$, with labels in $\mathcal{I} = \{A,B,C\}$, the possible reductions to $2$ bodies are $\sigma_{AB}, \sigma_{AC}$ and $\sigma_{BC}$. Thus, from \eqref{composition_qmo} 
\begin{equation}
\begin{aligned}
Q_{AB,AC,BC}(\rho_{ABC}) &= \rho_{ABC} - \left( \rho_{AB} + \rho_{AC} + \rho_{BC} \right) + \left( \rho_A + \rho_B + \rho_C \right) \\ 
&+ \left( \sigma_{AB} + \sigma_{AC} + \sigma_{BC} \right) - \left( \sigma_A + \sigma_B + \sigma_C \right),
\end{aligned}
\end{equation}
\noindent with $\rho_{ABC} = \rho$. 

Expressions obtained in this way can be simplified by choosing $\rho = \mathbb{I}_{\mathcal{I}}$. Next we list some of them for different values of $N$ and $k$:

\begin{itemize}
\item $N = 2$, $k = 1$
\begin{equation}
    \label{N2k1}
    Q_{A,B}(\mathbb{I}_{\mathcal{I}}) = \sigma_A + \sigma_B - \mathbb{I}_{\mathcal{I}}.
\end{equation}
\item $N = 3$, $k = 1$
\begin{equation}
    \label{N3k1}
    Q_{A,B,C}(\mathbb{I}_{\mathcal{I}}) = \sigma_A + \sigma_B + \sigma_C - 2 \mathbb{I}_{\mathcal{I}}.
\end{equation}
\item $N = 4$, $k = 1$
\begin{equation}
    \label{N4k1}
    Q_{A,B,C, D}(\mathbb{I}_{\mathcal{I}}) = \sigma_A + \sigma_B + \sigma_C + \sigma_D - 3\mathbb{I}_{\mathcal{I}}.
\end{equation}
\item $N = 3$, $k = 2$
\begin{equation}
    \label{N3k2}
    Q_{AB,AC,BC}(\mathbb{I}_{\mathcal{I}}) = \mathbb{I}_{\mathcal{I}} + \left( \sigma_{AB} + \sigma_{AC} + \sigma_{BC} \right) - \left( \sigma_A + \sigma_B + \sigma_C \right).
\end{equation}
\item $N = 4$, $k = 2$
\begin{equation}
    \label{N4k2}
\begin{aligned}
Q_{AB, \ldots, CD}(\mathbb{I}_{\mathcal{I}}) = 3\mathbb{I}_{\mathcal{I}} + &\left( \sigma_{AB} + \sigma_{AC} +  \sigma_{AD} + \sigma_{BC} + \sigma_{BD}+ \sigma_{CD} \right) \\ 
& - 2 \left( \sigma_A + \sigma_B + \sigma_C + \sigma_D \right). \\
\end{aligned} 
\end{equation}\label{N5k2_formula}
\item $N = 5$, $k = 2$
\begin{equation}
\label{N5k2}
\begin{aligned}
Q_{AB, \ldots, DE}(\mathbb{I}_{\mathcal{I}}) = 6\mathbb{I}_{\mathcal{I}} &+ \left( \sigma_{AB} + \sigma_{AC} + \sigma_{AD} + \sigma_{AE} + \sigma_{BC} + \sigma_{BD} + \sigma_{BE} \right.\\ 
& \left. + \sigma_{CD} + \sigma_{CE} + \sigma_{DE} \right)  - 3\left( \sigma_A + \sigma_B + \sigma_C + \sigma_D + \sigma_E\right). \\
&
\end{aligned}
\end{equation}
\item $N = 4$, $k = 3$
\begin{equation}
\label{N4k3}
\begin{aligned}
Q_{ABC,\ldots, BCD}(\mathbb{I}_{\mathcal{I}}) &= \left( \sigma_{ABC} + \sigma_{ABD} + \sigma_{ACD} + \sigma_{BCD} \right)  - \left( \sigma_{AB} + \sigma_{AC} + \sigma_{AD} \right. \\  
& \left. + \sigma_{BC} + \sigma_{BD}+ \sigma_{CD} \right) + \left( \sigma_A + \sigma_B + \sigma_C + \sigma_D \right) - \mathbb{I}_{\mathcal{I}}
\end{aligned}
\end{equation}
\end{itemize}

\noindent Some of the expressions above suggest the following proposition:
\begin{prop}
Let $\mathcal{I}=\{ A, B, C, \ldots, \mathcal{I}_N \}$ and $\mathcal{J}=\{ AB, AC, \ldots, \mathcal{J}_{m} \}$ be the sets of labels associated to all one and two party marginals, respectively, with $|\mathcal{I}|=N$ and $|\mathcal{J}| =  2^{-1}N!/(N-2)!$. Given the set $\{ \sigma_j\}_{j=1}^{m}$ of all possible bipartite quantum marginals, the following formula holds
\begin{equation}\label{kNmarginals}
    Q_{AB, AC \ldots, \mathcal{J}_{m}}(\mathbb{I}_{\mathcal{I}})= \underset{  j \in \mathcal{J}  }{ \sum } \sigma_j - (N-2)\underset{ i \in \mathcal{I} }{ \sum } \sigma_i + \dfrac{(N-1)(N-2)}{2}\mathbb{I}_{\mathcal{I}}.
\end{equation}
\end{prop}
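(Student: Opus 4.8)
The plan is to obtain a closed-form ``inclusion--exclusion'' expansion of the iterated operator and then specialise it. For $S\subseteq\{1,\dots,m\}$ write $\mathcal{J}_S:=\bigcap_{i\in S}\mathcal{J}_i$ with the convention $\mathcal{J}_{\emptyset}:=\mathcal{I}$, and keep Def.~\ref{def:qmo}'s convention that reductions are tensored with maximally mixed states on the complement (so a reduction to the empty party set is $\mathrm{Tr}(\cdot)\,\mathbb{I}_{\mathcal{I}}$). I would prove, by induction on $m$, that
\begin{equation}\label{eq:pp-expansion}
Q_{\mathcal{J}_1,\dots,\mathcal{J}_m}(\rho)=\sum_{S\subseteq\{1,\dots,m\}}(-1)^{|S|}\,\rho_{\mathcal{J}_S}\;-\sum_{\emptyset\neq S\subseteq\{1,\dots,m\}}(-1)^{|S|}\,\sigma_{\mathcal{J}_S},
\end{equation}
where $\rho_{\mathcal{J}_{\emptyset}}:=\rho$. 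The base case $m=1$ is \eqref{qmo}. For the inductive step one applies $Q_{\mathcal{J}_{m+1}}(X)=X-\mathrm{Tr}_{\mathcal{J}_{m+1}^c}[X]+\sigma_{\mathcal{J}_{m+1}}$ to $X=Q_{\mathcal{J}_1,\dots,\mathcal{J}_m}(\rho)$, using linearity together with the ``composition through intersection'' identities $\mathrm{Tr}_{\mathcal{J}_{m+1}^c}[\rho_{\mathcal{J}_S}]=\rho_{\mathcal{J}_{S\cup\{m+1\}}}$ and $\mathrm{Tr}_{\mathcal{J}_{m+1}^c}[\sigma_{\mathcal{J}_S}]=\sigma_{\mathcal{J}_{S\cup\{m+1\}}}$ (the latter using, as throughout, that the $\sigma_j$ are reductions of one global state $\sigma$); a reindexing $T=S$ or $T=S\cup\{m+1\}$ then rebuilds the two sums over $\{1,\dots,m+1\}$. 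This is exactly the $m=2$ bookkeeping already displayed in the proof of Prop.~\ref{Qcommute}.

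Setting $\rho=\mathbb{I}_{\mathcal{I}}$: every reduction of the maximally mixed state is again maximally mixed, so $(\mathbb{I}_{\mathcal{I}})_{\mathcal{J}_S}=\mathbb{I}_{\mathcal{I}}$ for all $S$; since $\sum_{S\subseteq\{1,\dots,m\}}(-1)^{|S|}=0$ for $m\geq1$, the first sum in \eqref{eq:pp-expansion} vanishes identically, leaving
\begin{equation}\label{eq:pp-collapsed}
Q_{\mathcal{J}_1,\dots,\mathcal{J}_m}(\mathbb{I}_{\mathcal{I}})=-\sum_{\emptyset\neq S\subseteq\{1,\dots,m\}}(-1)^{|S|}\,\sigma_{\mathcal{J}_S}.
\end{equation}

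Now take the $\mathcal{J}_i$ to be all $m=\binom{N}{2}$ pairs in $\mathcal{I}$ and group the right-hand side of \eqref{eq:pp-collapsed} by the value of $\mathcal{J}_S$, which (being an intersection of two-element sets) is a pair, a singleton, or empty, and these three types partition the nonempty $S$. Pairs force $|S|=1$, and each $j\in\mathcal{J}$ comes from a unique such $S$, so each $\sigma_j$ enters $Q_{\mathcal{J}_1,\dots,\mathcal{J}_m}(\mathbb{I}_{\mathcal{I}})$ with coefficient $+1$, giving $\sum_{j\in\mathcal{J}}\sigma_j$. For a fixed vertex $a$ one has $\mathcal{J}_S=\{a\}$ precisely when $S$ consists of at least two of the $N-1$ pairs through $a$, so the coefficient of $\sigma_a$ is $-\sum_{k=2}^{N-1}\binom{N-1}{k}(-1)^k=-(N-2)$ (using $\sum_{k=0}^{N-1}\binom{N-1}{k}(-1)^k=0$), giving $-(N-2)\sum_{i\in\mathcal{I}}\sigma_i$. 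Finally $\sigma_{\emptyset}\otimes\mathbb{I}_{\mathcal{I}}=\mathbb{I}_{\mathcal{I}}$, and its coefficient $-\sum_{S:\mathcal{J}_S=\emptyset}(-1)^{|S|}$ is recovered by difference: the total $\sum_{\emptyset\neq S}(-1)^{|S|}$ equals $-1$, while the pair- and singleton-families contribute $-\binom{N}{2}$ and $N(N-2)$ respectively, so $\sum_{S:\mathcal{J}_S=\emptyset}(-1)^{|S|}=-1+\binom{N}{2}-N(N-2)=-\frac{(N-1)(N-2)}{2}$, hence the term $+\frac{(N-1)(N-2)}{2}\mathbb{I}_{\mathcal{I}}$. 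Adding the three groups yields \eqref{kNmarginals}; as a sanity check $N=3,4,5$ reproduce \eqref{N3k2}, \eqref{N4k2} and \eqref{N5k2}.

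The induction giving \eqref{eq:pp-expansion} is routine once the composition-through-intersection identities are recorded; the real work is the combinatorial accounting in the last paragraph — deciding, for each intersection type, exactly which index sets $S$ realise it (and checking that the three types are disjoint) and then summing the alternating binomial weights — together with keeping the ``tensor with $\mathbb{I}$ on the complement'' bookkeeping airtight so that the degenerate cases behave as written, in particular that $(\mathbb{I}_{\mathcal{I}})_{\mathcal{J}_S}=\mathbb{I}_{\mathcal{I}}$ for every $S$ including $\mathcal{J}_S=\emptyset$.
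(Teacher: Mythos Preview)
Your proof is correct and takes a genuinely different route from the paper's. The paper argues by induction on the number of parties $N$: it posits the general shape $\alpha\sum\sigma_j+\beta\sum\sigma_i+\gamma\,\mathbb{I}_{\mathcal{I}}$, invokes the commutation property (Prop.~\ref{Qcommute}) to justify that all two-body (resp.\ one-body) terms appear with equal weight, then traces out the $(N{+}1)$-th party and uses the claimed identity $\mathrm{Tr}_{\mathcal{I}'_{N+1}}[Q_{\dots}(\mathbb{I}_{\mathcal{I}'})]=Q_{\dots}(\mathbb{I}_{\mathcal{I}})$ to match coefficients against the $N$-party formula. Your approach instead proves a completely general inclusion--exclusion expansion \eqref{eq:pp-expansion} of $Q_{\mathcal{J}_1,\dots,\mathcal{J}_m}(\rho)$ by induction on $m$, and then extracts the coefficients by elementary graph combinatorics (classifying intersections of edge sets as pair, singleton, or empty). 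What this buys you: the intermediate formula \eqref{eq:pp-expansion} is strictly more general (it handles arbitrary families $\{\mathcal{J}_i\}$, not just all pairs), the uniform-weight fact falls out rather than being asserted on symmetry grounds, and you avoid the partial-trace reduction step that the paper uses without justification. The paper's argument is shorter once the base cases \eqref{N3k2}--\eqref{N5k2} are in hand, but yours is more self-contained and would immediately adapt to the $k$-body analogues hinted at in \eqref{N4k3}.
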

\begin{proof}
Let us show by induction this proposition. We saw above that \eqref{kNmarginals} holds for $N=3,4$ and $5$. Let us suppose that Eq. \eqref{kNmarginals} is valid for any $N$ and show that it is also true for $N+1$. Let us consider an $(N+1)$-qudit system and all its 2-party marginals and let $\mathcal{I}'$ and $\mathcal{J}'$ be the ordered sets of labels for the $N+1$ qudits and the $2^{-1}(N+1)!/(N-1)!$ 2-body marginals, respectively. In general, we can write 
\begin{equation}\label{kNplus1marginals}
    Q_{AB, AC \ldots, \mathcal{J}'_{\ell}}(\mathbb{I}_{\mathcal{I}'})=  \underset{  j \in \mathcal{J}'  }{ \sum } \alpha_j \sigma_j +  \underset{ i \in \mathcal{I}' }{ \sum } \beta_i \sigma_i + \gamma  \mathbb{I}_{\mathcal{I}'}.
\end{equation}
\noindent Because of the commutation property (see Prop. \ref{Qcommute}), the 2-body marginals in \eqref{kNplus1marginals} have to appear equally weighted, this is $\alpha = \alpha_j$. Also, since the 1-body marginals result from partial tracing an equal number of 2-body marginals, $\beta = \beta_i$. Thus, Eq. \eqref{kNplus1marginals} becomes
\bigskip
\begin{equation}\label{kNplus1marginals2}
    Q_{AB, AC \ldots, \mathcal{J}'_{\ell}}(\mathbb{I}_{\mathcal{I}'})= \alpha \underset{  j \in \mathcal{J}'  }{ \sum } \sigma_j+ \beta \underset{ i \in \mathcal{I}' }{ \sum } \sigma_i  + \gamma \mathbb{I}_{\mathcal{I}'}.
\end{equation}
\noindent Let us take partial trace of \eqref{kNplus1marginals2} over the last element in the set $\mathcal{I}'$, denoted $\mathcal{I}'_{N+1}$. We obtain 
\begin{eqnarray}\label{kNplus2marginals}
\mathrm{Tr}_{\mathcal{I}'_{N+1}} \left[ Q_{AB, AC,\ldots, \mathcal{J}'_{\ell}}(\mathbb{I}_{\mathcal{I}'}) \right]&=& \alpha \underset{  j \in \mathcal{J}  }{ \sum } \sigma_j + \alpha \underset{ i \in \mathcal{I} }{ \sum } \sigma_i  + \beta \underset{ i \in \mathcal{I} }{ \sum } \sigma_i + \beta \mathbb{I}_{\mathcal{I}} + \gamma \mathbb{I}_{\mathcal{I}}, \nonumber \\
&=& \alpha \underset{  j \in \mathcal{J}  }{ \sum } \sigma_j  + (\alpha + \beta) \underset{ i \in \mathcal{I} }{ \sum } \sigma_i + (\beta + \gamma) \mathbb{I}_{\mathcal{I}}.
\end{eqnarray}
\noindent Using the fact that $\mathrm{Tr}_{\mathcal{I}'_{N+1}} \left[  Q_{AB, AC,\ldots, \mathcal{J}'_{\ell}}(\mathbb{I}_{\mathcal{I}'}) \right] = Q_{AB, AC, \ldots, \mathcal{J}_{m}}(\mathbb{I}_{\mathcal{I}})$, we obtain $\alpha = 1$, $\beta = -(N+1-2)$ and $\gamma = (N+1 - 1)(N+1 -2)/2$. Thus, Eq. \eqref{kNplus1marginals} becomes
\begin{equation}\label{kNplus1marginals3}
    Q_{AB, AC \ldots, \mathcal{J}'_{\ell}}(\mathbb{I}_{\mathcal{I}'})= \underset{  j \in \mathcal{J}'  }{ \sum } \sigma_j - (N+1-2) \underset{ i \in \mathcal{I}' }{ \sum } \sigma_i + \dfrac{(N+1 - 1)(N+1 -2)}{2} \mathbb{I}_{\mathcal{I}'},
\end{equation}
\noindent and therefore \eqref{kNmarginals} holds. 

\end{proof}
\section{Numerical Study}\label{sec:numerical_study}

In the previous section we showed that $Q_{\mathcal{J}_1,\dots,\mathcal{J}_m}(\rho_0)$ is a trace one hermitian matrix containing the imposed marginals. Numerically, we have found that for some set $\{ \sigma_{\mathcal{J}_i} \}_{i=1}^m$ and initial state $\rho_0$, the composition $Q_{\mathcal{J}_1,\dots,\mathcal{J}_m}(\rho_0)$ is PSD. However, finding the correct initial state $\rho_0$, or even more, determining whether it exists for a given set of marginals, is not easy. Here, we want to show the occurrence of PSD matrices when choosing $\rho_0 = \mathbb{I}_{\mathcal{I}}$ in Eq. \eqref{composition_qmo} for different number of parties and local dimension $d$. First, we define the \emph{generator state}:

\begin{defi}[Generator state]\label{gen:qmp}
Given a set $\{ \mathcal{J}_i \}$ of $m$ labels, such that $\mathcal{J}_i \subset \mathcal{I}$, with $\mathcal{I}$ an alphabet of $N$ symbols, a quantum state $\rho_{gen}$ is called a generator state if $\mathrm{Tr}_{\mathcal{J}_i^c}[ \rho_{gen} ] = \sigma_{\mathcal{J}_i}$, for all $i = 1, \ldots, m$.
\end{defi}


For this study we considered $|\mathcal{J}_i| = k$, for $i=1, \ldots, m$. This is, all the quantum marginals are formed by $k$ bodies, each of them with local dimension $d$. Thus, the number of possible $k$-body marginals is 
\begin{equation}
M(N,k) =  \frac{N!}{k!(N-k)!}. 
\end{equation}


In Fig. \ref{positive_cases} we show the number of PSD matrices (denoted \textit{NPM}) resulting from $Q_{\mathcal{J}_1,\dots,\mathcal{J}_m}(\mathbb{I}_{\mathcal{I}})$ vs the number $m$ of quantum marginals, obtained from numerical simulations for different values of $N$ and $k$. In each of the plots in Fig. \ref{positive_cases} we compare results for $d=2$ (magenta solid line) and $d=3$ (blue dotted line). We see that NPM grows with the local dimension $d$ (see also Fig. \ref{fig:N4_k3_m4}, which shows NPM vs $d$). We found from these numerical simulations that for mixed states $\rho_{gen}$ chosen at random according to the Hilbert-Schmidt measure,  $Q_{\mathcal{J}_1,\dots,\mathcal{J}_m}(\mathbb{I}_{\mathcal{I}})$ outputs a full rank density matrix containing the given marginals, which turned out to be different to the generator state (mixed states are not uniquely determined by their quantum marginals). Thus, if it is known in advance that the marginals are compatible with a full rank global state, then we can easily find, with high probability, a global state  by simply applying the operator $Q_{\mathcal{J}_1,\dots,\mathcal{J}_m}$ to $\mathbb{I}_{\mathcal{I}}$. The chances are even greater for large local dimensions $d$, as numerical simulations suggest. The probability becomes smaller for quantum states close to the pure states, and for pure states $\rho_{gen}$ randomly drawn from the Haar measure \cite{Mezzadri_2007}, the occurrence of PSD matrices is zero for most of cases.



\begin{figure}[h!]
    \centering
    
    \subfloat[\label{fig:N4_k2}]{%
       \includegraphics[scale=0.418]{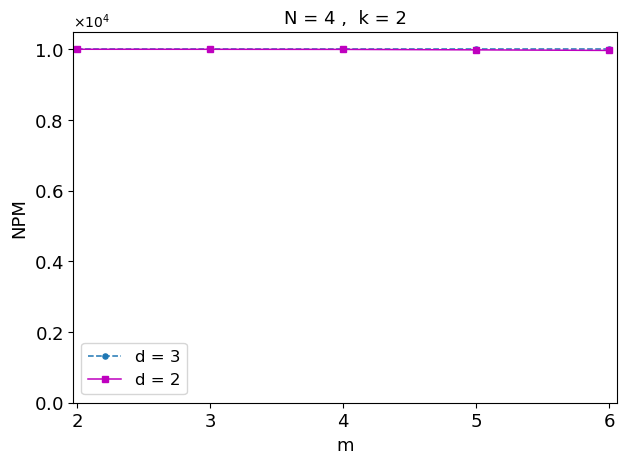}
     }
     \subfloat[\label{fig:N4_k3}]{%
       \includegraphics[scale=0.418]{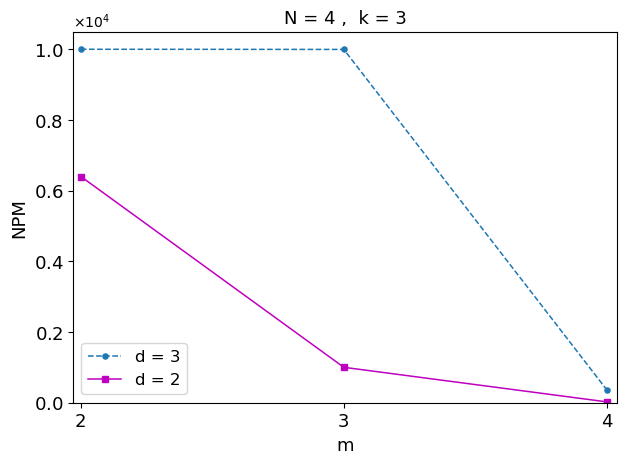}
     }
     
     \subfloat[\label{fig:N5_k2}]{%
       \includegraphics[scale=0.418]{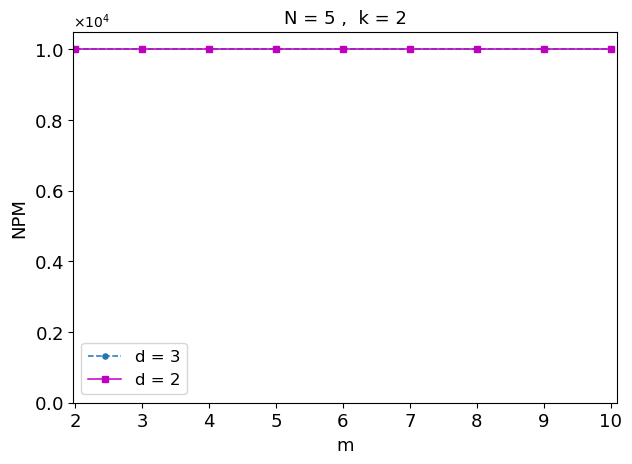}
     }
     \subfloat[\label{fig:N5_k3}]{%
       \includegraphics[scale=0.418]{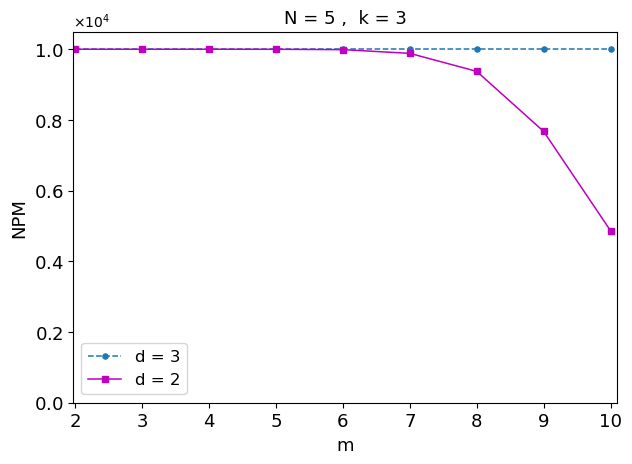}
     }
     
     \subfloat[\label{fig:N6_k4}]{%
       \includegraphics[scale=0.418]{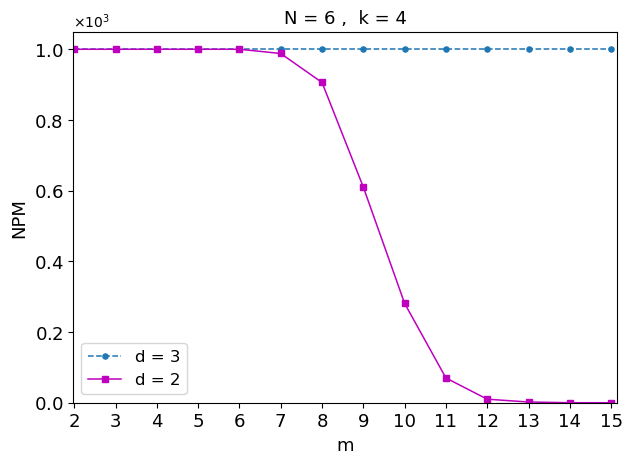}
     }
     \subfloat[\label{fig:N6_k5}]{%
       \includegraphics[scale=0.418]{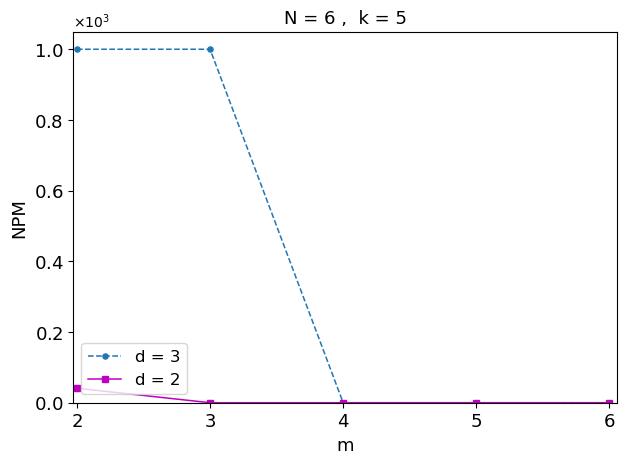}
     }

     \caption[QMP: Number of positive semidefinite cases vs number of input marginals]{ \textit{Number of positive semidefinite matrices (NPM)} vs the number $m$ of quantum marginals, for different values of $N$, $k$ and $d$, obtained from the composition $Q_{\mathcal{J}_1,\dots,\mathcal{J}_m}(\mathbb{I}_{\mathcal{I}})$. These results were obtained from a 10000 (1000 for the cases with $N=6$) full rank generator states drawn at random under the Hilbert-Schmidt measure \cite{Karol_2011}. For each $\rho_{gen}$, $m$ out of $M(N,k)$ possible quantum marginals are chosen at random.}
     \label{positive_cases}
\end{figure}


\begin{figure}
    \centering
    \includegraphics[scale = 0.47]{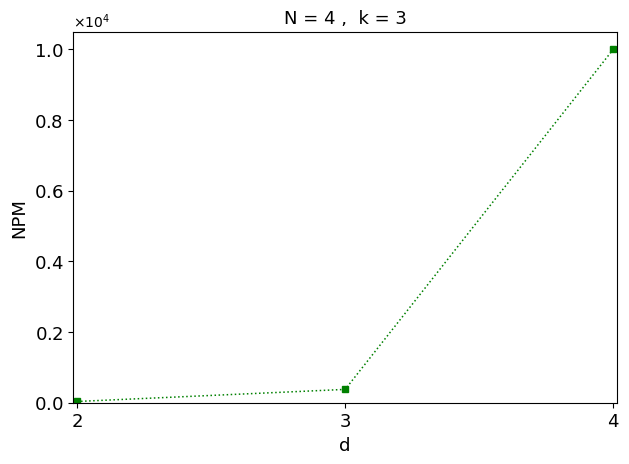}
    \caption[QMP: Number of positive semidefinite cases vs local dimension]{NMP vs $d$ considering all the possible 3-body marginals. For $d=2$, only 35 out of 10000 are PSD. For $d=4$, all the output matrices are PSD.  }
    \label{fig:N4_k3_m4}
\end{figure}

\section{Algorithm}
 
In section \ref{sec:numerical_study}, we saw that given a set of all possible $k$-body marginals calculated from a mixed generator state, the chances of finding a full rank PSD matrix compatible with those marginals from the composition $Q_{\mathcal{J}_1,\dots,\mathcal{J}_m}(\mathbb{I}_{\mathcal{I}})$ are, in many cases, high. In this section, we study a more restrictive version of the same problem. Let us consider an $N$-body quantum system of dimensions $d$ each. Given the set $\{ \sigma_{\mathcal{J}_i} \}_{i=1}^m$ of quantum marginals describing $m$ reduced part of the system, each of them formed by $k_i$ bodies, with $k_i = |\mathcal{J}_i|$, and given $\Vec{\lambda} = (\lambda_0, \ldots, \lambda_{d^N - 1})$, with $\lambda_i \geq 0$ and $\sum_{i=0}^{d^N} \lambda_i = 1$, we want to study the following question:
\begin{quote}
    \textit{Given a set $\{ \sigma_{\mathcal{J}_i} \}_{i=1}^m$ of quantum marginals and spectra $\Vec{\lambda}$, is there any density matrix $\rho$ compatible with this information?}
\end{quote}
Here we are considering both, overlapping and non-overlapping sets $\mathcal{J}_i$. Instead of the spectra, we may also consider rank constraint. Thus, the problem consists in finding a density matrix of rank equal or smaller than $r$, if it exists, compatible with the given marginals. 

In Prop. \ref{prop:composition} was shown that applying the composition $Q_{\mathcal{J}_1,\dots,\mathcal{J}_m}(\rho)$ produces a trace one hermitian matrix $\rho'$, compatible with the $m$ marginals. Also, we saw that $\rho'$ is not necessarily PSD, which can be checked by computing its \textit{spectral decomposition} and looking for negative eigenvalues. The spectral decomposition of $\rho'$ consists of the following factorization
\begin{equation}\label{spectral_decomposition}
    \rho' = U D U^{\dagger},
\end{equation}
with $U$ a unitary matrix and $D = diag(\mu_0, \ldots, \mu_{d^N - 1})$ a diagonal matrix whose entries are the \textit{eigenvalues} of $\rho'$. We can \textit{impose} the prescribed eigenvalues $\Vec{\lambda}$ into $\rho'$ simply by substituting $\Lambda  = diag(\lambda_0, \ldots, \lambda_{d^N - 1})$ in place of $D$. This is
\begin{equation}
    \rho' \longrightarrow \rho'' = U \Lambda U^{\dagger}.
\end{equation}
For the rank constraint case, we only need to keep the $r$ largest positive eigenvalues, set the rest of them to zero and then normalize the resulting matrix. Either way, \textit{part} of the information that was previously imposed about the marginals is lost; this is, we will probably find that $\sigma_{\mathcal{J}_i} \neq \mathrm{Tr}_{\mathcal{J}_i^c}[ \rho'' ]$. Then, we can apply $Q_{\mathcal{J}_1,\dots,\mathcal{J}_m}(\rho'')$ to impose the marginals $\{ \sigma_{\mathcal{J}_i} \}_{i=1}^m$ in $\rho''$, but it will cause the loss of part of the previously imposed eigenvalues. Intuitively, we realize that the loss of the marginals and spectra will reduce each time the whole process described above is repeated, until it finally converges to a matrix, if it exists, with the prescribed marginals and spectra. We summarize this procedure in algorithm \ref{alg:qmp}. \\

\vspace{4mm}
\begin{algorithm}[H]\caption{QMP algorithm.}\label{alg:qmp}
\begin{algorithmic}
\Require $\{ \sigma_{\mathcal{J}_i} \}_{i=1}^m$ and  $\vec{\lambda}$ (or prescribed rank)
\Ensure $\rho'' \in B(\mathcal{H})$ with spectra $\vec{\lambda}$, satisfying  $\sigma_{\mathcal{J}_i} = \mathrm{Tr}_{\mathcal{J}_i^c}[ \rho'' ]$
\State{$\rho_{0}$ (Random density matrix of dimension $d^N$) \\ $n = 0$}\\
\noindent \textbf{repeat}\\
\hspace*{0.5cm}$\rho' \leftarrow Q_{\mathcal{J}_1,\dots,\mathcal{J}_m}(\rho_0) = U D U^{\dagger}$\\
\hspace*{0.5cm}$\rho''  \leftarrow  U \Lambda U^{\dagger}$ \\
\hspace*{0.5cm}$\rho_{0} \leftarrow \rho''$\\
\noindent \hspace*{0.5cm}$n \leftarrow n + 1$\\
\noindent \textbf{until} $ \mathcal{D}_{T} \leq\epsilon$ or $n = \mathrm{Max\;number\;of\;iterarions}$ 
\end{algorithmic}
\end{algorithm}
\vspace{4mm}

\newpage
To study the convergence of algorithm \ref{alg:qmp} we consider the following metrics:
\begin{itemize}
\item[i)] \textit{Marginals Distance $\mathcal{D}_M$:}
\begin{equation}\label{eq7}
\mathcal{D_{M}} = \left( \dfrac{1}{m} \sum_{i=1}^{m}\mathfrak{D}(\sigma_{ \mathcal{J}_i }, \sigma''_{ \mathcal{J}_i })^2 \right)^{1/2},
\end{equation}
\noindent where $\mathfrak{D}(\sigma_{ \mathcal{J}_i }, \sigma''_{ \mathcal{J}_i }) =  \sqrt{\mathrm{Tr}\bigl[(\sigma_{ \mathcal{J}_i } - \sigma''_{ \mathcal{J}_i })(\sigma_{ \mathcal{J}_i } - \sigma''_{ \mathcal{J}_i })^{\dagger} \bigr]}$ is the Hilbert-Schmidt distance between the given marginal $\sigma_{ \mathcal{J}_i }$ and $ \sigma''_{ \mathcal{J}_i } = \mathrm{Tr}_{\mathcal{J}_j^{c}}\bigl( \rho''  \bigr)$. 
\item[ii)] \textit{Spectral Distance $\mathcal{D}_{\lambda}$:}
\begin{equation}
\mathcal{D}_{\lambda} = \| \vec{\mu} - \vec{\lambda} \|,
\end{equation}
\noindent where $\mathcal{D}_{\lambda}$ is the euclidean distance between $\vec{\mu} = (\mu_0, \ldots, \mu_{d^N - 1})$ and $\vec{\lambda} = (\lambda_0, \ldots, \lambda_{d^N-1})$.
\item[iii)]\textit{Overall Distance $\mathcal{D}_{T}$:}
\begin{equation}
   \mathcal{D}_{T} =  \sqrt{ \mathcal{D}_M^2 + \mathcal{D}_{\lambda}^2}.
\end{equation}
\end{itemize}

$\mathcal{D}_M$ and $\mathcal{D}_{\lambda}$ tell us how close the algorithm is getting to a state with the prescribed marginals and spectra, respectively. Algorithm \ref{alg:qmp} will run until the overall distance $\mathcal{D}_{T}$ is smaller than the given tolerance $\epsilon$, or until an specified number of iterations is reached. 

In Fig. \ref{fig:N4d3} we show the convergence for a pure $4$-party density matrix with local dimensions $d=3$. The vertical axis corresponds to the metrics, given in log scale, whereas the horizontal axis shows the number $n$ of iterations. In addition to $\mathcal{D}_M$ and $\mathcal{D}_{\lambda}$, the Hilbert-Schmidt distance $\mathcal{D}_{G}$ between $\rho''$ and $\rho_{gen}$ is shown. We see that $\mathcal{D}_{G}$ tends to zero, which means that $\rho''$ approaches to $\rho_{gen}$ in each iteration. Since, in this case, the marginals are compatible only with the generator state,  $\rho''$ will always approach to $\rho_{gen}$ for any initial seed $\rho_0$. Fig. \ref{fig:N4d3} shows the solution when $r = 1$ and Fig. \ref{fig:rN4d3} for $r=2$. Fig. \ref{fig:rN4d3} reflects the fact that a rank-2 density matrix with the same marginals than a rank-1 $\rho_{gen}$ does not exist. The nonexistence of this state produces instability in the algorithm, as the rank-2 restriction puts $\rho''$ away from a density matrix with the desired marginals. Despite the fact that a rank-1 solution exists, i.e. the generator state, it cannot be reached when considering a seed taken at random among the entire set of density matrices. Such a kind of instability is only observed when there is no quantum state of rank $r$ with the imposed marginals. 

\begin{figure}[ht!]
    \centering
     \subfloat[\label{fig:N4d3}]{%
       \includegraphics[scale=0.465]{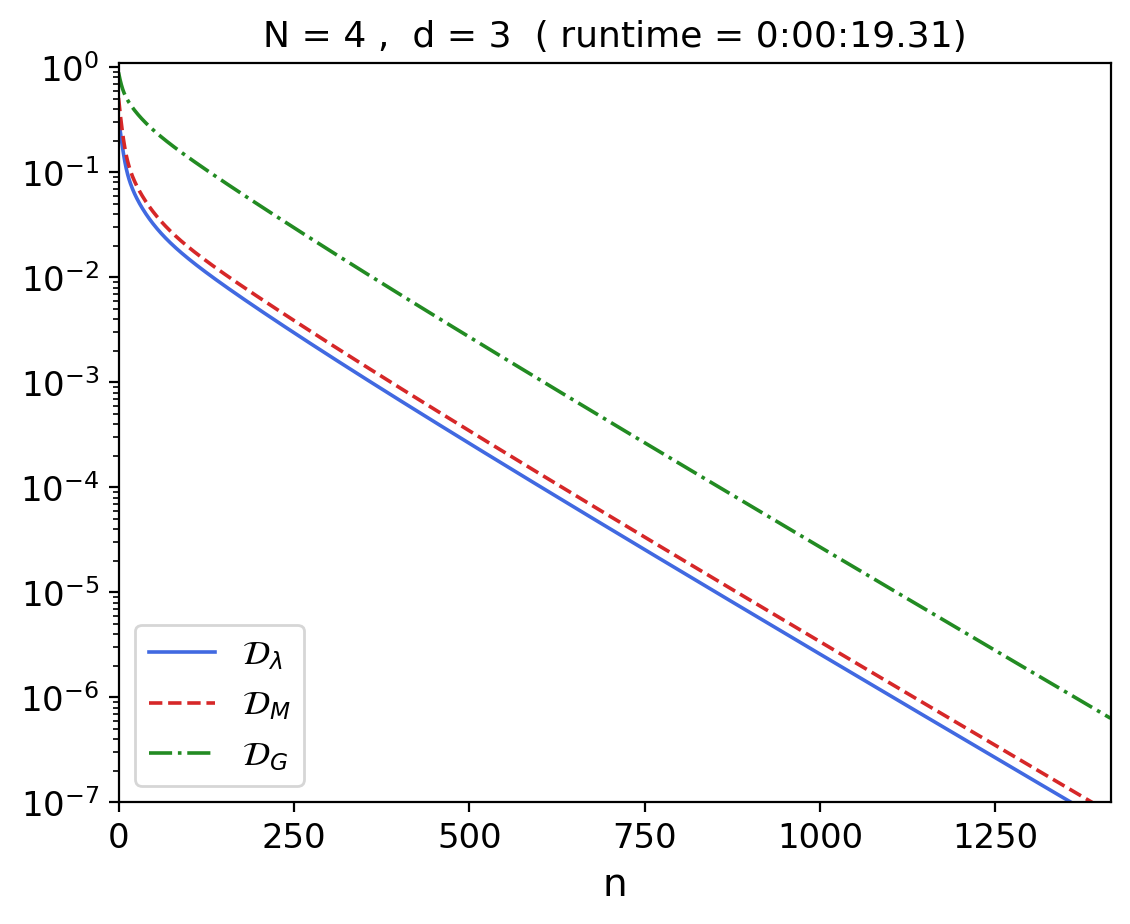}
     }
     \subfloat[\label{fig:rN4d3}]{%
       \includegraphics[scale=0.465]{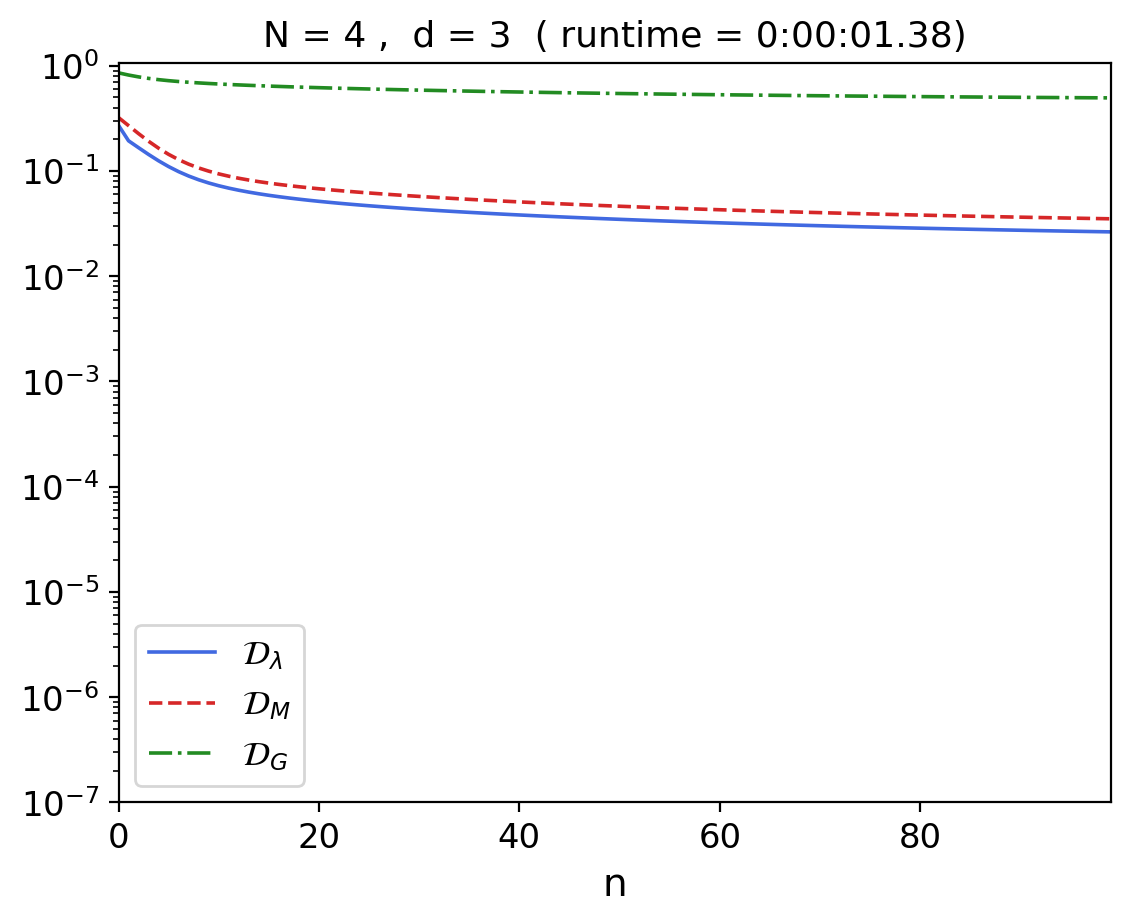}
     }
     \caption[QMP: Convergence of algorithm \ref{alg:qmp} with rank constraint]{Convergence of a $4$-party case, each party of dimension $d = 3$. The prescribed marginals consist of all possible reductions to 2 parties calculated from a pure generator state $\rho_{gen}$. Here, $\mathcal{D}_{\lambda}$, $\mathcal{D}_{M}$ and $\mathcal{D}_{G}$ vs $n$ (the number of iterations) were plotted. We show the convergence when a) a rank one is prescribed and b) the behavior of algorithm \ref{alg:qmp} when rank two is prescribed.  $\mathcal{D}_{G}$ is the Hilbert-Schmidt distance between $\rho_{gen}$ and $\rho''$. The runtime is given in format \textit{hh:mm:ss}.}
\end{figure}

The input marginals can have distinct number of parties. In Fig. \ref{fig:mixed_3p2p} we show the convergence of a $5$-party density matrix $\rho_{ABCDE}$ of rank 4, with inputs marginals $\sigma_{ABD}$ and $\sigma_{BC}$, which were computed from a full rank generator state of $5$ parties and local dimensions $d = 3$. The distance $\mathcal{D}_{G}$ experiences very small oscillations (unnoticeable in the plot) around some value, but the closer $\mathcal{D}_{T}$ gets to zero the smaller the amplitude of these oscillations.

In Fig. \ref{fig:convergence_full_rank} the convergence for a case with prescribed spectra is shown. The eigenvalues and the marginals $\sigma_{AC}, \sigma_{AD}, \sigma_{BC}$ and $\sigma_{CD}$ were computed from a full rank mixed $\rho_{gen}$. Algorithm \ref{alg:qmp} converges to a state compatible with the inputs, but the state found is different to $\rho_{gen}$.
\begin{figure}[ht!]
    \centering
     \subfloat[\label{fig:mixed_3p2p}]{%
      \includegraphics[scale = 0.465]{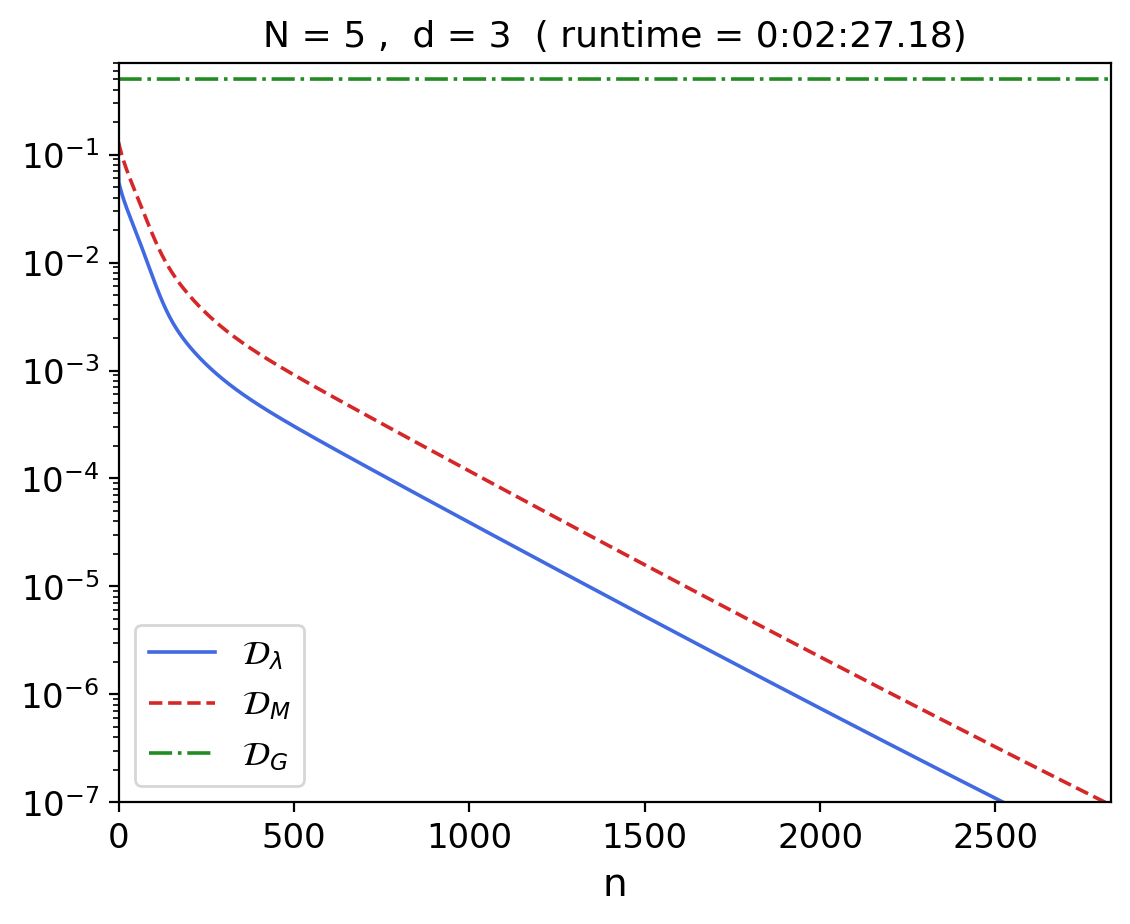}
     }
     \subfloat[\label{fig:convergence_full_rank}]{%
      \includegraphics[scale = 0.465]{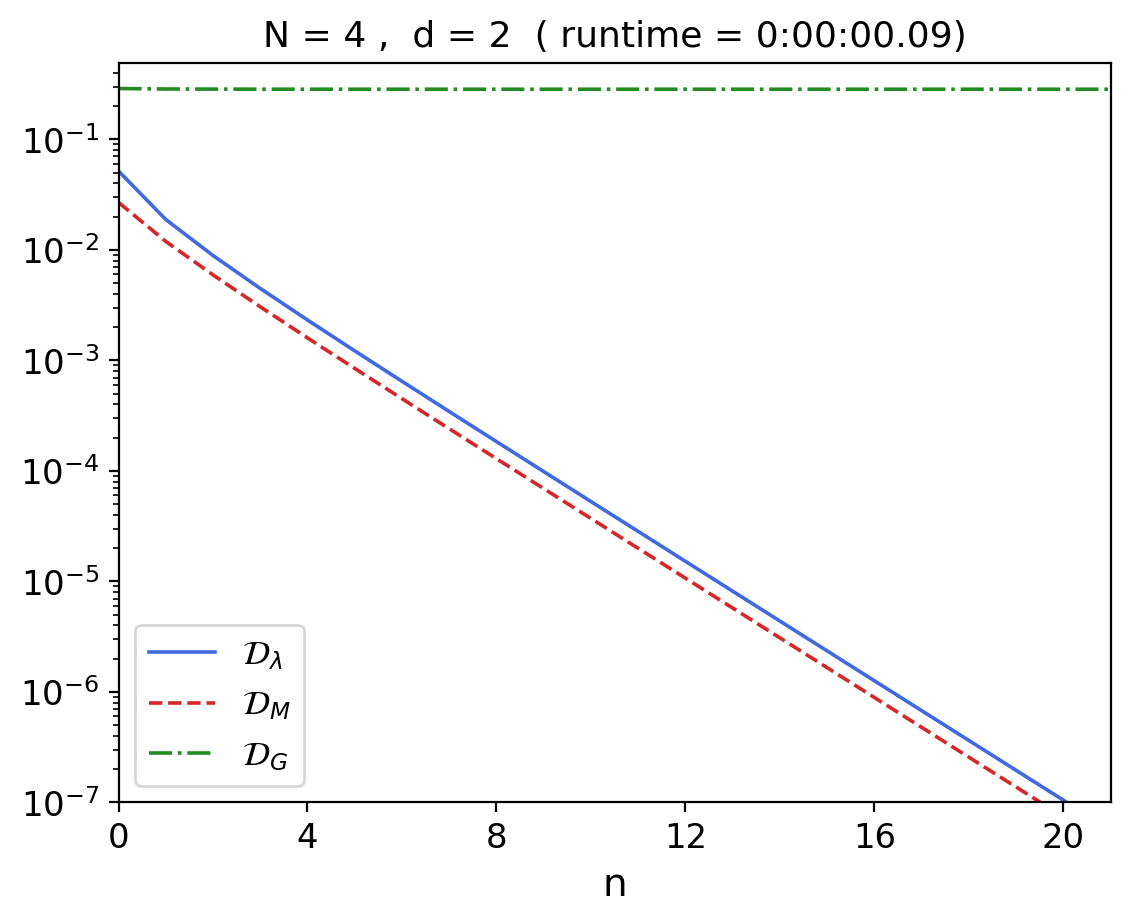}
     }
     \caption[QMP: Convergence of algorithm 2 with prescribed spectra]{a) Convergence of a 5-body density matrix $\rho_{ABCDE}$ of rank four, with input marginals $\sigma_{ABD}$ and $\sigma_{BC}$ calculated from a full rank generator state, and local dimension $ d = 3$. b) Convergence of algorithm \ref{alg:qmp} for a prescribed spectra case. The prescribed eigenvalues are the same as those of the generator sate.}
\end{figure}

In general, there are some seeds $\rho_0$ for which algorithm \ref{alg:qmp} converges in fewer iterations than for other seeds. However, to determine the specific conditions on $\rho_0$ for faster convergence is not easy. 



\subsubsection{Absolutely Maximally Entangled States}

A pure quantum state of $N$ parties with local dimension $d$ is called \textit{Absolutely Maximally Entangled (AME)}, denoted $AME(N,d)$, if all its possible $\lfloor{  N/2 \rfloor}$-body marginals are maximally mixed , with $\lfloor{  \ldots \rfloor}$ the \textit{floor} function. AME states have been used for quantum error-correcting codes \cite{Pastawski_2015}, quantum secret sharing \cite{Helwig_2012} and teleportation protocols \cite{helwig2013absolutely}. The question of whether AME states exist for given $N$ and $d$ is still an open problem. A summary of the existence of AME states is given in \cite{AMEs}.

Here, we test algorithm \ref{alg:qmp} for some AME states. In Fig. \ref{fig:ame42} we show the case with $N=4$ and $d=2$. As expected, it cannot find an AME(4,2) state; Higuichi et al. showed in \cite{HIGUCHI2000213} that such a state does not exist. The convergence of the AME(4,3) is shown in \ref{fig:ame43}. The \textit{plateau} seen in this last case is very common for AME states. For more parties and larger local dimensions, this plateau is much more prolonged, resulting in very large runtimes. For example, for cases such as the AME(4,5) and the AME(4,6), we allowed the algorithm to run for about two weeks, apparently going under the plateau stage, but it was unable to converge. In Fig. \ref{fig:accl_ame44} (solid line) the convergence for the AME(4,4) is shown; this case also exhibits a plateau. It took to the algorithm about 6 and half minutes to find the AME(4,4), whereas for other seeds it was unable to converge to this quantum state after 12 hours of running. 



\begin{figure}[h!]
    \centering
     \subfloat[\label{fig:ame42}]{%
       \includegraphics[scale = 0.465]{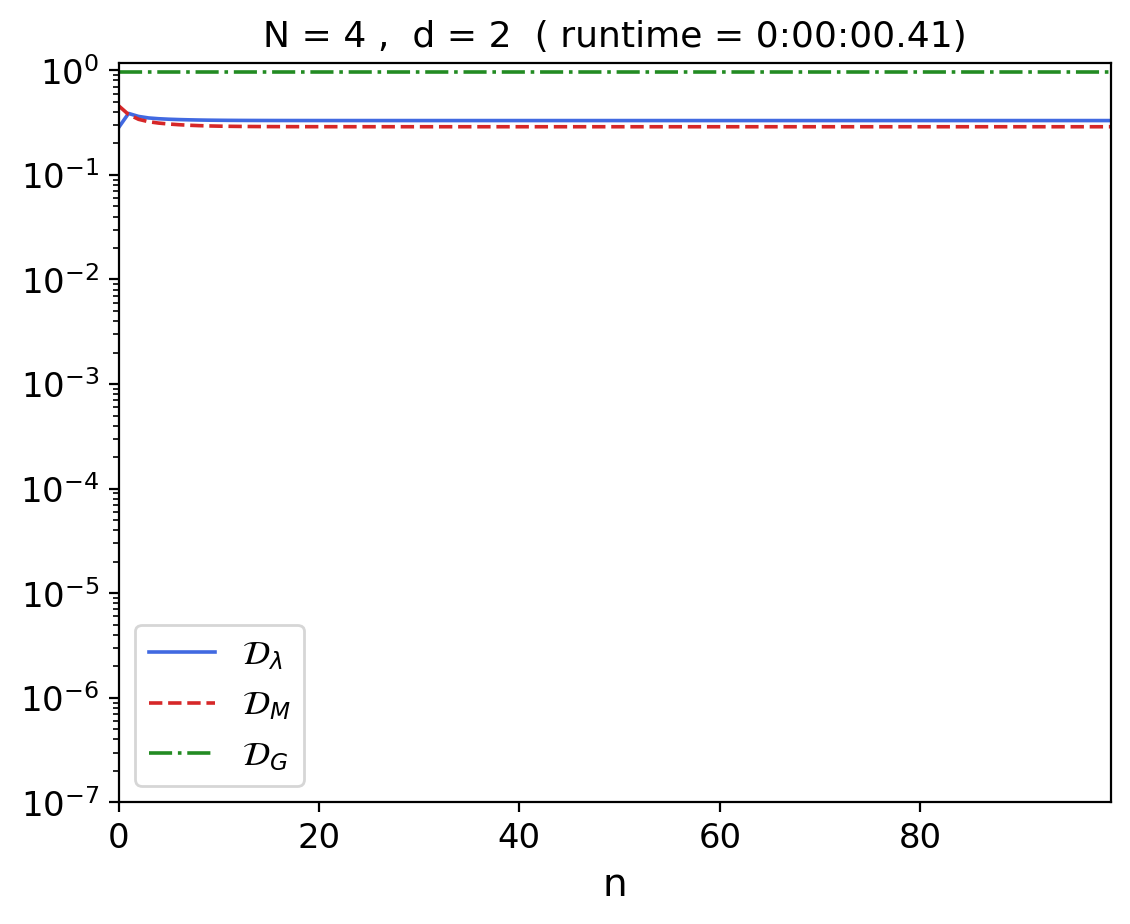}
     }
     \subfloat[\label{fig:ame43}]{%
       \includegraphics[scale = 0.465]{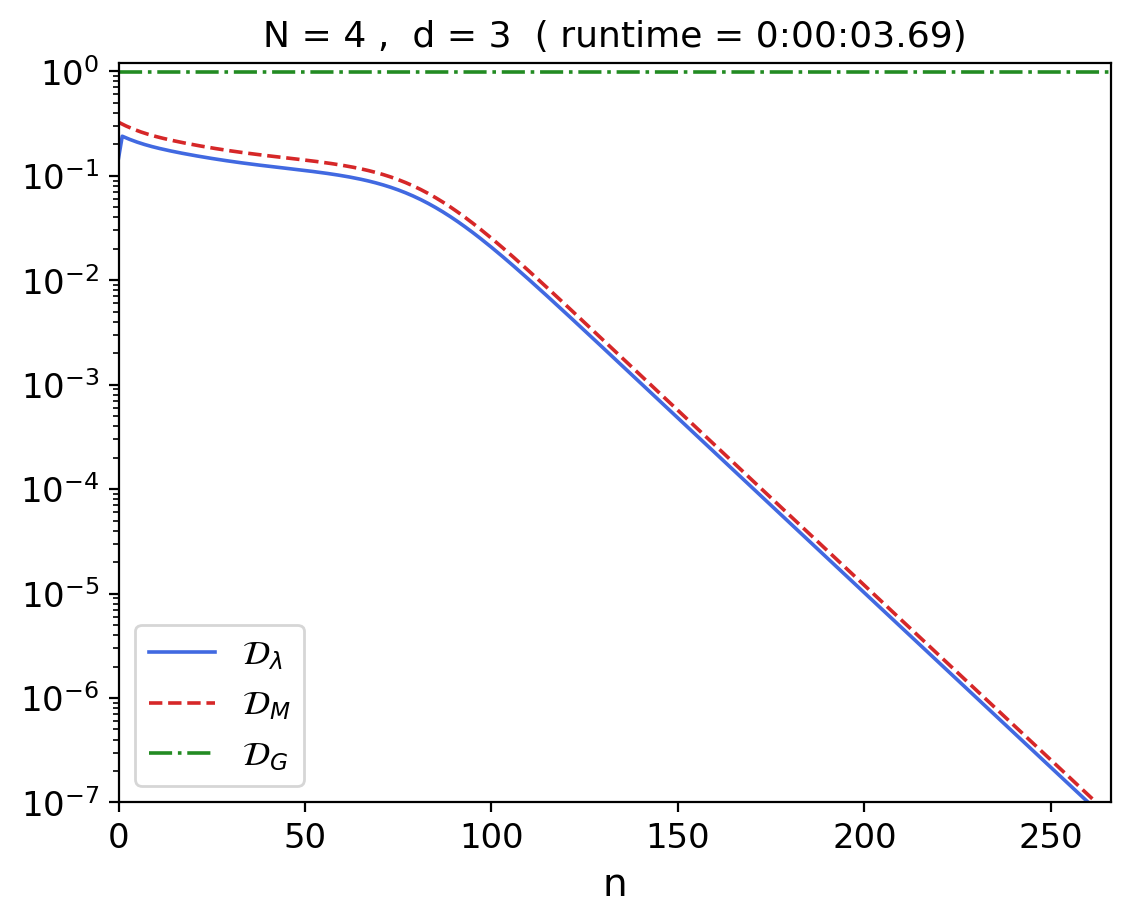}
     }
     \caption[QMP: Convergence of algorithm 2 for AME(4,3)]{ a) 100 iterations of algorithm \ref{alg:qmp} considering all the possible 2-body marginals, all of them equal to $\mathbb{I}/4$. In this case, the algorithm is trying to find a $4$-parties pure quantum state, with local dimension $d=2$, compatible with the given marginals; such a state does not exist. b) Convergence for the AME(4,3) state.}
\end{figure}

\subsubsection{Acceleration of the algorithm}
Algorithm \ref{alg:qmp} is a heuristic approach, which can be seen as a \textit{fixed point algorithm.} A well-known algorithm for fixed point problems is the one developed by Halpern \cite{halpern_1967}. Let $T$ be a nonexpansive mapping on a Hilbert space $\mathcal{H}$. For $x_0 \in \mathcal{H}$, the Halpern algorithm consists of the sequence
\begin{equation}\label{halpern}
    x_{n + 1} := \alpha_n x_0 + (1 - \alpha_n)T(x_n), \quad n \in \mathbb{N},
\end{equation}
\noindent where $\alpha_n \in \mathbb{R}$ satisfies the conditions: 
\begin{equation}\label{conditions_halpern}
 lim_{n \rightarrow \infty} \alpha_n = 0, \;\; \sum_{n=0}^{\infty} \alpha_n = \infty \;\; \text{and  }\sum_{n = 0}^{\infty}|\alpha_{n+1} - \alpha_n | < \infty.   
\end{equation}
\noindent Halpern showed that \eqref{halpern} strongly converges to a fixed point. In Ref. \cite{Sakurai_2014} an strategy to accelerate the Halpern algorithm was introduced, which consists of the modified sequence:
\begin{align}\label{accelerated_halpern}
    z_{n+1} &:= \dfrac{1}{\alpha}(T(x_{n}) - x_n) + \beta_n z_n, \nonumber \\
    y_n & := x_n + \alpha z_{n+1}, \\
    x_n &:= \mu \alpha_n x_0 + (1 - \mu \alpha_n)y_n. \nonumber
\end{align}
\noindent with $z_0 =\alpha^{-1}(T(x_{0}) - x_0)$, $\mu \in (0,1]$ and $\alpha > 0$. Besides conditions \eqref{conditions_halpern}, $\beta_n \leq \alpha_n^2$ also has to be satisfied. When $\beta_n = 0$ and $\mu = 1$, \eqref{accelerated_halpern} becomes \eqref{halpern}.  

We took some of the ideas from Ref. \cite{Sakurai_2014} and implemented them to accelerate algorithm \ref{alg:qmp}; we show this changes in the algorithm \ref{alg:qmp_accelerated}. Any strategies to update $\alpha_n$ and $\beta_n$ are possible, as long as they satisfy the conditions above. For our numerical simulations we chose $\alpha_n = (n/10^5 + 1)^{-\alpha}$, which is a slightly modified version of $\alpha_n$ from a numerical example in Ref. \cite{Sakurai_2014}, and $\beta_n = \alpha_n^2$. With $\beta_n = 0$, $\mu = 1$ and $\alpha_n = 1$, algorithm \ref{alg:qmp_accelerated} reduces to algorithm \ref{alg:qmp}.

In Fig. \ref{fig:accerated_qmo} we compare algorithms \ref{alg:qmp} and \ref{alg:qmp_accelerated} when starting from the same $\rho_0$. The lines correspond to the overall distance $\mathcal{D}_{T}$. The dashed and dotted lines show the convergence of algorithm \ref{alg:qmp_accelerated} for different choices of $\alpha$ and $\mu$. Fig. \ref{fig:accl_ame44} corresponds to the AME(4,4); at first, the lines are converging at about the same rate, but suddenly algorithm \ref{alg:qmp_accelerated} starts decaying faster. Fig. \ref{fig:accl_pure_44} corresponds to a pure state, randomly drawn from the Haar measure. 

\vspace{4mm}
\begin{algorithm}[H]\caption{QMP algorithm.}\label{alg:qmp_accelerated}
\begin{algorithmic}
\Require $\{ \sigma_{\mathcal{J}_i} \}_{i=1}^m$ , $\vec{\lambda}$ (or prescribed rank), $\alpha$ and $\mu$.
\Ensure $\rho'' \in B(\mathcal{H})$ with spectra $\vec{\lambda}$, satisfying  $\sigma_{\mathcal{J}_i} = \mathrm{Tr}_{\mathcal{J}_i^c}[ \rho'' ]$.
\State{$\rho_{0}$ (Random density matrix of dimension $d^N$)}\\

\noindent $n = 0$\\

\noindent \textbf{Repeat}\\

\noindent \hspace*{0.5cm}$\alpha_n \leftarrow (n/10^{5} + 1)^{-\alpha}$\\

\noindent \hspace*{0.5cm}$\beta_n \leftarrow \alpha_n^2$\\

\noindent \hspace*{0.5cm} \textbf{For} $i = 1, \ldots, m$:\\
\noindent \hspace*{0.99cm} $z_i \leftarrow \alpha^{-1} \left(Q_{\mathcal{J}_{i} }(\rho_0) - \rho_0 \right)   \nonumber$ \\
\noindent \hspace*{0.99cm} $z_{i + 1} \leftarrow z_i +  \beta_n z_{i}  \nonumber$ \\

\noindent \hspace*{0.99cm} $y_{i} \leftarrow \rho_0 + \alpha z_{i+1} $ \\

\noindent \hspace*{0.99cm}  $\rho_0 \leftarrow \mu \alpha_n \rho_0 + (1 - \mu \alpha_n) y_{i} $\\

\noindent \hspace*{0.5cm} \textbf{end for}\\

\noindent \hspace*{0.5cm} $\rho' \leftarrow \rho_0 = U D U^{\dagger}$\\

\noindent \hspace*{0.5cm} $\rho''  \leftarrow  U \Lambda U^{\dagger}$ \\

\noindent \hspace*{0.5cm}$\rho_{0} \leftarrow \rho''$\\

\noindent \hspace*{0.5cm}$n \leftarrow n + 1$\\

\noindent \textbf{until} $ \mathcal{D}_{T} \leq\epsilon$ or $n = \mathrm{Max\;number\;of\;iterarions}$ 
\end{algorithmic}
\end{algorithm}
\vspace{4mm}

\begin{figure}[h!]
    \centering
     \subfloat[\label{fig:accl_ame44}]{%
      \includegraphics[scale = 0.466]{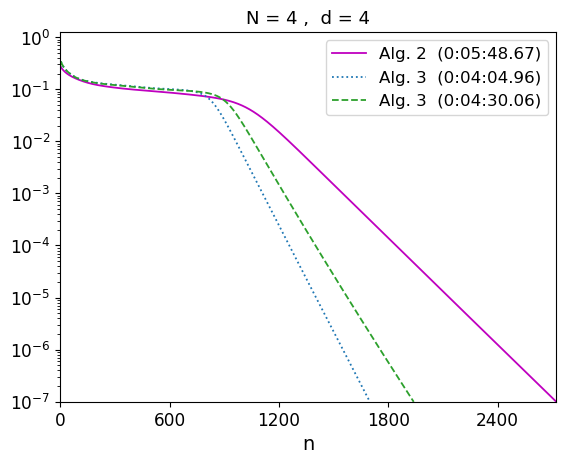}
     }
     \subfloat[\label{fig:accl_pure_44}]{%
      \includegraphics[scale = 0.466]{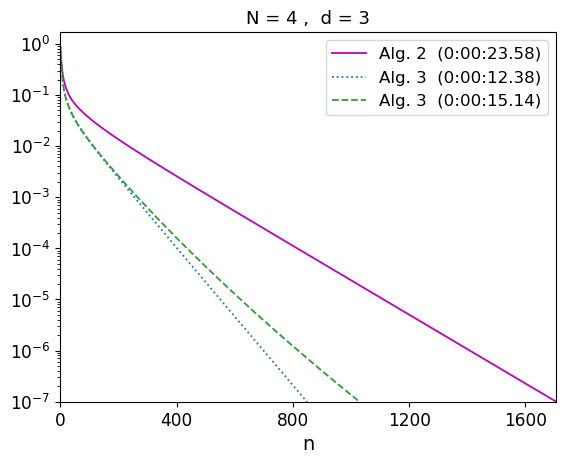}
     }
     \caption[QMP: Convergence of algorithm 3 for the AME(4,4)]{ Comparison between algorithm \ref{alg:qmp} (dashed line) and \ref{alg:qmp_accelerated} (dashed and dotted lines).  For the dotted line $\alpha = 1$ and $\mu = 10^{-10}$ and for the dashed line $\alpha = 50$ and $\mu = 5\times10^{-5}$. a) Convergence for the AME(4,4) and b) convergence for a pure random quantum state, with $N=4$ and $d=3$. }
     \label{fig:accerated_qmo}
\end{figure}

The parameters $\alpha$, $\beta_n$, $\mu$ and $\alpha_n$ can be sensitive to the local dimension, the number of bodies in the global system and the number of bodies in the marginals. Thus, one might have to tune these parameters for different values of $N$, $d$ and $k$. Also, the performance for a set of parameters might depend on the initial seed $\rho_0$. Nonetheless, for given $N$, $d$ and $k$, we have noticed improvement in the performance for a wide range of quantum states with the same choice of parameters. 
The algorithms were implemented in Python, see Appendix  \hyperref[app:C]{C}, and are available on Github \cite{qmp_codes}. All the calculations in this chapter were made on an AMD Ryzen 5 4500U laptop with 16GB of RAM.


\vspace{-3mm}
\section{Conclusions}

In this chapter, we addressed the problem of compatibility in the Quantum Marginal Problem. We introduced an operator to impose a set of quantum marginals in a global matrix. We saw that, in many cases, for quantum marginals generated from a full rank mixed state, drawn from a uniform distribution, the composition of these operators outputs a full rank density matrix compatible with the marginals. We applied the operator to develop a heuristic algorithm for finding a global density matrix, if exists, compatible with a prescribed spectra and marginals. For quantum states randomly taken from uniform distributions, numerical simulations shows that the algorithm is able to find a solution for most of cases. It was also able to find many cases of AME states whose existence is known. However, for large number of parties and local dimension, the algorithm failed to find AME states in a reasonable amount of time, even though we showed that it is possible to accelerate it. For open AME cases (see Ref. \cite{AMEs} ), such as the AME(8,7) and AME(11,3), it was not possible to run the algorithm due to the required memory resources. 

\newpage
\noindent \text{}
\thispagestyle{plain}

    

    
    
    
    
    
    
    
    

    \makeatletter
        \def\toclevel@chapter{-1}
        \def\toclevel@section{0}
    \makeatother
    

\appendix 

\addcontentsline{toc}{chapter}{Appendix}

    \chapter*{Appendix}\label{appC}
    \pagestyle{plain}

\vspace{-1cm}
\section*{A: Codes for the quantum state estimation algorithm}
\label{app:A}
Here we show the codes for the quantum state estimation algorithm presented in \ref{cap2}. The algorithm was implemented in Python and the code is available in Github \cite{pio_gitub}. \\
\lstinputlisting[language=Python, basicstyle=\scriptsize]{introduction/codesQSE.py}

\section*{B: Codes for certification of quantum nonlocality}
\label{app:B}
The method described in chapter \ref{cap3} for quantum nonlocality certification was implemented in Python. Here, we present the code of that implementation. The code is available in Github \cite{code_nlc}. \\
\lstinputlisting[language=Python, basicstyle=\scriptsize]{introduction/codes_nlc.py}

\section*{C: Codes for the quantum marginal problem algorithm}
\label{app:C}
The following codes correspond to the algorithms presented in chapter \ref{cap4}. The code was written in Python and is available in Github \cite{qmp_codes}.\\
\lstinputlisting[language=Python, basicstyle=\scriptsize]{introduction/codes_qmp.py}

    

    
    



    
    \renewcommand*{\bibfont}{\small}
    \addcontentsline{toc}{chapter}{Bibliography}
    \printbibliography
 


\end{document}